\def\submitteddate{January 14, 2016}
\def\reviseddate{}
\renewcommand{\baselinestretch}{1.0}
\DeclareFontFamily{U}{matha}{\hyphenchar\font45}
\DeclareFontShape{U}{matha}{m}{n}{
      <5> <6> <7> <8> <9> <10> gen * matha
      <10.95> matha10 <12> <14.4> <17.28> <20.74> <24.88> matha12
      }{}
\DeclareSymbolFont{matha}{U}{matha}{m}{n}
\DeclareMathSymbol{\notdivides}{3}{matha}{"1F}
\DeclareMathSymbol{\divides}{3}{matha}{"17}
\begin{document}

\newcommand{\creationtime}{\today \ \ @ \theampmtime}

\pagestyle{fancy}
\renewcommand{\headrulewidth}{0cm}
\chead{\footnotesize{Connelly-Zeger}}
\rhead{\footnotesize{\reviseddate}}
\rhead{\footnotesize{\submitteddate}}
\lhead{}
\cfoot{Page \arabic{page} of \pageref{LastPage}} 

\renewcommand{\qedsymbol}{$\blacksquare$} 

\newtheorem{theorem}              {Theorem}     [section]
\newtheorem{lemma}      [theorem] {Lemma}
\newtheorem{corollary}  [theorem] {Corollary}
\newtheorem{proposition}[theorem] {Proposition}
\newtheorem{remark}     [theorem] {Remark}
\newtheorem{conjecture} [theorem] {Conjecture}
\newtheorem{example}    [theorem] {Example}

\theoremstyle{definition}         
\newtheorem{definition} [theorem] {Definition}
\newtheorem*{claim} {Claim}
\newtheorem*{notation}  {Notation}

\newcommand{\alphabet}{\mathcal{A}}
\newcommand{\netI}{\mathcal{N}}
\newcommand{\netII}{\mathcal{N}'}

\newcolumntype{L}[1]{>{\raggedright\let\newline\\\arraybackslash\hspace{0pt}}m{#1}}
\newcolumntype{C}[1]{>{\centering\let\newline\\\arraybackslash\hspace{0pt}}m{#1}}
\newcolumntype{R}[1]{>{\raggedleft\let\newline\\\arraybackslash\hspace{0pt}}m{#1}}

\newcommand{\Z}{\mathbf{Z}}
\newcommand{\Q}{\mathbf{Q}}
\newcommand{\N}{\mathbf{N}}
\newcommand{\R}{\mathbf{R}}
\newcommand{\A}{\mathcal{A}}
\newcommand{\I}{\mathcal{I}}
\newcommand{\V}[2]{\mathrm{V}(#1,#2)}
\newcommand{\F}{\mathbb{F}}
\newcommand{\Capacity}{\mathcal{C}}

\newcommand{\B}{B}

\newcommand{\yields}{\longrightarrow}

\newcommand{\dub}[2]{\left\langle #1, #2 \right\rangle}
\newcommand{\GF}[1]{\mathrm{GF}\!\left(#1\right)}
\newcommand{\theranksymbol}{\mathsf{rank}}
\newcommand{\Rank}[1]{\mathsf{rank}\left(#1\right)}
\newcommand{\Char}[1]{\mathsf{char}\!\left(#1\right)}
\newcommand{\Comment}[1]{ \left[\mbox{from  #1} \right]}
\newcommand{\Div}[2]{\ensuremath{ #1  \! \bigm| \!  #2  } }
\newcommand{\NDiv}[2]{\ensuremath{ #1   \notdivides   #2  } }
\newcommand{\GCD}[2]{\ensuremath{ \mathsf{gcd} \!\left( #1  ,  #2 \right)  } }
\newcommand{\GGGCD}[3]{\ensuremath{ \mathsf{gcd} \!\left( #1  ,  #2 , #3 \right)  } }
\newcommand{\Order}[1]{\ensuremath{ \mathcal{O}\left( #1 \right)}}
\newcommand{\PrimeFact}[1]{\ensuremath{ p_1^{\gamma_1} \cdots p_{\omega(#1)}^{\gamma_{\omega(#1)}}}}

\newcommand{ \cc} [2]{c_{#1,#2}}
\newcommand{\ccc} [3]{c^{(#1)}_{#2,#3}}
\newcommand{\cccinv} [3]{\left(c^{(#1)}_{#2,#3}\right)^{-1}}
\renewcommand{\d} [1]{d_{#1}}
\newcommand{ \dd} [2]{d_{#1,#2}}
\newcommand{\ddd} [3]{d^{(#1)}_{#2,#3}}
\newcommand{\dddinv} [3]{\left(d^{(#1)}_{#2,#3}\right)^{-1}}
\newcommand{ \ee} [2]{e^{(#1)}_{#2}}
\newcommand{\xx} [2]{x^{(#1)}_{#2}}
\newcommand{\xxh} [2]{\hat{x}^{(#1)}_{#2}}
\newcommand{\RR} [2]{R^{(#1)}_{#2}}
\newcommand{\Ss} [2]{S^{(#1)}_{#2}}
\newcommand{  \pipi} [2]{\pi^{(#1)}_{#2}}
\newcommand{  \pipiinv} [2]{\pi^{(#1)^{-1}}_{#2}}
\newcommand{\sigsig} [2]{\sigma^{(#1)}_{#2}}
\newcommand{ \MM} [2]{M_{#1,#2}}
\newcommand{\MMM} [3]{M^{(#1)}_{#2,#3}}
\newcommand{ \DD} [2]{D_{#1,#2}}
\newcommand{\DDD} [3]{D^{(#1)}_{#2,#3}}
\newcommand{\QQ}  [2]{Q_{#1,#2}}
\newcommand{\QQQ} [3]{Q^{(#1)}_{#2,#3}}
\newcommand{\n}{w}
\newcommand{\m}{m}
\newcommand{\Radds} [1] {\left( #1 \right)_R}

\def\?#1{}
\newcommand{\niceRModule}{\text{standard $R$-module}}
\newcommand{\niceRModules}{\text{standard $R$-modules}}
\newcommand{\niceMkFModule}{\text{standard $M_k(\F)$-module}}

\newcommand{\mybullet}{\ensuremath{\cdot}  }

\newcommand{\Stack} [2] {\! \! \! \begin{array}{l} #1 \\ #2 \end{array}}

\newcommand{\Mod}[1] {\left(\text{mod } #1 \right)}
\newcommand{\act}{\cdot}
\newcommand{\dsum}{\displaystyle\sum}
\newcommand{\vcomp} [2]{{\left[ #1 \right]}_{#2}}

\newcommand{\RepAdd}[3]{\ensuremath{  \underbrace{#1 #2 \cdots #2 #1}_{#3 \text{ adds}} }}
\makeatletter
\newcommand{\pushright}[1]{\ifmeasuring@#1\else\omit\hfill$\displaystyle#1$\fi\ignorespaces}
\makeatother

\renewcommand{\emptyset}{\varnothing} 
\renewcommand{\subset}{\subseteq}     
\newcommand{\Network}{\mathcal{N}}
\newcommand{\TBA}{*** To Be Added ***}

\newcommand{\LHS}{\mathrm{LHS}}
\newcommand{\RHS}{\mathrm{RHS}}

\newcommand{\Fig}[2]{
 \medskip
  \epsfysize=#2 
  \epsffile{#1.eps}
 \medskip
}

\let\bbordermatrix\bordermatrix
\patchcmd{\bbordermatrix}{8.75}{4.75}{}{}
\patchcmd{\bbordermatrix}{\left(}{\left[}{}{}
\patchcmd{\bbordermatrix}{\right)}{\right]}{}{}

\setcounter{page}{0}

\title{A Class of Non-Linearly Solvable Networks
\thanks{This work was supported by the 
National Science Foundation.\newline
\indent \textbf{J. Connelly and K. Zeger} are with the 
Department of Electrical and Computer Engineering, 
University of California, San Diego, 
La Jolla, CA 92093-0407 
(j2connelly@ucsd.edu and zeger@ucsd.edu).
}}

\author{Joseph Connelly and Kenneth Zeger\\}

\date{
\textit{
IEEE Transactions on Information Theory\\
Submitted: \submitteddate\\
}}

\maketitle
\begin{abstract}
For each integer $\m \geq 2,$
a network is constructed which 
is solvable over an alphabet of size $\m$ but
is not solvable over any smaller alphabets.
If $\m$ is composite,
then the network has no vector linear solution
over any $R$-module alphabet
and is not asymptotically linear solvable over any finite-field alphabet.
The network's capacity is shown to equal one, and
when $\m$ is composite, its linear capacity
is shown to be bounded away from one for all finite-field alphabets.
\end{abstract}

\thispagestyle{empty}

\clearpage

\centerline{*** Table Of Contents Provided During Manuscript Review Only ***}
\tableofcontents

\clearpage
\section{Introduction} \label{sec:intro}

A \textit{network} will refer to a finite, directed, acyclic multigraph,
some of whose nodes are \textit{sources} or \textit{receivers}.
Source nodes generate $k$-dimensional vectors of \textit{messages}, 
where each of the $k$ messages is an arbitrary element 
of a fixed, finite set of size at least $2$,
called an \textit{alphabet}.
The elements of an alphabet are called \textit{symbols}.
The \textit{inputs} to a node are the messages, if any, originating at the node
and the symbols on the incoming edges of the node.
Each outgoing edge of a network node
carries a vector of $n$ alphabet symbols, called \textit{edge symbols}.
If a node has at most $n$ input symbols,
then we will assume, without loss of generality, that each of its out-edges carries all $n$ of such symbols.
Each outgoing edge of a node has associated with it an \textit{edge function}
which maps the node's inputs
to the output vector carried by the edge.
Each receiver node has \textit{demands},
which are $k$-dimensional message vectors the receiver wishes to obtain.
Each receiver also has \textit{decoding functions}
which map the receiver's inputs
to $k$-dimensional vectors of alphabet symbols in an attempt to satisfy the receiver's demands.

A \textit{$(k,n)$ fractional code over an alphabet $\A$}
(or, more briefly, a \textit{$(k,n)$ code over $\A$})
is an assignment of edge functions to all of the edges in a network and
an assignment of decoding functions to all of the receiver nodes in the network.

A \textit{$(k,n)$ solution over $\A$} is a $(k,n)$ code over $\A$ such that
each receiver's decoding functions can recover all $k$ components 
of each of its demands
from its inputs.

An edge function 
$$f:   \underbrace{\A^k \times \dots \times \A^k}_{i\ \text{message vectors}} 
\times \underbrace{\A^n \times \dots \times \A^n}_{j\ \text{in-edges}} 
    \longrightarrow \A^n$$
is \textit{linear over $\A$} if it can be written in the form
\begin{align}
f(x_1, \dots, x_i, y_1, \dots, y_j) = M_1 x_1 + \dots + M_i x_i + M'_1 y_1 + \dots + M'_j y_j \label{eq:0}
\end{align}
where 
$M_1, \dots, M_i$ are $n\times k$ matrices 
and $M'_1, \dots, M'_j$ are $n\times n$ matrices
whose entries are constant values.
Similarly, a decoding function is linear
if it has a form analogous to \eqref{eq:0}.
A $(k,n)$ code is said to be \textit{linear over $\A$}
if each edge function and each decoding function is linear over $\A$.
We will focus attention on linear codes in a very general setting where the alphabets are $R$-modules
(discussed in in Section~\ref{ssec:def}).
If the network alphabet is an $R$-module,
then, in \eqref{eq:0}, 
$\A$ is an Abelian group,
the elements of the matrices are from the ring $R$,
and multiplication of ring elements by elements of $\A$ is the action of the module.
Special cases of linear codes over $R$-modules include
linear codes over groups, rings, and fields.
A network is defined to be
\begin{itemize}[noitemsep,topsep=0pt]
  \item[--] \textit{solvable over $\A$}
if there exists 
a $(1,1)$ solution over $\A$,
  \item[--] \textit{scalar linear solvable over $\A$}
if there exists a $(1,1)$ linear solution over $\A$,
  \item[--] \textit{vector linear solvable over $\A$}
if there exists a $(k,k)$ linear solution over $\A$, for some $k \geq 1$,
  \item[--] \textit{asymptotically linear solvable over $\A$}
if for any $\epsilon > 0$, 
there exists a $(k,n)$ linear solution over $\A$ 
for some $k$ and $n$ satisfying $k/n > 1 -\epsilon$.
\end{itemize}
We say that a network is 
\textit{solvable}, 
(respectively, \textit{vector linear solvable} or \textit{scalar linear solvable})
if it is 
solvable
(respectively, vector linear solvable or scalar linear solvable) 
over some alphabet.

  The \textit{capacity}\footnote{In the literature,
  this is sometimes referred to as the ``coding capacity''
  (as opposed to the routing capacity).
  For brevity, we will simply use the term ``capacity,''
  as we do not discuss routing capacity in this paper.}
  of a network is:
  $$\text{sup}\{k/n \, : \, \exists \text{ a $(k,n)$ solution over some $\A$}\}.$$
  The \textit{linear capacity} of a network
  with respect to an alphabet $\A$ is:
  $$ \text{sup}\{k/n \, : \, \exists \text{ a $(k,n)$ linear solution over $\A$}\} .$$
It was shown in \cite{Cannons-Routing}
that the capacity of a network is independent 
of alphabet size,
and it was noted that linear capacity
can depend on alphabet size.

\subsection{Previous work} \label{ssec:previous}
One decade ago,
it was demonstrated in \cite{Dougherty-Freiling-Zeger04-Insufficiency}
that there can exist a network which is solvable,
but not vector linear solvable over any finite-field alphabet and any vector dimension.
To date, 
the network given in \cite{Dougherty-Freiling-Zeger04-Insufficiency}
is the only known example of such a network published in the literature.
In fact, the network given in \cite{Dougherty-Freiling-Zeger04-Insufficiency}
was shown to not be vector linear solvable over very general algebraic types of alphabets,
such as
finite rings and
modules, 
and was shown not to even be asymptotically linear solvable over finite-field alphabets,
and, as a result,
the network has been described as ``diabolical'' 
by 
Kschischang \cite{Kschischang-chapter}%
\footnote{The terminology was apparently attributed by F. Kschischang to M. Sudan.}
and
Koetter\cite{Koetter-WiOpt08}.

The diabolical network has been utilized 
in numerous extensions and applications of network coding,
such as
by Krishnan and Rajan \cite{Krishnan-ErrorCorrecting}
for network error correction,
and by Rai and Dey \cite{Rai-Dey-Sum}
for multicasting the sum of messages 
to construct networks with equivalent solvability properties
hence showing that linear codes 
are insufficient for each problem.
El Rouayheb, Sprintson, and Georghiades \cite{Rouayheb-Index}
reduced the index coding problem
to a network coding problem,
thereby using the diabolical network 
to show that linear index codes
are not necessarily sufficient.
Blasiak, Kleinberg, and Lubetzky \cite{Blasiak-Lexicopgraphic} used index codes
to create networks where there is a polynomial separation
between linear and non-linear network coding rates.
Chan and Grant \cite{ChanGrant-EntropyFunctions}
showed a duality between entropy functions
and network coding problems,
which allowed for an alternative proof of 
the insufficiency of linear network codes.


We now summarize some of the existing results 
regarding the solvability and linear solvability
of \textit{multicast networks} (in which each receiver demands all of the messages) 
and \textit{general networks} (in which each receiver demands a subset of the messages).
Network codes were first presented by Ahlswede, Ning, Li, and Yeung \cite{Ahlswede-NIF}
as a method of improving the throughput of a network;
they presented the butterfly network,
a variant of which is scalar linear solvable but
not solvable via routing.
Li, Young, and Cai \cite{Li-Linear} showed that if a multicast network is solvable,
then it is scalar linear solvable over all sufficiently large finite-field alphabets.
In addition, Riis \cite{Riis-LinearVsNonlinear} showed that
every solvable multicast network 
has a binary linear solution in some vector dimension.
Feder, Ron, and Tavory \cite{Feder-Bounds} 
and Rasala Lehman and Lehman \cite{Lehman-Complexity}
both independently showed that some solvable multicast networks
asymptotically require finite-field alphabets to be at least
as large as twice the square root of the number of receiver nodes.

Non-linear coding in multicast networks can offer advantages
such as reducing the alphabet size required for solvability;
Rasala Lehman and Lehman \cite{Lehman-Complexity} presented a network which is solvable 
over a ternary alphabet but has no scalar linear solution
over any alphabet whose size is less than five,
and Riis \cite{Riis-LinearVsNonlinear}
and also \cite{DFZ-LinearitySolvability} 
demonstrated general and multicast networks, respectively,
which have scalar non-linear binary solutions but no scalar linear binary solutions.
A multicast network was presented in \cite{DFZ-LinearitySolvability} which is solvable precisely over those alphabets
whose size is neither $2$ nor $6$,
and Sun, Yin, Li, and Long \cite{Sun-FieldSize}
presented families of multicast networks 
which are scalar linear solvable over certain finite-field alphabets
but not over all larger finite-field alphabets.

Unlike multicast networks,
general networks that are solvable are not necessarily vector linear solvable,
as demonstrated in \cite{Dougherty-Freiling-Zeger04-Insufficiency}.
M\'{e}dard, Effros, Ho, and Karger \cite{Medard-NonMulticast}
showed that there can exist a network which is vector linear solvable
but not scalar linear solvable.
Shenvi and Dey \cite{Shenvi-TwoPair} 
showed that for networks with $2$ source-receiver pairs
the following are equivalent:
the network is solvable,
the network is vector linear solvable,
the network satisfies a simple cut condition.
Cai and Han \cite{Cai-ThreePair} 
showed that for a particular class of networks with $3$ source-receiver pairs:
the solvability can be determined in polynomial time,
being solvable is equivalent to being scalar linear solvable,
and finite-field alphabets of size $2$ or $3$ are sufficient
to construct scalar linear solutions.
In \cite{DFZ-Unachievability},
the Fano and non-Fano networks were shown to be
solvable precisely over even and odd alphabets, respectively.
For each integer $\m \geq 2,$
Rasala Lehman and Lehman \cite{Lehman-Complexity} 
demonstrated a class of networks which
are not solvable over any alphabet 
whose size is less than $\m$
and are solvable over all alphabets whose size 
is a prime power greater than or equal to $\m$.
For each integer $\m \geq 3$,
Chen and HaiBin\cite{ChenHaiBin-Characterization}
demonstrated a class of networks which
are not solvable over any alphabet 
whose size is less than $\m$
and are solvable over all alphabets whose size
is not divisible by $2,3,\dots, \m-1$.

Koetter and M\'{e}dard \cite{Koetter-Algebraic} showed
for every finite field $\F$ and every network,
the network is scalar linear solvable 
over $\F$ if and only if
a corresponding system of polynomials 
has a common root in $\F$,
and in \cite{DFZ-Polynomials} it was shown
that for every finite field $\F$ and any system of polynomials
there exists a corresponding network which is scalar linear solvable
over $\F$ if and only if 
the system of polynomials has a common root
in $\F$.
Subramanian and Thangaraj \cite{Subramanian-Path} showed an alternate method of deriving 
a system of polynomials which corresponds to the scalar linear solvability of a network,
such that the degree of each polynomial equation is at most $2$.
Presently,
there are no known algorithms 
for determining whether a general network is solvable.

While vector linear solvable networks are solvable networks,
the converse need not be true.
This paper demonstrates infinitely many such counterexamples.

There remain numerous open questions regarding the existence 
of solvable networks which are not vector linear solvable.
Are many/most solvable networks not vector/scalar linearly solvable?
Can such networks be efficiently characterized?
Can such networks be algorithmically recognized?
We leave these questions for future research.

\subsection{Our contributions}
In this paper, we present an infinite class of solvable networks
which are not linear solvable 
over any $R$-module alphabet and any vector dimension.
We denote each such network as $\Network_4$, and we construct $\Network_4$ from 
several intermediate networks denoted by
$\Network_0, \Network_1, \Network_2, \text{ and } \Network_3$,
all of which are constructed from a fundamental network building block $\B$.
Specifically,
for each positive composite number $\m$,
we describe how to construct a network $\Network_4$ which has a non-linear
solution over an alphabet of size $\m$,
yet has no vector linear solution over 
any vector dimension
and any
finite field, 
commutative ring with identity, 
or $R$-module alphabet.
In addition, such a network is not solvable over
any alphabet whose size is less than $\m$.
The diabolical network in \cite{Dougherty-Freiling-Zeger04-Insufficiency}
was shown to be non-linear solvable over an alphabet of size $4$.

We will now summarize the main results of this paper, which all appear in Section~\ref{sec:N4}.
The network $\Network_4$ is parameterized by an arbitrary integer $m \ge 2$.
Theorem~\ref{thm:N4_m} shows that $\Network_4$ is solvable over an alphabet of size $m$.
Theorem~\ref{thm:N4_solv} shows, however, that $\Network_4$ is never solvable over alphabets smaller than $m$.
Theorem~\ref{thm:N4_prime} shows that when $m$ is prime, $\Network_4$ has a scalar linear solution
over a field of size $m$.
In fact, for all non-prime integers $m$, the network $\Network_4$ has no linear solution,
as demonstrated by Theorems~\ref{thm:N4_R} and \ref{thm:N4_cap}.
In particular, Theorem~\ref{thm:N4_R} shows that when $m$ is composite,
no vector linear solution for $\Network_4$ exists over any $R$-module, and
Corollary~\ref{cor:N4_asymp} shows that in such case,
$\Network_4$ is not even asymptotically linear solvable over any finite-field alphabet.
In the special case of $\m = 4$, 
the demonstrated network $\Network_4$ exhibits properties 
similar to the network presented in \cite{Dougherty-Freiling-Zeger04-Insufficiency}.

The diabolical network was shown in \cite{Dougherty-Freiling-Zeger04-Insufficiency} 
to have capacity equal to one, whereas its linear capacity is bounded away from one for any 
finite-field alphabet.
Analogously, we show in Theorem~\ref{thm:N4_cap} that for all $\m$, the capacity of $\Network_4$
equals one, whereas for all composite $\m$, its linear capacity
over any finite-field alphabet is bounded away from one.
Related capacity results are given for the constituent networks
$\Network_0$ (in Lemma~\ref{lem:N0_cap}),
$\Network_1$ (in Lemma~\ref{lem:N1_cap}),
$\Network_2$ (in Lemma~\ref{lem:N2_cap}),
and
$\Network_3$ (in Lemma~\ref{lem:N3_cap}).

The rest of the paper is organized as follows.
Table~\ref{tab:T1} summarizes the networks created 
and the results in this paper.
Section~\ref{ssec:def} provides mathematical background and definitions. 
Sections~\ref{sec:N0}-\ref{sec:N3} present the building block networks
which are used to construct the main class of networks.
Section~\ref{sec:N4} details the properties and construction
of the main class of networks.
For each network family,
we will discuss the solvability properties,
the linear solvability properties,
and the capacity.
The Appendix contains the proofs of every lemma in this paper.
All other proofs are given in the main body of the paper.

Section~\ref{sec:Q} poses some open questions
regarding solvability of networks.

\clearpage
\begin{table}[h]
\small
 \begin{center}
  \begin{tabular}{|L{13.25cm}r|}
  \hline
   \textbf{Networks and Their Main Properties}
   & \textbf{Location} \\ \Xhline{3\arrayrulewidth}
\textbf{Network }$\Network_0(\m)$ 
    & Section~\ref{sec:N0} \\ \hline
    \mybullet Consists of a block $\B(\m)$ together with source nodes.
    & Figure~\ref{fig:N0}\\ 
    \mybullet $4\m+6$ nodes.
    & Remark~\ref{rem:N0_nodes} \\
    \mybullet If a $(1,1)$ code over $\A$ is a solution, then the code has an Abelian group structure. 
    & Lemma~\ref{lem:N0_P}  \\ \Xhline{2\arrayrulewidth}
%
\textbf{Network }$\Network_1(\m)$ 
     & Section~\ref{sec:N1} \\ \hline
    \mybullet Consists of a block $\B(\m)$ together with source nodes and 
    an additional receiver. & Figure~\ref{fig:N1}\\
    \mybullet $4\m + 7$ nodes.
    &  Remark~\ref{rem:N1_nodes}
    \\
    \mybullet If solvable over $\A$, then $\GCD{\vert\A\vert}{\m} = 1$.
    & Lemma~\ref{lem:N1_solv} \\ 
    \mybullet Scalar linear solvable over \niceRModule{} $G$ iff $\GCD{\Char{R} \!}{\m} = 1$.
    & Lemma~\ref{lem:N1_lin} \\ 
    \mybullet If asymptotically linear solvable over finite field $\F$, then $\NDiv{\Char{\F}}{\m}$.
    & Lemma~\ref{lem:N1_cap} 
    \\ \Xhline{2\arrayrulewidth}
\textbf{Network }$\Network_2(\m,\n)$
     & Section~\ref{sec:N2} \\ \hline
     \mybullet Consists of $\n$ blocks $\B(\m+1)$ together with source nodes and 
    &  \\ 
    \ \  an additional receiver. & Figure~\ref{fig:N2}\\
    \mybullet $4 \m \n  + 9 \n + 2$ nodes. 
    & Remark~\ref{rem:N2_nodes}
    \\
     \mybullet If $\n \geq 2,$ then non-linear solvable over an alphabet of size $\m\n$. 
    & Lemma~\ref{lem:N2_non} \\ 
    \mybullet If solvable over $\A$, then $\GCD{\vert \A \vert}{\m} \ne 1.$
    & Lemma~\ref{lem:N2_solv} \\ 
    \mybullet Scalar linear solvable over \niceRModule{} $G$ iff $\Div{\Char{R}}{\m}$.
    & Lemma~\ref{lem:N2_lin} \\ 
    \mybullet If asymptotically linear solvable over finite field $\F$, then $\Div{\Char{\F}}{\m}$.
    & Lemma~\ref{lem:N2_cap} \\  \Xhline{2\arrayrulewidth}
\textbf{Network }$\Network_3(\m_1,\m_2)$
    & Section~\ref{sec:N3} \\ \hline
    \mybullet Consists of blocks $\B(\m_1)$ and $\B(\m_2)$ together with source nodes and
    & \\ 
    \ \  an additional receiver. & Figure~\ref{fig:N3}\\
    \mybullet $4 \m_1 + 4 \m_2 + 12$ nodes. 
    & Remark~\ref{rem:N3_nodes} 
    \\
    \mybullet For each $s,t \geq 1$ relatively prime to $\m_1$,
    if $\m_2 = s \m_1^{\alpha}$ for some $\alpha >0$,
    & Corollary~\ref{cor:N3_non}  \\
    \ \ then non-linear solvable over an alphabet of size $t \m_1^{\alpha+1}$.&\\
    \mybullet If solvable over $\A$, then $\GCD{\vert\A\vert}{\m_1} = 1$ or $\NDiv{ \vert \A \vert}{ \m_2 }$.
    & Lemma~\ref{lem:N3_solv} \\ 
    \mybullet Scalar linear solvable over \niceRModule{} $G$ iff $\GGGCD{\Char{R} \!}{\m_1}{\m_2} = 1$.
    & Lemma~\ref{lem:N3_lin} \\ 
    \mybullet If asymptotically linear solvable over finite field $\F$,
      then $\Char{\F}$ is  &\\ 
    \ \ relatively prime to $\m_1$ or $\m_2$. 
    & Lemma~\ref{lem:N3_cap}\\ \Xhline{2\arrayrulewidth}
\textbf{Network }$\Network_4(\m)$
    & Section~\ref{sec:N4} \\ \hline
    \mybullet Consists of a disjoint union of various networks $\Network_1, \Network_2,$ and $\Network_3$.
    & Equation~\eqref{eq:N4_N4} \\ 
    \mybullet Solvable over an alphabet of size $\m$. 
    & Theorem~\ref{thm:N4_m} \\
    \mybullet If $\vert\A\vert < \m$, then not solvable over $\A$.
    & Theorem~\ref{thm:N4_solv} \\ 
    \mybullet If $\m$ is prime, then scalar linear solvable over $\GF{\m}$. 
    & Theorem~\ref{thm:N4_prime} \\ 
    \mybullet If $\m$ is composite, then: (1) not vector linear solvable over any $R$-module.
    & Theorem~\ref{thm:N4_R} \\ 
     \hspace*{3.92cm}  (2) not asymptotically linear solvable over any finite field. 
    & Corollary~\ref{cor:N4_asymp} \\ 
     \mybullet Number of nodes is $O\left( \m^{\frac{\log{\m}}{\log{\log{\m}}}} \right)$
    and $\Omega(m)$.  
    & Theorem~\ref{thm:N4_nodes}  \\ \hline
  \end{tabular}
 \end{center}
  \caption{Summary of the networks constructed in this paper,
  where $\m,\m_1,\m_2,$ and $\n$ are integers such that $\m,\m_1,\m_2 \geq 2$ and $\n \geq 1$.}
  \label{tab:T1}
\end{table}

\clearpage

\subsection{Preliminaries} \label{ssec:def}

The following definitions and results regarding linear network codes over $R$-modules
are from \cite{Dougherty-Freiling-Zeger04-Insufficiency}
and \cite{Dummit-Algebra}.

\begin{definition}
Let $(R,+,*)$ be a ring with additive identity $0_R$.
An \textit{$R$-module} (specifically a left $R$-module) is
an Abelian group $(G,\oplus)$ with identity $0_G$
and an action 
$$\act : R \times G \to G $$
such that
for all $r,s \in R$ and all $g,h \in G$ the following hold:
\begin{align*}
  r \act (g\oplus h) &= (r \act g) \oplus (r \act h) \\
  (r + s) \act g &= (r \act g) \oplus (s \act g) \\
  (r*s) \act g &= r \act (s \act g) \\
  0_R \act g &= 0_G .
\end{align*}
The ring multiplication symbol $*$ will generally be omitted for brevity.
If the ring $R$ has a multiplicative identity $1_R$,
then we also require $1_R \act g= g$ for all $g \in G$.
For brevity, we say that $G$ is an $R$-module.
$\ominus$ will denote adding the inverse of an element (subtraction) within the group.
\label{def:mod}
\end{definition}

The following definition describes a class of $R$-modules
which we will use to discuss linear solvability in this paper.
%

\begin{definition}
  Let $G$ be an $R$-module.
  We will say that $G$
  is a \textit{\niceRModule{}} if
  \begin{enumerate}
    \itemsep0.05em 
    \item $R$ acts faithfully on $G$;
  that is if $r,s \in R$ are such that $r\act g = s\act g$
  for all $g \in G$, then $r = s$.
    \item $R$ has a multiplicative identity $1_R$.
    \item $R$ is finite.
  
    \item If $r \in R$ has a multiplicative left (respectively, right) inverse,
  then it has a two-sided inverse,
  which will be denoted $r^{-1}$.
  \end{enumerate}
  \label{def:nice_mod}
\end{definition}

This enables us to characterize over which \niceRModules{}
the networks in this paper are scalar linear solvable.
Lemmas~\ref{lem:mod_2} and \ref{lem:mod_0}
show that if a network is not scalar linear solvable over any \niceRModule{},
then the network is not vector linear solvable over any $R$-module.

A finite ring $R$, with a multiplicative identity,
acting on itself is a \niceRModule.
For any finite field $\F$ and positive integer $k$,
the set $M_k(\F)$ of $k \times k$ matrices over $\F$
with matrix addition and multiplication 
is a ring and $\F^k$ is a \niceMkFModule. 

\begin{lemma}
  If a network $\Network$ is not scalar linear solvable over any \niceRModule{},
  then it is not scalar linear solvable over any $R$-module.
  \label{lem:mod_2}
\end{lemma}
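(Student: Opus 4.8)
The plan is to prove the contrapositive: assuming $\Network$ \emph{is} scalar linear solvable over some $R$-module, I will produce a \niceRModule{} over which $\Network$ is scalar linear solvable. Suppose we have a scalar linear $(1,1)$ solution over an $R$-module $G$, where the edge and decoding functions use matrices (here, scalars, since $n=k=1$) drawn from $R$ and the action $\act\colon R\times G\to G$. The issue is that $R$ may fail any of the four conditions in Definition~\ref{def:nice_mod}: it may act non-faithfully on $G$, it may be infinite, it may lack a multiplicative identity, or it may contain one-sided units that are not two-sided units. The strategy is to replace $R$ by a better-behaved quotient/image ring acting on the same group $G$, verifying that the solution survives the replacement.

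First I would address faithfulness and finiteness simultaneously. Only finitely many ring elements actually appear as coefficients in the (finitely many) edge and decoding functions of the solution. More importantly, define $\overline{R} = R/K$ where $K = \{r \in R : r\act g = 0_G \text{ for all } g \in G\}$ is the annihilator of $G$ in $R$; this $K$ is a two-sided ideal, so $\overline{R}$ is a ring acting faithfully on $G$ via the induced well-defined action $(r+K)\act g = r\act g$. Faithfulness is immediate from the construction. Since $G$ is a finite group (alphabets are finite sets), the action of $\overline{R}$ embeds $\overline{R}$ into the finite group $\mathrm{End}(G)$ of group endomorphisms of $G$ as a subring, so $\overline{R}$ is finite. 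This handles conditions~1 and~3 at once.

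Next I would arrange a multiplicative identity (condition~2). The identity map on $G$ is a group endomorphism; I would argue that the subring of $\mathrm{End}(G)$ generated by the image of $\overline{R}$ together with the identity endomorphism $\mathrm{id}_G$ is still finite and still acts faithfully, and replacing the action by this larger ring $S$ (a subring of $\mathrm{End}(G)$ containing $\mathrm{id}_G$, which serves as $1_S$) does not change how any element of $\overline{R}$ acts on $G$. The original coefficients, viewed in $S$, act exactly as before, so the code's correctness is untouched. Condition~4 (one-sided inverses are two-sided) then follows because a finite ring with identity is Dedekind-finite: in a finite ring, if $ab = 1_S$ then the map $x\mapsto xa$ is injective on the finite set $S$ hence surjective, yielding a left inverse for $a$ which must coincide with $b$, so $a$ is a two-sided unit. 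Thus $S$ is a \niceRModule{}.

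The main obstacle will be the bookkeeping that a scalar linear solution over $G$ as an $R$-module remains a valid scalar linear solution over $G$ as an $S$-module. The crucial point is that linearity and the recovery conditions are statements purely about how coefficients act on group elements via $\act$; since the action of each original coefficient on $G$ is preserved throughout (passing to the quotient $\overline{R}$ only quotients out elements acting trivially, and enlarging to $S$ only adds new available coefficients without altering old ones), every edge function and decoding function computes the identical map $G^{\text{(inputs)}}\to G^{\text{(output)}}$ before and after. Hence each receiver still recovers its demand, and we obtain a scalar linear solution over the \niceRModule{} $S=\,$($G$ viewed as an $S$-module), completing the contrapositive and thus the lemma.
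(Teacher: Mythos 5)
Your proof is correct and takes essentially the same approach as the paper. The paper disposes of this lemma by citing the proof of \cite[Theorem~III.4]{Dougherty-Freiling-Zeger04-Insufficiency}, and the reduction carried out there is exactly the one you give---replace $R$ by its faithful image in the finite ring $\mathrm{End}(G)$ (equivalently, quotient by the annihilator of $G$), adjoin the identity endomorphism, and use finiteness of the resulting ring with identity to conclude that one-sided inverses are two-sided---so your proposal is a self-contained write-up of the argument the paper invokes by reference.
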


\newcommand{\ProofOflemModTwo}{
\begin{proof}[Proof of Lemma \ref{lem:mod_2}]
  This follows from the proof of
  \cite[Theorem III.4]{Dougherty-Freiling-Zeger04-Insufficiency}.
\end{proof}
}

\begin{lemma}
  If a network is not scalar linear solvable over any $R$-module,
  then it is not vector linear solvable over any $R$-module.
  \label{lem:mod_0}
\end{lemma}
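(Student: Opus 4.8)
The plan is to prove the contrapositive: if a network is vector linear solvable over some $R$-module, then it is scalar linear solvable over some (possibly different) $R$-module. The central observation is that a $(k,k)$ linear solution is nothing more than a scalar linear solution whose ``scalars'' happen to be $k \times k$ matrices and whose transmitted ``symbols'' happen to be $k$-tuples. Reinterpreting the matrix ring as a new coefficient ring, and the $k$-tuples as elements of a new module, converts the vector solution into a genuine scalar solution without changing anything that is actually computed.

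Concretely, suppose the network $\Network$ has a $(k,k)$ linear solution over an $R$-module $G$. Since $n = k$ in this case, every message coefficient matrix is $k \times k$ and every edge coefficient matrix is $k \times k$, so by \eqref{eq:0} each edge and decoding function has the form $\sum_i M_i x_i + \sum_j M'_j y_j$, where all coefficients $M_i, M'_j$ lie in the ring $M_k(R)$ of $k \times k$ matrices over $R$, and all inputs $x_i, y_j$ lie in $G^k$. I would then set $S = M_k(R)$, which is a ring (with identity $I_k$ whenever $R$ has an identity), and let $H = G^k$ carry the usual matrix--vector action in which a matrix entry $r \in R$ acts on a component $g \in G$ through the original $R$-module action and the results are summed in $G$. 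A routine check against Definition~\ref{def:mod} confirms that $H$ is an $S$-module.

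Under this identification, each edge of $\Network$ now carries a single element of $H$, and each edge and decoding function becomes a single $S$-linear combination of its inputs, which is precisely the scalar linear form. Because the underlying group operations and the values actually transmitted are unchanged, every receiver recovers its demands over the $S$-module $H$ exactly when it did over $G^k$. Hence the given $(k,k)$ linear solution over the $R$-module $G$ is, verbatim, a $(1,1)$ linear solution over the $S$-module $H$, so $\Network$ is scalar linear solvable over an $R$-module. Taking the contrapositive yields the lemma.

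I do not expect a serious obstacle here; the only point requiring genuine care is verifying that $H = G^k$ satisfies all of the module axioms with respect to $S = M_k(R)$. The distributivity axioms follow componentwise from those of $G$, and the only mildly delicate one is associativity of the action, which reduces to $(MN)\cdot v = M \cdot (N \cdot v)$ for $M,N \in S$ and $v \in H$ and follows from combining the associative and distributive laws of the original $R$-module action. The conceptual content of the argument is simply the standard identification of matrix-coefficient vector codes with scalar codes over the corresponding matrix ring.
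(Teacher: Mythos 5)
Your proposal is correct and takes essentially the same route as the paper: the paper's proof likewise observes that $M_k(R)$ is a ring, that $G^k$ is an $M_k(R)$-module, and that any $(k,k)$ linear solution over $G$ is therefore a scalar linear solution over this new module, then concludes by contraposition. Your write-up simply spells out the module-axiom verification that the paper leaves implicit.
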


\newcommand{\ProofOflemModZero}{
\begin{proof}[Proof of Lemma \ref{lem:mod_0}]
If $R$ is a ring and $G$ is an $R$-module, then
the set $M_k(R)$ of $k \times k$ matrices over $R$
with matrix addition and multiplication defined in the usual way,
is a ring and $G^k$ is an $M_k(R)$-module.
So any vector linear solution over an $R$-module
is also a scalar linear solution over some other $R$-module.
Thus if no scalar linear solutions exist,
no vector linear solutions exist.
\end{proof}
}


Vector linear solutions over rings
are special cases of
vector linear solutions over $R$-modules
where $R$ acts on itself.
A field is a special case of a commutative ring with identity
where all elements have multiplicative inverses,
and scalar linear solutions are special cases of vector linear solutions
where $k = 1$.
Thus if a network is not vector linear solvable over $R$-modules,
it is also not vector (or scalar) linear solvable over rings with identity (or fields). 

For any ring $R$ with multiplicative identity,
the \textit{characteristic of $R$}
is denoted $\Char{R}$ and
is the smallest positive integer $\m$ such that 
$1_R$ added to itself $\m$ times equals $0_R$.
The characteristic of a finite field
is always a prime number.
We say that a positive integer $\m$
is \textit{invertible in} $R$
if there exists $\m^{-1} \in R$
such that 
$\m^{-1} \, (\m 1_R) = 1_R$,
where $(\m 1_R)$ denotes $1_R$ added to itself $\m$ times.
Specifically, 
$$\m^{-1} = \left(\RepAdd{1_R}{+}{\m} \right)^{-1}.$$

The following lemmas discuss properties of multiplicative inverses in rings
and will be used to more easily characterize the classes of $R$-modules 
over which $\Network_1$ and $\Network_3$ are scalar linear solvable.
\begin{lemma}
  For each finite ring $R$ with a multiplicative identity 
  and each positive integer $\m$,
  the integer $\m$ is invertible in $R$ 
  if and only if 
  there does not exist $s \in R\backslash\{0_R\}$ such that
  $ \m s = 0_R$.
  \label{lem:RingInv1}
\end{lemma}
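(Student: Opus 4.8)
The plan is to reduce the whole statement to a fact about the single ring element $n := \RepAdd{1_R}{+}{\m}$, i.e.\ $\m 1_R$, and then exploit finiteness. Two elementary observations drive everything. First, for every $s \in R$ we have $\m s = ns$, since $s$ added to itself $\m$ times equals $(\m 1_R)\, s$. Second, $n$ is \emph{central} in $R$: as an integer multiple of the identity it satisfies $nx = \m x = xn$ for all $x \in R$, by distributivity. With these in hand, the hypothesis ``there is no $s \in R \setminus \{0_R\}$ with $\m s = 0_R$'' becomes exactly the statement that left multiplication by $n$ has trivial kernel, and the conclusion ``$\m$ is invertible'' becomes exactly ``there is some $\m^{-1}$ with $\m^{-1} n = 1_R$.''

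For the forward direction, I would argue directly. If $\m$ is invertible, so $\m^{-1} n = 1_R$, and if some $s$ satisfies $\m s = ns = 0_R$, then $s = (\m^{-1} n)\, s = \m^{-1}(n s) = \m^{-1}\, 0_R = 0_R$. Hence no nonzero element is annihilated by $\m$. This uses only associativity and the defining property of $\m^{-1}$, so it is essentially immediate.

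For the converse I would use finiteness. Consider the additive group endomorphism $L_n : R \to R$ defined by $L_n(x) = nx$. The hypothesis says precisely that $\ker L_n = \{0_R\}$, so $L_n$ is injective; since $R$ is finite, $L_n$ is then a bijection, and in particular surjective, so there exists $a \in R$ with $na = 1_R$. Finally, centrality of $n$ upgrades this one-sided inverse to a genuine inverse: $an = na = 1_R$, so setting $\m^{-1} := a$ gives $\m^{-1} n = 1_R$, which is exactly what invertibility of $\m$ requires.

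I expect the only delicate point to be the last step, namely producing the inverse on the side demanded by the definition of ``invertible''; centrality of $n$ makes this automatic and lets me avoid invoking the more general finite-monoid fact that a one-sided inverse is necessarily two-sided. I would also flag that finiteness of $R$ is essential here, since it is precisely what converts injectivity of $L_n$ into surjectivity; this is the reason the lemma is restricted to finite rings.
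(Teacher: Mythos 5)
Your proof is correct and takes essentially the same route as the paper's: the forward direction by cancelling with $\m^{-1}$, and the converse by noting that the map $s \mapsto \m s$ is injective under the hypothesis and hence, by finiteness of $R$, surjective, so $1_R$ lies in its image. Your explicit appeal to the centrality of $\m 1_R$ to convert the resulting one-sided inverse into the inverse required by the definition is a small point the paper leaves implicit, but the underlying argument is the same.
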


\newcommand{\ProofOflemRingInvOne}{
\begin{proof}[Proof of Lemma \ref{lem:RingInv1}]
Assume $\m$ is invertible in $R$.
Then for all $s \in R$
such that $\m s = 0_R$,
if we multiply both sides of the equation by $\m^{-1}$,
we have $s = 0_R$.

To prove the converse,
assume $ \m s = 0_R$ only if $s = 0_R$.
Let $T = \{ \m s \; : \; s \in R\}$.
For each $s,s' \in R$,
we have $\m s = \m s'$ if and only if $\m (s - s') = 0_R$,
which implies $s = s'$,
so, by assumption, $\vert T \vert = \vert R \vert$.
Thus $1_R \in T$,
which implies $\m$ is invertible.
\end{proof}
}

\begin{lemma}
  For each finite ring $R$ with a multiplicative identity
  and each positive integer $\m$,
  the integer $\m$ is invertible in $R$
  if and only if 
  $\Char{R}$ and $\m$ are relatively prime.
  \label{lem:RingInv2}
\end{lemma}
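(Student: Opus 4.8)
<br>

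The plan is to prove the two directions separately, using Lemma~\ref{lem:RingInv1} to convert the statement about invertibility into a statement about zero divisors, and then connecting that to the characteristic of $R$. Writing $c = \Char{R}$, the key structural fact I would use is that the additive order of $1_R$ in the group $(R,+)$ is exactly $c$, and more generally that for any $s \in R$ we have $\m s = (\m 1_R) s$, so the element $\m 1_R$ controls everything.

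\medskip

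For the forward direction, I would prove the contrapositive: suppose $\GCD{c}{\m} = d > 1$. I want to produce a nonzero $s$ with $\m s = 0_R$, which by Lemma~\ref{lem:RingInv1} shows $\m$ is not invertible. The natural candidate is $s = (c/d)\, 1_R$. This is nonzero because $c/d < c$ is smaller than the additive order of $1_R$, and $\m s = \m (c/d) 1_R = (c/d)(\m 1_R)$; since $d \mid \m$, the integer $(c/d)\m$ is a multiple of $c$, so $(c/d)\m\, 1_R = 0_R$. Hence $\m s = 0_R$ with $s \neq 0_R$, and $\m$ is not invertible in $R$.

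\medskip

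For the converse, again via the contrapositive, I would assume $\m$ is not invertible, so by Lemma~\ref{lem:RingInv1} there exists $s \neq 0_R$ with $\m s = 0_R$, and I must show $\GCD{c}{\m} > 1$. Suppose toward a contradiction that $\GCD{c}{\m} = 1$. Then by B\'ezout there are integers $a,b$ with $a\m + bc = 1$. Acting on $s$, I get $s = 1\act s = (a\m + bc)\act s = a(\m s) \oplus b(cs)$. The first term vanishes because $\m s = 0_R$, and the second term vanishes because $c s = (c 1_R) s = 0_R\, s = 0_R$ (using that $c 1_R = 0_R$ and that multiplication by $0_R$ gives $0_R$). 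This forces $s = 0_R$, contradicting $s \neq 0_R$. Hence $\GCD{c}{\m} > 1$, completing the proof.

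\medskip

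The main obstacle I anticipate is purely bookkeeping: making the identities $\m s = (\m 1_R) s$ and $c s = (c 1_R) s = 0_R$ fully rigorous from the module axioms, since these rely on the interplay between repeated integer addition in the group and ring multiplication. I would handle this by noting that for any positive integer $n$, the element $n 1_R$ (meaning $1_R$ added to itself $n$ times) satisfies $(n 1_R) s = n s$ by the distributive axiom $(r+s)\act g = (r\act g)\oplus(s\act g)$ applied inductively, together with $1_R \act s = s$. Once that identity is established, both directions reduce to the elementary number-theoretic fact that $c/d$ is nonzero modulo $c$ and the B\'ezout argument, which are routine.
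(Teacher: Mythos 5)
Your proof is correct and follows essentially the same route as the paper's: both directions reduce to Lemma~\ref{lem:RingInv1}, and your forward-direction witness $(\Char{R}/d)\,1_R$ is exactly the paper's (with $d$ playing the role of the common factor $a$). The only cosmetic difference is in the converse, where you invoke B\'ezout to derive a contradiction from $\GCD{\Char{R}}{\m} = 1$, while the paper observes that the additive order of $s$ exceeds $1$ and divides both $\m$ and $\Char{R}$ --- the same number-theoretic fact packaged differently.
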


\newcommand{\ProofOflemRingInvTwo}{
\begin{proof}[Proof of Lemma \ref{lem:RingInv2}]
Assume $\Char{R}$ and $\m$ are not relatively prime,
so they share a common factor $a > 1$.
Let $c$ and $\m'$ be integers such that 
$\Char{R} = a c $ and
$\m = a \m'$.
Then we have
$$ 0_R = \Char{R} \, 1_R = \m'  \, \Char{R} \, 1_R
  =  \m' \, a \, c \, 1_R
  = \m \, c  \, 1_R
  = \m \, \left( \RepAdd{1_R}{+}{c} \right). $$
Since $a > 1,$ 
we have $\RepAdd{1_R}{+}{c} \ne 0_R$,
so by Lemma~\ref{lem:RingInv1},
$\m$ is not invertible in $R$.

Conversely, 
assume $\m$ is not invertible in $R$.
Then by Lemma~\ref{lem:RingInv1},
there exists $s \in R\backslash\{0_R\}$
such that 
$$ 0_R = \m \, s = \RepAdd{s}{+}{\m}  $$
which implies the additive order of $s$ divides $\m$.
We also have
$$\RepAdd{s}{+}{\Char{R}} = \Char{R} \, s = 0_R, $$
which implies the additive order of $s$ divides $\Char{R}$.
Since $s \ne 0_R$,
the additive order of $s$ is greater than $1$,
and the additive order of $s$ divides both $\m$ and $\Char{R}$,
so they are not relatively prime.
\end{proof}
}

The following definition is called
Property $P'$
in \cite{ChenHaiBin-Characterization},
and will be utilized throughout.
\begin{definition}
  Let $\m \geq 2$. 
  A $(1,1)$ code for a network $\Network$ over an alphabet $\A$,
  containing messages $x_0,x_1,\dots,x_{\m}$
  and edge symbols $e_0,e_1,\dots,e_{\m}, \, e,$
  is said to have \textit{Property $P(\m)$} if
  there exists a binary operation
  $\oplus: \, \A \times \A \to \A$
  and 
  permutations $\pi_0,\pi_1,\dots,\pi_{\m}$
  and $\sigma_0, \sigma_1, \dots,\sigma_{\m}$ of $\A$,
  such that $(\A,\oplus)$ is an Abelian group
  and
  the edge symbols can be written as
  \begin{align*}
    e_i & = \sigma_i \left( \bigoplus_{ \substack{j = 0 \\ j \ne i} }^{\m} \pi_j(x_j)  \right) 
        && (i = 0,1,\dots,\m) \\
    e &= \bigoplus_{j = 0}^{\m} \pi_j(x_j) .
  \end{align*}
  \label{def:P}
\end{definition}

\clearpage

\section{The network \texorpdfstring{$\Network_0(\m)$}{N0(m)}} \label{sec:N0}
\psfrag{ex}{$e$}
\psfrag{e0}{$e_{0}$}
\psfrag{e1}{$e_{1}$}
\psfrag{en}{$e_{\m}$}
\psfrag{u0}{$u_{0}$}
\psfrag{u1}{$u_{1}$}
\psfrag{un}{$u_{\m}$}
\psfrag{ux}{$u$}

\psfrag{v0}{$v_{0}$}
\psfrag{v1}{$v_{1}$}
\psfrag{vn}{$v_{\m}$}
\psfrag{vx}{$v$}
\psfrag{R0}{$R_{0}$}
\psfrag{R1}{$R_{1}$}
\psfrag{Rn}{$R_{\m}$}
\psfrag{x0}{$x_0$}
\psfrag{x1}{$x_1$}
\psfrag{xn}{$x_{\m}$}
\psfrag{y0}{$y_0$}
\psfrag{y1}{$y_1$}
\psfrag{yn}{$y_{\m}$}

\psfrag{N0(n)'}{\large$\B(\m)$}

\psfrag{S0'}{\small$S_0$}
\psfrag{S1'}{\small$S_1$}
\psfrag{Sn'}{\small$S_{\m}$}
\psfrag{e0'}{\small$e_{0}$}
\psfrag{e1'}{\small$e_{1}$}
\psfrag{en'}{\small$e_{\m}$}
\psfrag{x0'}{\small$x_0$}
\psfrag{x1'}{\small$x_1$}
\psfrag{xn'}{\small$x_{\m}$}
\begin{figure}[h]
  \begin{center}
    \leavevmode
    \hbox{\epsfxsize=.6\textwidth\epsffile{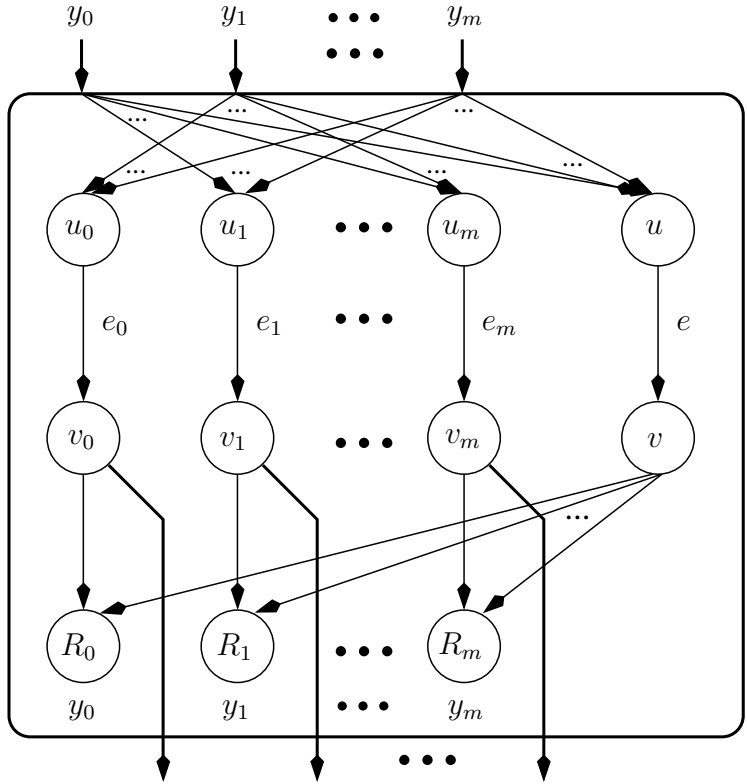}}
  \end{center}
  \caption{Network building block $\B(\m)$ has message inputs $y_0,y_1,\dots,y_{\m}$ 
      (from unspecified source nodes) and $\m+1$ output edges.
      For each $i$, the node $u_i$ receives each of the inputs except $y_i$
      and has a single outgoing edge to the node $v_i$, 
      which carries the edge symbol $e_i$.
      The node $u$ receives each of the inputs
      and has a single outgoing edge to the node $v$, 
      which carries the edge symbol $e$.
      For each $i$, the receiver node $R_i$ has an incoming edge from $v_i$ and an incoming edge from $v$
      and demands the $i$th message $y_i$.
      The $i$th output edge of $\B(\m)$ is an outgoing edge of node $v_i$.
  }
\label{fig:B}
\end{figure}

\begin{figure}[h]
  \begin{center}
    \leavevmode
    \hbox{\epsfxsize=.25\textwidth\epsffile{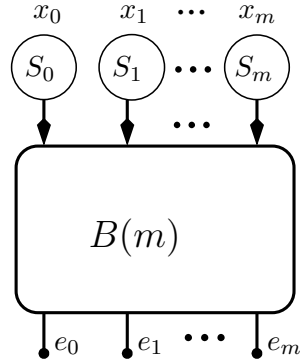}}
  \end{center}
  \caption{Network $\Network_0(\m)$ consists of a block $\B(\m)$
      together with source nodes $S_0,S_1,\dots,S_{\m}$,
      which generate messages $x_0,x_1,\dots,x_{\m}$, respectively.
      The output edges of $\B(\m)$ are unused.
  }
\label{fig:N0}
\end{figure}


For each $\m \ge 2,$ 
the network building block $\B(\m)$ is defined in Figure~\ref{fig:B}
and is used to build network $\Network_0(\m),$ 
which is defined in Figure~\ref{fig:N0}.
For each $i$, the node $v_i$ within $\B(\m)$ 
has a single incoming edge from node $u_i$,
so without loss of generality, we may assume both outgoing edges of $v_i$
carry the symbol $e_i$.  
Similarly, we may assume each of the outgoing edges of the node $v$
carries the symbol $e$.
Lemma~\ref{lem:N0_P} demonstrates that
for each $\m \geq 2$,
the $(1,1)$ solutions of network $\Network_0(\m)$
are precisely those codes which satisfy Property $P(\m)$, 
defined in Definition~\ref{def:P}.
In particular, the solution alphabets have to be permutations of Abelian groups.

\begin{remark}
  Network $\Network_0(\m)$ has $\m+1$ source nodes,
  $2 (\m + 2)$ intermediate nodes,
  and $\m+1$ receiver nodes,
  so the total number of nodes in $\Network_0(\m)$ is $4\m + 6$.
  \label{rem:N0_nodes}
\end{remark}

Lemma~\ref{lem:N0_P} characterizes the solvability of $\Network_0(\m)$
and will be used in the proofs of the solvability conditions of $\Network_1, \Network_2,$ and $\Network_3$.
\begin{lemma}
  Let $\m \geq 2$.
  A $(1,1)$ code over an alphabet $\A$
  is a scalar solution for
  network $\Network_0(\m)$
  if and only if 
  the code satisfies Property $P(\m)$.
  \label{lem:N0_P}
\end{lemma}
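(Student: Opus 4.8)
The plan is to prove the two directions separately. Throughout, I write the edge functions as $e = f(x_0,\ldots,x_\m)$ and $e_i = g_i$ of the other messages, and I write the decoder at $R_i$ as $x_i = D_i(e_i,e)$. For the ``if'' direction, assume the code satisfies Property $P(\m)$. Then $e = \bigoplus_j \pi_j(x_j)$ and $\sigma_i^{-1}(e_i) = \bigoplus_{j\ne i}\pi_j(x_j)$, so one can decode explicitly via $x_i = \pi_i^{-1}\!\big(e \ominus \sigma_i^{-1}(e_i)\big)$, which is well defined because $\pi_i,\sigma_i$ are permutations and $\oplus$ is a group operation; hence the code is a solution.

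For the ``only if'' direction I would extract the algebraic structure from decodability in stages. First, fixing all messages but $x_i$ also fixes $e_i$, so decodability forces $x_i\mapsto e$ to be injective; hence $f$ is a Latin function, i.e.\ a bijection in each coordinate separately. A closely related comparison shows each $g_i$ is also Latin in each of its arguments: if varying one of its inputs left $g_i$ unchanged, the two induced maps $x_i\mapsto e$ would be inverted by the same decoder and hence agree, contradicting that $f$ is Latin in that coordinate. Next, if two assignments to $x_{\ne i}$ produce the same $e_i$, then the induced bijections $x_i\mapsto e$ are inverted by the same map $D_i(e_i,\cdot)$ and therefore coincide; this shows $e$ depends on $x_{\ne i}$ only through $e_i$, so $f = h_i(x_i,e_i)$ for some $h_i$, and $h_i$ is forced to be Latin as well because $f$ and $g_i$ are.

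The heart of the proof is to manufacture a single Abelian group from this ``separability at every coordinate.'' Fixing a base symbol $0$, I would take as candidate permutations $\pi_j(x_j)=f(0,\dots,x_j,\dots,0)$, and for each pair of indices define a binary operation via the corresponding two-coordinate slice of $f$; each such operation is a loop with common identity $c = f(0,\dots,0)$. Restricting attention to any three coordinates (possible since $\m\ge 2$), I would compare the various slice-operations and use the factorizations $f=h_i(x_i,e_i)$ to prove that all of the pairwise operations coincide and that the resulting single operation $\oplus$ satisfies $(a\oplus b)\oplus d = (a\oplus d)\oplus b$; putting $a=c$ then yields commutativity, while comparing the two ways of grouping three coordinates yields associativity, so $(\A,\oplus)$ is an Abelian group. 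Running this over all triples shows there is one such group, and a short induction on the number of active coordinates upgrades the three-coordinate identity to $f(x_0,\dots,x_\m)=\bigoplus_j \pi_j(x_j)$. Finally, from $f = h_i(x_i,e_i)$ with $x_i=0$ one reads off $e_i = g_i(x_{\ne i}) = \sigma_i\big(\bigoplus_{j\ne i}\pi_j(x_j)\big)$ with $\sigma_i = h_i(0,\cdot)^{-1}$ a permutation, which is exactly Property $P(\m)$.

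I expect the main obstacle to be precisely the commutativity and associativity extraction in the three-coordinate core. Being a Latin function that is separable at a single coordinate is far too weak a condition, since every quasigroup qualifies; it is only the simultaneous separability at all $\m+1\ge 3$ coordinates --- morally, the ability to recover a ``middle'' message $x_i$ from the total $e$ and from the combination $e_i$ of all the other messages --- that rules out non-Abelian and non-associative structures. The delicate part is the bookkeeping needed to keep the slice-operations, their shared identity $c$, and the permutations $\pi_j$ mutually consistent as one passes between different pairs and triples of coordinates, and it is there that the computation must be carried out most carefully.
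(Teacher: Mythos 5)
Your proposal is correct in substance, but it takes a genuinely different route from the paper, whose entire proof of Lemma~\ref{lem:N0_P} is a citation: the statement is \cite[Proposition~3.2]{ChenHaiBin-Characterization}, and Property $P(\m)$ itself (Definition~\ref{def:P}) is imported from that reference, where it is called Property $P'$. You instead reconstruct the proposition from first principles, via what is essentially the classical theory of reducible multi-ary quasigroups: decodability at each $R_i$ forces $e$ and each $e_i$ to be Latin in every coordinate and forces the factorization $e = h_i(x_i,e_i)$; having this factorization simultaneously at every one of the $\m+1\ge 3$ coordinates is a generalized-associativity system, and your three-coordinate core (all pairwise slice operations coincide, the exchange law $(a\oplus b)\oplus d=(a\oplus d)\oplus b$ holds, commutativity follows by setting $a=c$, and associativity then follows from exchange plus commutativity) is exactly the right mechanism; the induction to all coordinates goes through because any two pairs of coordinates can be chained through common triples, so there is a single operation $\oplus$. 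The trade-off is clear: the paper buys brevity by deferring to the literature, while your argument is self-contained and makes visible why $\m\ge 2$ is essential --- for $\m=1$ any quasigroup (Latin square) solves the network, and quasigroups need not be isotopic to Abelian groups, so it is precisely the ``middle'' factorizations available once there are at least three messages that force the Abelian structure. Two assertions you state without proof, both one-line fixes: (i) that $h_i$ is Latin in its second argument --- this holds because $h_i(x_i,\cdot)$ is surjective (with $x_i$ fixed, $e$ already sweeps all of $\A$ as a single other message varies, and $e$ factors through $h_i$) and $\A$ is finite; (ii) that $\sigma_i=h_i(0,\cdot)^{-1}$ is a valid choice of permutation --- it is, but only because your normalization makes $\pi_i(0)=c$ the identity of $\oplus$, so that setting $x_i=0$ in $e=h_i(x_i,e_i)$ leaves exactly $\bigoplus_{j\ne i}\pi_j(x_j)$ inside; this normalization deserves an explicit sentence.
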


\newcommand{\ProofOflemNZeroP}{
\begin{proof}[Proof of Lemma \ref{lem:N0_P}]
  This lemma follows directly from 
  \cite[Proposition~3.2]{ChenHaiBin-Characterization}.
\end{proof}
}

The following result regarding the scalar linear solvability of $\Network_0(\m)$ 
will be used in later proofs.

\begin{lemma}
  Let $\m \geq 2$ and let $G$ be a \niceRModule.
  Suppose a scalar linear solution for network $\Network_0(\m)$
  over $G$ has edge symbols
\begin{align*}
e_i &= \bigoplus_{\substack{j= 0\\ j\ne i}}^{\m} \left( \cc{i}{j} \act x_j  \right)
  & &   (i = 0, 1, \dots, \m) \\
e &= \bigoplus_{j = 0}^{\m} \left( \cc{\?}{j} \act x_j \right)
\end{align*}
and decoding functions
\begin{align*}
R_i : \ \ 
  x_i & = \left( \dd{i}{e} \act e \right) \oplus  \left( \dd{\?}{i} \act e_i  \right)
    & &  (i = 0,1, \dots,\m)
\end{align*}    
where $\cc{i}{j}, \cc{\?}{j}, \dd{i}{e}, \dd{\?}{i} \in R$.
Then each $\dd{\?}{i}$ and $\cc{\?}{i}$ is invertible in $R$, 
and
\begin{align*}
  \cc{i}{j} & = - \dd{\?}{i}^{-1} \, \dd{i}{e} \, \cc{\?}{j}
    & & (i,j = 0,1,\dots,\m \text{ and } j \ne i) .
\end{align*}
  \label{lem:N0_lin}
\end{lemma}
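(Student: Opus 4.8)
The plan is to substitute the edge-symbol formulas into each decoding equation and then read off scalar relations among the ring coefficients by exploiting the faithfulness of the $R$-action. First I would expand the $i$-th decoding equation $x_i = (d_{i,e} \act e) \oplus (d_i \act e_i)$ using the module axioms $r\act(s\act g) = (rs)\act g$ and $r\act(g\oplus h) = (r\act g)\oplus(r\act h)$, and collect the coefficient of each message. The coefficient of $x_i$ is $d_{i,e}\,c_i$, while for $j \ne i$ the coefficient of $x_j$ is $d_{i,e}\,c_j + d_i\,c_{i,j}$. Since a solution must recover $x_i$ for every assignment of messages in $G$, and since $R$ acts faithfully on $G$ (Definition~\ref{def:nice_mod}, item~1), varying one message at a time while holding the rest at $0_G$ forces
$$d_{i,e}\,c_i = 1_R, \qquad d_{i,e}\,c_j + d_i\,c_{i,j} = 0_R \quad (j \ne i).$$

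From $d_{i,e}\,c_i = 1_R$ I would conclude that $c_i$ has a left inverse and $d_{i,e}$ has a right inverse, so by item~4 of Definition~\ref{def:nice_mod} both are (two-sided) invertible, and in fact mutual inverses. To obtain invertibility of $d_i$, I would fix any $j \ne i$ (one exists because $\m \geq 2$), right-multiply the relation $d_{i,e}\,c_j = -\,d_i\,c_{i,j}$ by $c_j^{-1}$ to get $d_{i,e} = -\,d_i\,c_{i,j}\,c_j^{-1}$, and then right-multiply by $d_{i,e}^{-1}$ to obtain $d_i \cdot \bigl(-\,c_{i,j}\,c_j^{-1}\,d_{i,e}^{-1}\bigr) = 1_R$. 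This exhibits a right inverse for $d_i$, so item~4 again yields that $d_i$ is invertible. Finally, left-multiplying $d_i\,c_{i,j} = -\,d_{i,e}\,c_j$ by $d_i^{-1}$ gives the claimed identity $c_{i,j} = -\,d_i^{-1}\,d_{i,e}\,c_j$.

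The main obstacle is the bookkeeping forced by the possibly noncommutative ring $R$: one must never silently commute factors, and every appeal to invertibility must be justified by a genuine one-sided inverse before invoking item~4. The step for $d_i$ is the delicate one, since $d_i$ enters as a left factor and only a right inverse for it is available directly; keeping the sides straight is precisely what promotes the invertibility of $d_i$ (beyond that of $d_{i,e}$ and $c_i$) and what allows $c_{i,j}$ to be isolated on the left in the final formula.
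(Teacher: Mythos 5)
Your proof is correct and follows essentially the same route as the paper's: equate message components at each receiver $R_i$ to obtain $1_R = \dd{i}{e}\,\cc{\?}{i}$ and $0_R = \dd{i}{e}\,\cc{\?}{j} + \dd{\?}{i}\,\cc{i}{j}$, deduce invertibility, and solve for $\cc{i}{j}$. The only difference is that you spell out what the paper leaves implicit — the appeal to faithfulness when equating coefficients and the use of item~4 of Definition~\ref{def:nice_mod} to promote one-sided inverses to two-sided ones — which is a faithful expansion of, not a departure from, the paper's argument.
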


\newcommand{\ProofOflemNZeroLin}{
\begin{proof}[Proof of Lemma \ref{lem:N0_lin}]
Equating message components at $R_{i}$ yields
\begin{align*}
    1_R &=  \dd{i}{e} \, \cc{\?}{i} 
    & & (i = 0,1,\dots, \m)\\
    0_R &=  \dd{i}{e} \, \cc{\?}{j} + \dd{\?}{i} \, \cc{i}{j}
    & & (i,j = 0,1,\dots, \m \text{ and } j \ne i)
\end{align*}
which implies the following elements of $R$ are invertible:
\begin{align*}
  & \dd{i}{e} \text{ and } \cc{\?}{i} 
    & & (i = 0,1,\dots, \m)\\
  & \dd{\?}{i} \text{ and } \cc{i}{j} 
    & & (i,j = 0,1,\dots, \m \text{ and } j \ne i) .
\end{align*}
The result then follows by solving for $\cc{i}{j}$.
\end{proof}
}

\begin{lemma}
  The network $\Network_0(\m)$ has capacity and linear capacity, for any finite-field alphabet, equal to $1$.
  \label{lem:N0_cap}
\end{lemma}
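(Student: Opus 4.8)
The plan is to establish that $\Network_0(\m)$ has capacity equal to $1$ by a two-sided argument: an upper bound of $1$ from a cut/information-theoretic obstruction, and a matching lower bound from an explicit scalar solution. For the lower bound, I would invoke the fact that $\Network_0(\m)$ is scalar-solvable---indeed, by Lemma~\ref{lem:N0_P}, any code satisfying Property $P(\m)$ is a $(1,1)$ solution, and such codes exist over any Abelian group (for instance taking $(\A,\oplus)=\Z/\m\Z$ with all permutations equal to the identity). A $(1,1)$ solution gives a rate of $k/n=1$, so the capacity is at least $1$, and likewise the linear capacity over any finite field is at least $1$, since the group solution can be realized linearly over a field (taking all coefficients $\cc{i}{j}=\cc{\?}{j}=1_R$ and decoding coefficients appropriately, as in the setup of Lemma~\ref{lem:N0_lin}).

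The main content is the upper bound: I must show that no $(k,n)$ solution can have $k/n>1$. The natural tool is a cut-set bound applied to one of the receivers $R_i$. Examining Figure~\ref{fig:B}, the receiver $R_i$ demands the message $y_i$ (equivalently $x_i$ in $\Network_0$) and receives exactly two incoming edges, from $v_i$ (carrying $e_i$) and from $v$ (carrying $e$). More to the point, the edge $e_i=\sigma_i(\dots)$ is a function of all messages \emph{except} $x_i$, so all of $R_i$'s information about $x_i$ must flow through the single edge carrying $e$. By the standard edge-cut argument, the $n$ symbols on that edge must carry enough information to recover the $k$-dimensional message $x_i$, which forces $k\le n$, i.e.\ $k/n\le1$. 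I would phrase this carefully: the edge from $v$ together with $e_i$ determines $x_i$, but since $e_i$ is independent of $x_i$ (it omits $y_i$), conditioning on all other messages shows that $e$ alone must determine $x_i$, giving $H(x_i)\le H(e)$ and hence $k\log|\A|\le n\log|\A|$.

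The step I expect to be the main obstacle is making the cut argument fully rigorous in the \emph{fractional} $(k,n)$ setting rather than the scalar one, since Property $P(\m)$ and Lemma~\ref{lem:N0_P} are stated only for $(1,1)$ codes and do not directly describe the structure of general $(k,n)$ solutions. I would handle this by arguing at the level of the network topology alone: regardless of the edge functions, $R_i$'s only inputs are $e_i$ and $e$, and the edge $e_i$ is computed at $u_i$ which by construction does not receive $y_i$; thus in any solution $x_i$ is a function of $(e_i,e)$ with $e_i$ functionally independent of $x_i$, so the $n$-symbol edge $e$ must alone resolve the $k$-symbol demand $x_i$. This purely combinatorial observation about the information flow yields $k\le n$ without appealing to the algebraic group structure, and combined with the lower bound completes the proof that both the capacity and the finite-field linear capacity equal $1$.
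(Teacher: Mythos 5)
Your proposal is correct and follows essentially the same route as the paper: a matching lower bound from the explicit ``sum of all messages'' code (which is scalar linear over any finite field, decoded as $x_i = e \ominus e_i$), and an upper bound from the observation that $e_i$ cannot depend on $x_i$, so all information about $x_i$ must cross the single edge from $u$ to $v$, forcing $k \le n$. Your extra care in making the cut argument purely topological (so it applies to arbitrary $(k,n)$ codes, not just the $(1,1)$ codes covered by Property $P(\m)$) is exactly the right way to make the paper's one-line upper bound rigorous.
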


\newcommand{\ProofOflemNZeroCap}{
\begin{proof}[Proof of Lemma \ref{lem:N0_cap}]
  Let $G$ be a \niceRModule.
  The network $\Network_0(\m)$ has the following scalar linear solution over $G$:
   \begin{align*}
     e_i & = \bigoplus_{\substack{ j=0 \\ j\ne i}}^{\m} x_j
      & & (i = 0,1,\dots,\m) 
      \\
      e & = \bigoplus_{j=0}^{\m} x_j
   \end{align*}
  and decoding at each receiver as follows:
  \begin{align*}
    R_i: \ \ e \ominus e_i &= x_i & & (i = 0,1,\dots,\m) .
  \end{align*}

  A scalar linear solution over a finite-field alphabet is a special case of a scalar linear solution over a \niceRModule.
  Therefore $\Network_0(\m)$ is scalar linear solvable over any finite-field alphabet,
  so the linear capacity of $\Network_0(\m)$ for any finite-field alphabet is at least $1$.
  The only path for message $x_0$ to reach the receiver $R_0$
  is through the edge connecting nodes $u$ and $v$,
  so its capacity is at most $1$.
  Thus, both the capacity of $\Network_0(\m)$ and its linear capacity
  for any finite-field alphabet are equal to $1$.
\end{proof}
}

\clearpage

\section{The network \texorpdfstring{$\Network_1(\m)$}{N1(m)}} \label{sec:N1}
\psfrag{N0(n)}{\large $\B(\m)$}
\psfrag{Rx}{\small$R_x$}

\psfrag{S0}{$S_0$}
\psfrag{S1}{$S_1$}
\psfrag{Sn}{$S_{\m}$}
\psfrag{x0}{\large$x_0$}
\psfrag{x1}{\large$x_1$}
\psfrag{xn}{\large$x_{\m}$}
\psfrag{e0}{\large$e_{0}$}
\psfrag{e1}{\large$e_{1}$}
\psfrag{en}{\large$e_{\m}$}
\begin{figure}[h]
  \begin{center}
    \leavevmode
    \hbox{\epsfxsize=.3\textwidth\epsffile{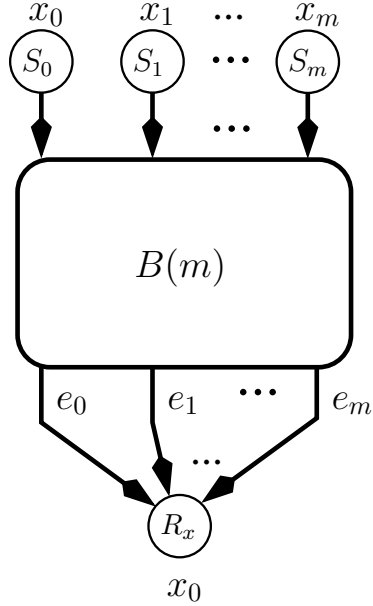}}
  \end{center}
  \caption{The network $\Network_1(\m)$
    is constructed from a $\B(\m)$ block
    together with
    source nodes $S_0,S_1,\dots,S_{\m}$
    and an additional receiver $R_x$.
    For each $i$, the source node $S_i$ generates the message $x_i$
    and is the $i$th input to $\B(\m)$.
    The additional receiver $R_x$ receives all of the output edges of $\B(\m)$
    and demands the message $x_0$.
  }
\label{fig:N1}
\end{figure}

For each $\m \geq 2,$ network $\Network_1(\m)$ is
defined in Figure~\ref{fig:N1}.
The special case $\m = 2$ corresponds to the non-Fano network 
from \cite{DFZ-Matroids}, \cite{DFZ-Unachievability}, 
with a relabeling of messages and nodes.
Lemmas~\ref{lem:N1_solv}, \ref{lem:N1_lin}, and \ref{lem:N1_cap}, respectively,
demonstrate that network $\Network_1(\m)$ is
\begin{enumerate}
\itemsep0em 
  \item solvable over alphabet $\A$ only if $|\A|$ is relatively prime to $\m$,

  \item scalar linear solvable over \niceRModule{} $G$ if and only if $\Char{R}$ is relatively prime to $\m$,

  \item asymptotically linear solvable over finite field $\F$ if and only if $\Char{\F}$ does not divide $\m$.
\end{enumerate}

\begin{remark}
Network $\Network_1(\m)$ is a network $\Network_0(\m)$ with one additional receiver node,
so the total number of nodes in $\Network_1(\m)$ is 
$4 \m +7$.
\label{rem:N1_nodes}
\end{remark}

\subsection{Solvability conditions of \texorpdfstring{$\Network_1(\m)$}{N1(m)}} \label{ssec:N1_solv}

The following lemma also follows from \cite[Proposition~4.1]{ChenHaiBin-Characterization}
and characterizes a condition on the alphabet size
necessary for the solvability of $\Network_1(\m)$.
\begin{lemma}
  For each $\m \geq 2,$ 
  if network $\Network_1(\m)$ 
  is solvable over alphabet $\A$,
  then $\m$ and $\vert \A \vert$ are relatively prime.
\label{lem:N1_solv}
\end{lemma}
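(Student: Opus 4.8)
The network $\Network_1(\m)$ is a $\B(\m)$ block with messages $x_0, \ldots, x_\m$ plus an extra receiver $R_x$ that sees all output edges $e_0, \ldots, e_\m$ and demands $x_0$. By Lemma~\ref{lem:N0_P}, any $(1,1)$ solution of the underlying $\Network_0(\m)$ satisfies Property $P(\m)$: there is an Abelian group $(\A, \oplus)$ and permutations $\pi_j, \sigma_i$ such that $e_i = \sigma_i\left(\bigoplus_{j\ne i} \pi_j(x_j)\right)$. A solution of $\Network_1(\m)$ is a solution of $\Network_0(\m)$ that additionally lets $R_x$ recover $x_0$ from $e_0, \ldots, e_\m$.

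Let me think about how to prove the contrapositive-style claim: solvability forces $\gcd(|\A|, \m) = 1$.
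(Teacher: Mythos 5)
Your setup is correct and matches the paper's: a solution for $\Network_1(\m)$ restricts to a solution for the embedded $\Network_0(\m)$, so by Lemma~\ref{lem:N0_P} the edge symbols satisfy Property $P(\m)$, and the only extra constraint is that $R_x$ recovers $x_0$ from $e_0,\dots,e_{\m}$ alone. But that is all your proposal contains: it ends with an announcement that you will now think about how to prove the claim, and no argument follows. The entire core of the proof is missing.

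The missing idea is this. Suppose toward a contradiction that $\m$ and $\vert\A\vert$ share a prime factor $p$. By Cauchy's theorem, the Abelian group $(\A,\oplus)$ guaranteed by Property $P(\m)$ contains an element $a \neq 0$ of order $p$. Now exploit the fact that each $e_i$ sums exactly $\m$ of the $\m+1$ terms $\pi_j(x_j)$ (the term with $j=i$ is omitted): compare the message assignment $x_j = \pi_j^{-1}(0)$ for all $j$ with the assignment $\hat{x}_j = \pi_j^{-1}(a)$ for all $j$. Under the first assignment every $e_i = \sigma_i(0)$; under the second, every $e_i$ equals $\sigma_i$ applied to the sum of $\m$ copies of $a$, which is again $\sigma_i(0)$ because $p \mid \m$ forces that sum to be the identity. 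So all of $e_0,\dots,e_{\m}$ coincide on the two assignments, yet $x_0 \neq \hat{x}_0$ since $\pi_0$ is a bijection and $a \neq 0$; hence $R_x$, which sees only $e_0,\dots,e_{\m}$, cannot decode $x_0$ --- a contradiction. Without this construction (or some substitute) of two message tuples that are indistinguishable at $R_x$ but differ in $x_0$, the lemma is not proved; note also that the argument genuinely needs Cauchy's theorem, since one must produce an element whose order divides $\m$ from the mere assumption that $\gcd(\m,\vert\A\vert) > 1$.
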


\newcommand{\ProofOflemNOneSolv}{
\begin{proof}[Proof of Lemma \ref{lem:N1_solv}]
Assume $\Network_1(\m)$ is solvable over $\A$.
Network $\Network_1(\m)$ consists of a network $\Network_0(\m)$
with the additional receiver $R_x$,
so by Lemma~\ref{lem:N0_P}, 
the edge functions within $\B(\m)$ must satisfy Property $P(\m)$.
Thus, there exists an Abelian group $(\A,\oplus)$
and permutations
$\pi_{0},\pi_{1},\dots,\pi_{\m}$
and $\sigma_{0},\sigma_{1},\dots,\sigma_{\m}$ of $\A$,
such that the edges carry the symbols:
\begin{align}
  e_{i} & = \sigma_{i} \left( \bigoplus_{\substack{j = 0 \\ j \ne i}}^{\m} \pi_{j}(x_j)  \right) 
    && (i = 0,1,\dots,\m) \label{eq:N1_solv_1}  \\
  e &= \bigoplus_{j = 0}^{\m} \pi_{j}(x_j)
    \notag .
\end{align}

Now suppose to the contrary that 
$\m$ and $\vert\A\vert$ share a prime factor $p$.
By Cauchy's Theorem of Finite Groups \cite[p. 93]{Dummit-Algebra},
there exists a nonzero element $a$ in the group $\A$
whose order is $p$.
Since $\Div{p}{\m}$,
we have $\RepAdd{a}{\oplus}{\m}= 0$.

Define two collections of messages as follows:
\begin{align*}
   x_j  &= \pi_{j}^{-1}(0) & & (j = 0,1,\dots, \m)\\
   \hat{x}_j &= \pi_{j}^{-1}(a) & & (j = 0,1,\dots, \m) .
\end{align*}
Since $a \ne 0$ and each $\pi_j$ is bijective, it follows that $x_j \ne \hat{x}_j$ for all $j$.
By Property $P(\m)$, we have 
\begin{align*}
        e_{i} &= \sigma_{i} \left( \RepAdd{0}{\oplus}{\m} \right) = \sigma_{i}(0)
    && (i = 0,1,\dots,\m)
    & & \Comment{\eqref{eq:N1_solv_1}}
\end{align*}
for the messages $x_0,x_1 \dots, x_{\m}$,
and 
\begin{align*}
        e_{i} &= \sigma_{i} \left( \RepAdd{a}{\oplus}{\m} \right) = \sigma_{i}(0)
    && (i = 0,1,\dots,\m)
    & & \Comment{\eqref{eq:N1_solv_1}}
\end{align*}
for the messages $\hat{x}_0,\hat{x}_1 \dots, \hat{x}_{\m}$.
For both collections of messages,
the edge symbols $e_{0}, e_{1},\dots, e_{\m}$ are the same,
and therefore the decoded value $x_0$ at $R_x$ must be the same.
However, this contradicts the fact that $x_0 \ne \hat{x}_0$.
\end{proof}
}

\subsection{Linear solvability conditions of \texorpdfstring{$\Network_1(\m)$}{N1(m)}} \label{ssec:N1_lin}

Lemma~\ref{lem:N1_lin} presents a necessary and sufficient condition 
for the scalar linear solvability of $\Network_1(\m)$ over \niceRModules.
\begin{lemma}
  Let $\m \geq 2$,
  and let $G$ be a \niceRModule.
  Then network $\Network_1(\m)$ is scalar linear solvable over $G$
  if and only if
  $\Char{R}$ is relatively prime to $\m$.
  \label{lem:N1_lin}
\end{lemma}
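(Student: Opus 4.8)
The plan is to reduce the additional decoding requirement imposed by the receiver $R_x$ to the single arithmetic condition that $\m$ be invertible in $R$, and then to translate that condition into the coprimality of $\Char{R}$ and $\m$ via Lemma~\ref{lem:RingInv2}. The underlying block $\B(\m)$ is shared with $\Network_0(\m)$, so Lemma~\ref{lem:N0_lin} already pins down the form of any scalar linear code on it, and only the single extra receiver $R_x$ (which sees the output edges $e_0,\dots,e_\m$ and demands $x_0$) remains to be analyzed.

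For the ``if'' direction, suppose $\Char{R}$ is relatively prime to $\m$, so $\m$ is invertible in $R$ by Lemma~\ref{lem:RingInv2}. I would start from the standard scalar linear solution of $\Network_0(\m)$ given in the proof of Lemma~\ref{lem:N0_cap}, namely $e_i = \bigoplus_{j\ne i} x_j$, $e = \bigoplus_{j} x_j$, with each $R_i$ decoding $x_i = e \ominus e_i$, and then equip $R_x$ with the decoding $x_0 = \left(\m^{-1} \act \bigoplus_{i=0}^{\m} e_i\right) \ominus e_0$. Since summing the $e_i$ counts each $x_j$ exactly $\m$ times, $\bigoplus_{i} e_i = \m \act \bigoplus_{j} x_j$, so applying $\m^{-1}$ recovers $e$, and subtracting $e_0 = \bigoplus_{j\ne 0} x_j$ leaves $x_0$. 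This furnishes a scalar linear solution over $G$.

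For the ``only if'' direction, suppose $\Network_1(\m)$ is scalar linear solvable over $G$. Discarding $R_x$ leaves a scalar linear solution of $\Network_0(\m)$, so Lemma~\ref{lem:N0_lin} gives the stated edge-symbol form, the invertibility of each $\cc{\?}{j}$, $\dd{\?}{i}$, $\dd{i}{e}$, and the identity $\cc{i}{j} = -\dd{\?}{i}^{-1}\dd{i}{e}\,\cc{\?}{j}$. The receiver $R_x$ decodes linearly as $x_0 = \bigoplus_{i}\left(b_i \act e_i\right)$ for some $b_i \in R$. I would expand this, substitute each $\cc{i}{j}$, and set $\gamma_i = -b_i\,\dd{\?}{i}^{-1}\dd{i}{e}$ so that $b_i\,\cc{i}{j} = \gamma_i\,\cc{\?}{j}$; collecting the coefficient of each $x_j$ and cancelling the invertible $\cc{\?}{j}$ on the right converts the decoding identity into $\sum_{i\ne 0}\gamma_i = \cc{\?}{0}^{-1}$ together with $\sum_{i\ne j}\gamma_i = 0_R$ for $j=1,\dots,\m$. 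Writing $S = \sum_{i=0}^{\m}\gamma_i$, the latter equations force $\gamma_1=\dots=\gamma_\m = S$, hence $\gamma_0 = S \ominus \m S$, and substituting into the first equation yields $(\m 1_R)\,S = \cc{\?}{0}^{-1}$. Thus $(\m 1_R)\,\bigl(S\,\cc{\?}{0}\bigr) = 1_R$, so $\m 1_R$ has a right inverse; by property~4 of Definition~\ref{def:nice_mod} it then has a two-sided inverse, i.e. $\m$ is invertible in $R$, and Lemma~\ref{lem:RingInv2} gives $\GCD{\Char{R}}{\m}=1$.

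The main obstacle is the bookkeeping in a possibly noncommutative ring $R$ (standard $R$-modules include the matrix rings $M_k(\F)$): the message coefficients must be regrouped so that each $\cc{\?}{j}$ factors out consistently on one side, which is exactly the role of the substitution $\gamma_i = -b_i\,\dd{\?}{i}^{-1}\dd{i}{e}$. The one remaining subtlety is the final passage from a one-sided inverse of $\m 1_R$ to genuine invertibility, which is legitimate only because $G$ is a standard $R$-module.
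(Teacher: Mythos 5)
Your proof is correct and takes essentially the same approach as the paper: both directions reduce to invertibility of $\m$ in $R$ via Lemma~\ref{lem:RingInv2}, the converse uses the identical $\m^{-1}$-based code, and the forward direction combines Lemma~\ref{lem:N0_lin} with coefficient-matching at $R_x$ and right-cancellation of the invertible $\cc{\?}{j}$. The only difference is cosmetic bookkeeping: the paper multiplies each zero-equation on the right by $\cc{\?}{j}^{-1}\,\cc{\?}{0}$, sums over $j$, and subtracts the $x_0$-equation to reach $-1_R = \m \sum_{i} \dd{x}{i}\,\dd{\?}{i}^{-1}\,\dd{i}{e}\,\cc{\?}{0}$, whereas you solve the small linear system for your $\gamma_i$'s to reach $(\m 1_R)\,S = \cc{\?}{0}^{-1}$ --- the same computation in different clothing, with your explicit appeal to property~4 of Definition~\ref{def:nice_mod} making precise a step the paper leaves implicit.
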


\newcommand{\ProofOflemNOneLin}{
\begin{proof}[Proof of Lemma \ref{lem:N1_lin}]
By Lemma~\ref{lem:RingInv2},
$\m$ is invertible in $R$ if and only if $\Char{R}$ is relatively prime to $\m$,
so it suffices to show that for each $\m$ and each \niceRModule{} $G$, 
network $\Network_1(\m)$ is scalar linear solvable over $G$
if and only if $\m$ is invertible in $R$.

Assume network $\Network_1(\m)$ is scalar linear solvable 
over \niceRModule{} $G$.
The messages are drawn from $G$, and
there exist $\cc{i}{j}, \cc{\?}{j} \in R$,
such that the edge symbols can be written as:
\begin{align}
  e_{i} &= \bigoplus_{\substack{j=0\\ j\ne i}}^{\m} \left( \cc{i}{j} \act x_j \right)
    & &  (i = 0,1, \dots, \m) 
    \label{eq:N1_lin_1} \\
  e &= \bigoplus_{j = 0}^{\m} \left( \cc{\?}{j} \act x_j \right)
    \label{eq:N1_lin_2}
\end{align}
and there exist $\dd{i}{e}, \dd{\?}{i}, \dd{x}{i} \in R$,
such that each receiver can linearly recover its respective message
from its inputs by:
\begin{align}
R_{i} : \ \ 
  x_i & = \left(\dd{i}{e} \act e \right) \oplus  \left( \dd{\?}{i} \act e_{i}  \right)
    & &  (i = 0,1, \dots,\m) 
    \label{eq:N1_lin_3} \\
R_x : \ \ 
  x_{0} & = \bigoplus_{i = 0}^{\m} \left( \dd{x}{i} \act e_{i} \right).
    \label{eq:N1_lin_4}
\end{align}

Since $\Network_1(\m)$ contains $\Network_0(\m)$,
by Lemma~\ref{lem:N0_lin}
and \eqref{eq:N1_lin_1} -- \eqref{eq:N1_lin_3},
each $\cc{\?}{i}$ and each $\dd{\?}{i}$
is invertible in $R$,
and
\begin{align}
  \cc{i}{j} &= - \dd{\?}{i}^{-1} \, \dd{i}{e} \, \cc{\?}{j}
      & & (i,j = 0,1,\dots, \m \text{ and } j \ne i) .
   \label{eq:N1_lin_5} 
\end{align}

Equating message components at $R_x$ yields:
\begin{align}
1_R &= \sum_{i = 1}^{\m} \dd{x}{i} \, \cc{i}{0}
    & &\Comment{\eqref{eq:N1_lin_1}, \eqref{eq:N1_lin_4}} \notag \\
  & = - \sum_{i = 1}^{\m} \dd{x}{i} \, \dd{\?}{i}^{-1} \, \dd{i}{e} \, \cc{\?}{0}
    & & \Comment{\eqref{eq:N1_lin_5}}
    \label{eq:N1_lin_6} 
\end{align}
and for each $j = 1,2,\dots, \m$,
\begin{align}
0_R &= \sum_{\substack{ i = 0 \\ i \ne j}}^\m \dd{x}{i} \, \cc{i}{j}
    & &\Comment{\eqref{eq:N1_lin_1}, \eqref{eq:N1_lin_4}} \notag \\
  & = -\left( \sum_{\substack{i = 0 \\ i \ne j}}^{\m} \dd{x}{i} \, \dd{\?}{i}^{-1} \, \dd{i}{e} \right) \, \cc{\?}{j}
    & & \Comment{\eqref{eq:N1_lin_5}} .
      \label{eq:N1_lin_07}
\end{align}
For each $j = 1,2,\dots, \m$,
multiplying \eqref{eq:N1_lin_07} on the right by $\cc{\?}{j}^{-1} \, \cc{\?}{0}$ yields
\begin{align}
   0_R &= \sum_{\substack{i = 0 \\ i \ne j}}^{\m} \dd{x}{i} \, \dd{\?}{i}^{-1} \, \dd{i}{e} \, \cc{\?}{0}.
     & & \Comment{\eqref{eq:N1_lin_07}} . 
     \label{eq:N1_lin_7} 
\end{align}

By summing \eqref{eq:N1_lin_7} over $j=1,2,\dots, \m$ and subtracting \eqref{eq:N1_lin_6}, we get
\begin{align}
  -1_R &=  \sum_{j = 0}^{\m} \sum_{\substack{i = 0 \\ i \ne j}}^{\m} \dd{x}{i} \, \dd{\?}{i}^{-1} \, \dd{i}{e} \, \cc{\?}{0}
      & & \Comment{\eqref{eq:N1_lin_6}, \eqref{eq:N1_lin_7}} \notag \\
    &= \m \, \sum_{i = 0}^{\m} \dd{x}{i} \, \dd{\?}{i}^{-1} \, \dd{i}{e} \, \cc{\?}{0} . \notag 
\end{align}
Therefore, $\m$ is invertible in $R$.

To prove the converse,
let $G$ be a \niceRModule{}
such that
$\m$ is invertible in $R$.
Define a scalar linear code
over $G$ by:
\begin{align*}
  e_{i} &= \bigoplus_{\substack{j=0\\ j \ne  i}}^{\m} x_j 
  & &  (i = 0,1,\dots,\m) \\
  e &=\bigoplus_{j = 0}^{\m} x_j.
\end{align*}
Receiver $R_i$ can linearly recover $x_i$ from its received edge symbols $e$ and $e_{i}$ by:
\begin{align*}
  R_i: \ \ &
    e \ominus e_{i} =  x_i 
      & &  (i=0,1,\dots,\m)  
\end{align*}
and
receiver $R_x$ can linearly recover $x_{0}$ from its received edge symbols $e_{0},e_{1},\dots,e_{\m}$ by:
\begin{align*}
  R_x: \ \ &
    \left(\m^{-1} \act \bigoplus_{i = 0}^{\m} e_{i} \right) \ominus e_{0} \\
      & = \left( \m^{-1} \act \bigoplus_{i = 0}^{\m} \bigoplus_{\substack{j = 0\\ j \ne i}}^{\m} x_j \right) 
        \ominus \bigoplus_{j = 1}^{\m} x_j \\
      &= \bigoplus_{j = 0}^{\m} x_j \ominus \bigoplus_{j = 1}^{\m} x_j 
          = x_{0} .
\end{align*}
Thus the code is a scalar linear solution for $\Network_1(\m)$.
\end{proof}
}

\subsection{Capacity and linear capacity of \texorpdfstring{$\Network_1(\m)$}{N1(m)}} \label{ssec:N1_cap}
\begin{definition}
Let $\F$ be a finite field and 
suppose $a_1, \dots, a_q \in \F^{s_i}$ and $b_1, \dots, b_r \in \F^{t_j}$
are functions of variables $x_1, \dots, x_w$.
We write $a_1,\dots,a_q \; \longrightarrow \; b_1,\dots,b_r$ to mean
that there exist $t_j \times s_i$ matrices $M_{j,i}$ over $\F$ such that 
for all choices of the variables $x_1, \dots, x_w$,
\begin{align*}
  b_j &= \sum_{i=1}^q M_{j,i} \, a_i & & (j = 1,\dots, r). 
\end{align*}
\label{def:yields}
\end{definition}
In the context of network coding,
the variables $x_1, \dots, x_w$ will always be taken as the network messages. 
In what follows, 
the transitive relation $\longrightarrow$ 
will be used to describe linear coding functions at network nodes.
Lemma~\ref{lem:mat_1}
is known from linear algebra
\cite[p. 124]{Satake-LinearAlgebra},
and will be used in later proofs.
In particular, Lemmas~\ref{lem:mat_1}, \ref{lem:mat_2}, and \ref{lem:mat_3}
will be used in bounding the linear capacities of $\Network_1, \Network_2$, and $\Network_3$.

\begin{lemma}
Let $\F$ be a finite field.
If $A: \,\F^m \, \to \, \F^n$ and
$B: \, \F^k \, \to \, \F^m$ are linear maps, then
\begin{align}
  \Rank{A} + \Rank{B} - m &\leq \Rank{A \, B}  \label{eq:mat_1_1}\\
        & \leq \min (\Rank{A} , \Rank{B}). \label{eq:mat_1_2}
\end{align}
\label{lem:mat_1}
\end{lemma}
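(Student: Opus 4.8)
The plan is to establish the two inequalities separately, using only the rank--nullity theorem together with elementary facts about images of composed maps. Over a field the finiteness of $\F$ plays no role, so I would treat $\F$ as an arbitrary field throughout and write $\mathrm{im}$ for the image of a linear map.

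For the upper bound \eqref{eq:mat_1_2}, I would argue via images. Every vector of the form $ABx$ lies in $\mathrm{im}(A)$, so $\mathrm{im}(AB) \subseteq \mathrm{im}(A)$ and hence $\Rank{AB} \le \Rank{A}$. On the other hand, $AB$ factors through $B$, so $\mathrm{im}(AB) = A(\mathrm{im}(B))$ is the image of the subspace $\mathrm{im}(B) \subseteq \F^m$ under $A$; since a linear map never increases dimension, $\Rank{AB} = \dim A(\mathrm{im}(B)) \le \dim \mathrm{im}(B) = \Rank{B}$. Taking the minimum of the two bounds gives \eqref{eq:mat_1_2}.

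For the lower bound \eqref{eq:mat_1_1} (Sylvester's inequality), the key step is to apply rank--nullity to the restriction $A|_{\mathrm{im}(B)} \colon \mathrm{im}(B) \to \F^n$. Its image is exactly $A(\mathrm{im}(B)) = \mathrm{im}(AB)$ and its kernel is $\ker A \cap \mathrm{im}(B)$, so
$$\Rank{B} = \dim \mathrm{im}(B) = \Rank{AB} + \dim\!\left(\ker A \cap \mathrm{im}(B)\right).$$
I would then bound the intersection crudely by the whole kernel, $\dim(\ker A \cap \mathrm{im}(B)) \le \dim \ker A = m - \Rank{A}$, where the final equality is rank--nullity applied to $A \colon \F^m \to \F^n$. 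Rearranging the displayed identity yields $\Rank{AB} \ge \Rank{A} + \Rank{B} - m$, as claimed.

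There is no genuine obstacle here: the result is entirely standard, and indeed the paper merely cites it. The only points requiring a moment's care are the identification $\mathrm{im}(AB) = A(\mathrm{im}(B))$ and the observation that the kernel of the restricted map $A|_{\mathrm{im}(B)}$ is precisely $\ker A \cap \mathrm{im}(B)$; once these are in hand, both bounds fall out of two applications of rank--nullity.
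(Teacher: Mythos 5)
Your proof is correct. Note, however, that the paper itself offers no proof of this lemma at all: it is stated as known from linear algebra with a citation to Satake's \emph{Linear Algebra} (p.~124), and no argument appears in the Appendix, which only proves Lemmas~\ref{lem:mat_2} and \ref{lem:mat_3} among the matrix facts. So there is nothing in the paper to compare against step by step; what you have done is supply the standard self-contained argument that the authors delegate to a reference. Your two ingredients are exactly the right ones: the inclusion $\mathrm{im}(AB) = A(\mathrm{im}(B)) \subseteq \mathrm{im}(A)$ together with the fact that linear maps do not increase dimension gives \eqref{eq:mat_1_2}, and rank--nullity applied to the restriction $A|_{\mathrm{im}(B)}$, whose kernel is $\ker A \cap \mathrm{im}(B)$, gives the identity $\Rank{B} = \Rank{AB} + \dim\left(\ker A \cap \mathrm{im}(B)\right)$, from which Sylvester's inequality \eqref{eq:mat_1_1} follows by bounding the intersection by $\dim \ker A = m - \Rank{A}$. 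You are also right that finiteness of $\F$ is irrelevant here; it matters elsewhere in the paper (e.g.\ for the entropy-counting arguments in Lemma~\ref{lem:N1_cap}) but not for this purely dimension-theoretic statement.
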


\begin{lemma}
If $A$ is an $n \times k$ matrix of rank $k$ over finite field $\F$, 
then there exists a nonsingular $n\times n$ matrix $B$ such that
\begin{align*}
  B \, A = \left[ \begin{array}{cc}
        I_k \\
        0
      \end{array} \right] .
\end{align*}
\label{lem:mat_2}
\end{lemma}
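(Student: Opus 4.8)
The plan is to prove this via a basis-extension argument, which is the cleanest route and avoids explicit elimination bookkeeping. The key observation is that the hypothesis $\Rank{A} = k$ means exactly that the $k$ columns of $A$, viewed as vectors $a_1, \dots, a_k \in \F^n$, are linearly independent (and in particular $k \le n$). The desired conclusion is then just the statement that these independent columns can be carried simultaneously to the first $k$ standard basis vectors of $\F^n$ by a single invertible left-multiplication.

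First I would extend $\{a_1, \dots, a_k\}$ to a basis $\{a_1, \dots, a_k, c_{k+1}, \dots, c_n\}$ of $\F^n$; this is possible for any finite-dimensional vector space over any field, including a finite field. Next I would assemble the $n \times n$ matrix $P$ whose columns are this basis, namely $P = [\, a_1 \ \cdots \ a_k \ c_{k+1} \ \cdots \ c_n \,]$. Because its columns form a basis of $\F^n$, the matrix $P$ is nonsingular, so $B := P^{-1}$ exists and is itself nonsingular.

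Finally I would read off $BA$ directly. Since $A$ is precisely the submatrix formed by the first $k$ columns of $P$, and $BP = P^{-1}P = I_n$, the product $BA$ equals the submatrix formed by the first $k$ columns of $I_n$, which is exactly the $n \times k$ block matrix with $I_k$ on top and the zero block below. Hence this $B$ satisfies the claim.

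Since the result is elementary, there is no genuine obstacle; the only points requiring a moment's care are (i) noting that the full-column-rank hypothesis forces $k \le n$, so the target matrix is well-shaped, and (ii) invoking the basis-extension theorem together with the equivalence ``columns form a basis $\iff$ matrix is nonsingular'' over a general finite field rather than over $\R$. If one instead preferred a constructive proof, the alternative is Gauss--Jordan elimination: row-reduce $A$ to reduced row-echelon form, which must be the block $[\,I_k \,;\, 0\,]$ because every column is a pivot column once the rank equals the number of columns, and then take $B$ to be the product of the corresponding elementary matrices, each of which is invertible.
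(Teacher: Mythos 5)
Your proof is correct, but its main argument takes a genuinely different route from the paper's. The paper disposes of this lemma in a single line: ``It follows immediately from Gaussian elimination,'' i.e., $B$ is implicitly the product of the elementary matrices that row-reduce $A$, and since every column of an $n\times k$ matrix of rank $k$ is a pivot column, the reduced form is the block $\left[\begin{smallmatrix} I_k \\ 0 \end{smallmatrix}\right]$ --- which is exactly the constructive alternative you sketch in your closing sentences. Your primary argument instead extends the columns of $A$ to a basis of $\F^n$, forms the nonsingular matrix $P$ whose columns are that basis, and takes $B = P^{-1}$, so that $B\,A$ is read off as the first $k$ columns of $P^{-1}P = I_n$. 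What your approach buys is a self-contained argument with no echelon-form bookkeeping, in which the nonsingularity of $B$ is immediate (it is the inverse of a nonsingular matrix); what the paper's approach buys is brevity and an explicitly algorithmic construction of $B$. Both arguments are valid over any field, finite or otherwise, and both use the full-column-rank hypothesis in the same essential way (it forces $k \le n$ and the independence of the columns of $A$), so there is no gap in either.
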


\newcommand{\ProofOflemMatTwo}{
\begin{proof}[Proof of Lemma \ref{lem:mat_2}]
It follows immediately from Gaussian elimination.
\end{proof}
}

\begin{lemma}
If $A$ is an $m \times n$ matrix of rank $k$ over finite field $\F$, 
then there exists an $(n-k) \times n$ matrix $Q$ over $\F$ of rank $n-k$
such that for all $x \in \F^n$
  $$ Ax, \, Qx \, \longrightarrow \, x. $$
  \vspace{-.75cm}
\label{lem:mat_3}
\end{lemma}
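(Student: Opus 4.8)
The plan is to reduce the statement to a fact about full column rank and the existence of a left inverse. Unwinding Definition~\ref{def:yields}, the relation $Ax, Qx \longrightarrow x$ asserts exactly that there exist an $n \times m$ matrix $M_1$ and an $n \times (n-k)$ matrix $M_2$ over $\F$ with $M_1(Ax) + M_2(Qx) = x$ for all $x \in \F^n$, which is equivalent to the single matrix identity $M_1 A + M_2 Q = I_n$. So it suffices to construct a $Q$ of rank $n-k$ for which such $M_1, M_2$ exist.

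First I would exhibit $Q$. Since $\Rank{A} = k$, the row space of $A$ is a $k$-dimensional subspace of $\F^n$. I would choose $n-k$ vectors $q_1, \dots, q_{n-k} \in \F^n$ which, together with a basis of the row space of $A$, form a basis of $\F^n$, and let $Q$ be the $(n-k) \times n$ matrix having the $q_i$ as its rows. By construction the $q_i$ are linearly independent, so $Q$ has rank $n-k$, as required. Moreover the stacked matrix $C = \begin{bmatrix} A \\ Q \end{bmatrix}$ has row space all of $\F^n$, hence rank $n$; since $C$ has exactly $n$ columns, it has full column rank.

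Next I would obtain the decoding matrices from a left inverse of $C$. Because $C$ is a matrix with $n$ columns and rank $n$, Lemma~\ref{lem:mat_2}, applied to $C$ in place of its $A$ (with the dimensions relabeled so that the number of columns equals the rank $n$), yields a nonsingular matrix $B$ with $B C = \begin{bmatrix} I_n \\ 0 \end{bmatrix}$; taking $L$ to be the top $n$ rows of $B$ gives $LC = I_n$, a left inverse of $C$. Partitioning $L = [\, M_1 \mid M_2 \,]$ so that $M_1$ has $m$ columns and $M_2$ has $n-k$ columns, the identity $LC = I_n$ reads $M_1 A + M_2 Q = I_n$, which is precisely $Ax, Qx \longrightarrow x$.

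The only point requiring care is that the two demands on $Q$ are simultaneously achievable: that $Q$ has rank exactly $n-k$ and that the stacked matrix has full column rank. Both follow at once from the single choice of completing a basis of the row space of $A$ to a basis of $\F^n$, so there is no tension between them. The remainder is bookkeeping with the block partition of the left inverse and the translation of the matrix identity $M_1 A + M_2 Q = I_n$ back into the language of Definition~\ref{def:yields}.
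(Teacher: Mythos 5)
Your proposal is correct and takes essentially the same approach as the paper: both construct $Q$ by completing a basis of the row space of $A$ to a basis of $\F^n$ and then read off the decoding matrices from the fact that the stacked rows span $\F^n$. The only difference is cosmetic—where the paper asserts the existence of $C_1, C_2$ with $x = C_1 A x + C_2 Q x$ as immediate, you formalize that step by applying Lemma~\ref{lem:mat_2} to the stacked matrix to produce an explicit left inverse, which is a valid (if slightly longer) way to finish.
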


\newcommand{\ProofOflemMatThree}{
\begin{proof}[Proof of Lemma \ref{lem:mat_3}]
Choose $k$ independent rows of $A$,
find $n - k$ members of $\F^n$
which together with the $k$ rows of $A$
form a basis of $\F^n$,
and let the $n-k$ members be the rows of $Q$.
Since the rows of $A$ together with the rows of $Q$ form a basis of $\F^n$,
there exists an $n\times m$ matrix $C_1$ and an $n \times (n-k)$ matrix $C_2$ such that
for all $x \in \F^n$
$$x = C_1 A  x  + C_2 Q x. $$
The results follow immediately.
\end{proof}
}

The following lemma characterizes the capacity and the linear capacity over finite-field alphabets of $\Network_1(\m)$.
\begin{lemma}
  For each $\m \geq 2,$ 
  network $\Network_1(\m)$ has:
  \begin{itemize}
  \itemsep0em 
    \item[(a)] capacity equal to $1$,
    \item[(b)] linear capacity equal to $1$
      for any finite-field alphabet whose characteristic 
      does not divide $\m$,
    \item[(c)] linear capacity equal to
      $1 - \frac{1}{2\m + 2}$ 
      for any finite-field alphabet whose characteristic divides $\m$.
  \end{itemize}
\label{lem:N1_cap}
\end{lemma}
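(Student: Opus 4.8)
The plan is to establish three upper bounds matching the three claimed capacity values, then supply matching linear solutions for the achievability direction of parts~(b) and~(c). The three parts share a common ingredient: a cut argument showing capacity at most $1$. For part~(a), observe that in $\Network_1(\m)$ every path from source $S_0$ to receiver $R_0$ must pass through the single edge joining $u$ and $v$ (as already noted in the proof of Lemma~\ref{lem:N0_cap}), so this edge forms a cut of size $1$ separating $x_0$ from $R_0$; hence the capacity is at most $1$. Since $\Network_0(\m)$ has capacity $1$ by Lemma~\ref{lem:N0_cap} and $\Network_1(\m)$ is $\Network_0(\m)$ with one extra receiver, the capacity is exactly $1$ provided a $(k,k)$ solution exists for some alphabet; this follows because $\Network_1(\m)$ is solvable over any alphabet of size coprime to $\m$ by Lemma~\ref{lem:N1_lin} (taking, say, a field of prime order not dividing $\m$), which settles~(a).

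For part~(b), whenever $\Char{\F}$ does not divide $\m$, Lemma~\ref{lem:N1_lin} already gives a scalar linear solution over $\F$ (a field is a \niceRModule{} with $R = \F$, and $\Char{\F}\nmid\m$ makes $\m$ invertible). A $(1,1)$ linear solution witnesses linear capacity at least $1$, and the cut argument above caps it at $1$, giving equality. The genuine work is in part~(c).

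For part~(c), the achievability half is constructive: when $\Char{\F}\mid\m$ I would exhibit a $(k,n)$ linear solution with $k = 2\m+1$ and $n = 2\m+2$ (so $k/n = 1 - \frac{1}{2\m+2}$), built so that within each $\B(\m+1)$-style block the ``sum of all but one'' edge functions behave as in the scalar solution but with one extra coordinate of redundancy routed so that $R_x$ can recover $x_0$ despite $\m$ being non-invertible. Concretely, I would use the matrix relations $a_1,\dots,a_q \yields b_1,\dots,b_r$ of Definition~\ref{def:yields} together with Lemmas~\ref{lem:mat_2} and~\ref{lem:mat_3} to package the decoding at each $R_i$ and at $R_x$, reserving enough rank in the edge vectors $e_0,\dots,e_\m$ that their combined image recovers $x_0$ directly rather than through the obstructed factor $\m^{-1}$. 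The converse half — that no linear solution over a field of characteristic dividing $\m$ can do better than $1 - \frac{1}{2\m+2}$ — is the main obstacle. The strategy is a rank-counting argument: assuming a $(k,n)$ linear solution, the edge functions inside the block still satisfy the Property~$P(\m)$-style linear constraints, so Lemma~\ref{lem:N0_lin}'s vector analogue forces the matrices $\ccc{\?}{i}{j}$ for the $e_i$ to be scalar multiples of those for $e$; when $\Char{\F}\mid\m$, summing the $\m+1$ decoding constraints at $R_x$ produces a factor of $\m$ that annihilates in $\F$, so the rank available to recover $x_0$ is deficient by exactly one coordinate per application.

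I would make this precise via Lemma~\ref{lem:mat_1}: the rank submodularity bound \eqref{eq:mat_1_1} applied to the composition of the $R_x$ decoding map with the stacked edge maps forces
\begin{align*}
  \Rank{\text{(recovery of }x_0\text{)}} &\le n(\m+1) - \m k
\end{align*}
while successful decoding of $x_0$ demands this rank be at least $k$; combining with the block constraints and the vanishing of $\m$ modulo $\Char{\F}$ yields $k \le \tfrac{2\m+1}{2\m+2}\,n$. The delicate point I expect to fight with is pinning down the exact constant $2\m+2$ rather than a looser bound: this requires tracking that there are precisely $\m+1$ ``interior'' edges $e_0,\dots,e_\m$ each of dimension $n$ feeding $R_x$, against $\m$ independent cancellation constraints plus the single cut edge, so that the rank deficit is exactly $n/(2\m+2)$ fraction of the total. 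I would verify the constant by checking it against the achievability construction, ensuring the upper and lower bounds meet at $1 - \frac{1}{2\m+2}$.
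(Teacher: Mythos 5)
Your parts (a) and (b) follow the paper: achievability via Lemma~\ref{lem:N1_lin}, converse via the single-edge cut between $u$ and $v$ (equivalently, Lemma~\ref{lem:N0_cap} plus the containment of $\Network_0(\m)$ in $\Network_1(\m)$). For part (c) you also correctly target a $(2\m+1,2\m+2)$ code, but you only promise the construction rather than give it (and you refer to a ``$\B(\m+1)$-style block,'' whereas $\Network_1(\m)$ contains a single $\B(\m)$ block); the paper exhibits the code explicitly, component by component, and verifies every receiver's decoding, which is a nontrivial chunk of the proof.

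The genuine gap is the converse of part (c). The inequality you propose — decoding rank at most $n(\m+1)-\m k$, needed to be at least $k$ — rearranges to $(\m+1)k \le (\m+1)n$, i.e.\ $k \le n$, so as stated it can never beat the trivial bound, let alone produce the constant $\frac{2\m+1}{2\m+2}$. The same failure afflicts any entry-by-entry dimension count of what $R_x$ and the $R_i$ jointly provide: the paper's list $L$ consists of $\QQ{\?}{0}\,e_0$, the vectors $e_1,\dots,e_{\m}$, and the $\m$ vectors $\QQ{i}{e}\sum_{j\ne i}\MM{\?}{j}x_j$, and bounding each term separately yields only $H(L) \le (2\m+1)n - \m k - \Rank{\DD{\?}{0}}$, which is useless when $\Rank{\DD{\?}{0}}$ is small. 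The missing idea is the paper's balancing argument built around the parameter $\rho = \Rank{[R_{1}\ \cdots\ R_{\m}]}$, the rank of the concatenated top $k\times k$ blocks of the $\MM{\?}{j}$ after a change of basis that normalizes $\MM{\?}{0}$ and $\DD{0}{e}$. On one side, $\Rank{\DD{\?}{0}} \ge \rho$, saving $\rho$ in the $\QQ{\?}{0}e_0$ term; on the other side, there is a rank-$(k-\rho)$ matrix $Y$ with $Y\MM{\?}{i}=0$ for all $i\ge 1$, and each $\QQ{i}{e}$ can be chosen to contain the rows of $Y$, so that all $\m$ vectors $\QQ{i}{e}\sum_{j\ne i}\MM{\?}{j}x_j$ share the identical $k-\rho$ components $Y\MM{\?}{0}x_0$, cutting their joint entropy by $(\m-1)(k-\rho)$. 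Only the sum of these two savings gives $H(L)\le(2\m+1)n-(\m+1)k$ uniformly in $\rho$, whence $(\m+1)k \le H(L)$ forces $k/n \le \frac{2\m+1}{2\m+2}$. Without this mechanism (or an equivalent device) for exploiting the overlap among the $\QQ{i}{e}$ terms, your rank-counting plan stalls exactly at the point you flagged as ``delicate'' but did not resolve.
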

\newcommand{\ProofOflemNOnecap}{
\begin{proof}[Proof of Lemma \ref{lem:N1_cap}]
Since a scalar linear solution over a finite-field alphabet
is a special case of a scalar linear solution over a \niceRModule,
by Lemma~\ref{lem:N1_lin},
$\Network_1(\m)$ is scalar linear solvable 
over any finite-field alphabet whose characteristic does not divide $\m$,
so the network's linear capacity for such finite-field alphabets is at least $1$.
By Lemma~\ref{lem:N0_cap}, 
network $\Network_0(\m)$ has capacity equal to $1$,
and since $\Network_1(\m)$ contains $\Network_0(\m)$,
the capacity of $\Network_1(\m)$ is at most $1$.
Thus, both
the capacity of $\Network_1(\m)$ 
and its linear capacity for finite-field alphabets whose characteristic does not divide $\m$
are equal to $1$.

To prove part (c),
consider a $(k,n)$ fractional linear solution for $\Network_1(\m)$
over a finite field $\F$
whose characteristic divides $\m$.
Since $\Div{\Char{\F}}{\m}$, 
we have $\m = 0$ in $\F$.

We have $x_i \in \F^k$ and $e, e_i \in \F^n$,
with $n \geq k$, since the capacity is one.
There exist $n \times k$ coding matrices
$\MM{\?}{j},\MM{i}{j}$ with entries in $\F$,
such that the edge vectors can be written as:
\begin{align}
  e_i &= \sum_{\substack{j = 0 \\ j \ne i }}^{\m} \MM{i}{j} \, x_j 
    && (i = 0,1, \dots, \m)
    \label{eq:N1_cap_1} \\
  e &= \sum_{j = 0}^{\m} \MM{\?}{j} \, x_j 
    \label{eq:N1_cap_2}
\end{align}
and there exist $k \times n$ decoding matrices $\DD{i}{e}, \,  \DD{\?}{i}$
with entries in $\F$, such that
each $x_i$ can be linearly decoded at $R_i$
from the two $n$-vectors $e$ and $e_i$ by:
\begin{align}
  R_i: \ \ 
    x_i &= \DD{i}{e} \, e + \DD{\?}{i} \, e_i 
    & & (i = 0,1,\dots, \m).
    \label{eq:N1_cap_3}
\end{align}
Since receiver $R_x$ linearly recovers $x_0$ 
from $e_0,e_1,\dots,e_{\m}$, 
we can write
\begin{align}
  e_0,e_1,\dots, e_{\m} \; \longrightarrow \; x_0 .
\label{eq:N1_cap_4}
\end{align}

For each $i=0,1 \dots, \m$, if we set $x_i = 0$ in \eqref{eq:N1_cap_3},
then we get the following relationship among the remaining $\m$ messages
(since $e_i$ does not depend on $x_i$):
\begin{align}
0 &= \DD{i}{e} \, \sum_{\substack{j=0 \\ j\ne i}}^{\m} \MM{\?}{j} \, x_j   + \DD{\?}{i} \, e_i
  & & (i = 0,1,\dots,\m)
  &   \Comment{\eqref{eq:N1_cap_1}, \eqref{eq:N1_cap_2}, \eqref{eq:N1_cap_3}} ,
\label{eq:N1_cap_5} 
\end{align}
and thus
\begin{align}
    e_i & \longrightarrow \;  \DD{i}{e} \, \sum_{\substack{j=0\\ j\ne i}}^{\m} \MM{\?}{j} \, x_j  
       & & (i = 1,2,\dots,\m)
       & &\Comment{\eqref{eq:N1_cap_5}} 
        \label{eq:N1_cap_6} \\
    \sum_{j=1}^{\m} \MM{\?}{j} \, x_j  & \longrightarrow \; \DD{\?}{0} \, e_0
  & & 
  & &\Comment{\eqref{eq:N1_cap_5}} .
        \label{eq:N1_cap_7}
\end{align} 

For each $i=1,\dots,\m$, 
let $\QQ{i}{e}$ be the matrix $Q$ in Lemma~\ref{lem:mat_3}
corresponding to when $\DD{i}{e}$ is the matrix $A$ in Lemma~\ref{lem:mat_3}.
Similarly, let $\QQ{\?}{0}$ be the matrix $Q$ in Lemma~\ref{lem:mat_3} corresponding to taking $A$ to be $\DD{\?}{0}$.
Let $L$ be the following list of $2\m +1$ vector functions of $x_0,x_1, \dots, x_{\m}$:
\begin{align*}
  &  \QQ{\?}{0} \, e_0,  \\  
  & e_i,  
    & & (i = 1,2,\dots,\m)  \\
  & \QQ{i}{e} \, \sum_{\substack{j=0\\ j\ne i}}^{\m} \MM{\?}{j} \, x_j 
    & & (i = 1,2,\dots,\m) .  
\end{align*}


We have
\begin{align}
L & \longrightarrow  \DD{i}{e} \, \sum_{\substack{j=0\\ j\ne i}}^{\m} \MM{\?}{j} \, x_j 
    & & (i = 1,2,\dots,\m)
    & & \Comment{\eqref{eq:N1_cap_6}}
       \label{eq:N1_cap_8} \\
L & \longrightarrow \sum_{\substack{j=0\\ j\ne i}}^{\m} \MM{\?}{j} \, x_j 
    & & (i = 1,2,\dots,\m) 
    & & \Comment{Lemma~\ref{lem:mat_3}, \eqref{eq:N1_cap_8}}
        \label{eq:N1_cap_9} ,
\end{align}
and
\begin{align}
& \left\{ \sum_{\substack{j=0\\ j\ne i}}^{\m} \MM{\?}{j} \, x_j :
  \  i = 1,2,\dots,\m \right\} \notag \\
&\; \; \longrightarrow 
  \sum_{i=1}^{\m} \sum_{\substack{j=0\\ j\ne i}}^{\m} \MM{\?}{j} \, x_j 
      \notag\\
  &\; \; = \m \, \MM{\?}{0} \, x_{0} + (\m-1) \, \sum_{j = 1}^{\m} \MM{\?}{j} \, x_j 
      \notag \\
  &\; \; =  - \sum_{j = 1}^{\m} \MM{\?}{j} \, x_j  
    & & \Comment{$\Div{\Char{\F}}{\m}$} .
     \label{eq:N1_cap_10}  
\end{align}

Thus we have
\begin{align}
L & \longrightarrow \sum_{j=1}^{\m} \MM{\?}{j} \, x_j   
  & & \Comment{\eqref{eq:N1_cap_9}, \eqref{eq:N1_cap_10}}     \label{eq:N1_cap_11} \\
L & \longrightarrow \DD{\?}{0} \, e_0
  & & \Comment{\eqref{eq:N1_cap_7}, \eqref{eq:N1_cap_11}}     \label{eq:N1_cap_12} \\
L & \longrightarrow  e_0
  & & \Comment{Lemma~\ref{lem:mat_3}, \eqref{eq:N1_cap_12}}    \label{eq:N1_cap_13} \\
L & \longrightarrow x_{0}  
  & & \Comment{\eqref{eq:N1_cap_4}, \eqref{eq:N1_cap_13}}     \label{eq:N1_cap_14} \\
x_{0},\ \ \ 
   \sum_{j=1}^{\m} \MM{\?}{j} \, x_j 
  &\longrightarrow  e
  & & \Comment{\eqref{eq:N1_cap_2}}  
  \label{eq:N1_cap_15} \\
L &\longrightarrow  e 
  & & \Comment{\eqref{eq:N1_cap_11}, \eqref{eq:N1_cap_14}, \eqref{eq:N1_cap_15}}
  \label{eq:N1_cap_16} \\
L &\longrightarrow x_i 
    \; \; \quad \quad \quad \quad  (i = 1,2,\dots, \m) 
    & & \Comment{\eqref{eq:N1_cap_3}, \eqref{eq:N1_cap_16}} .
    \label{eq:N1_cap_17} 
\end{align}

We will now bound the number of independent entries in the list $L$. 
By equating message components in equation \eqref{eq:N1_cap_3},
we have:
\begin{align}
I_k &= \DD{i}{e} \, \MM{\?}{i} 
  \ \ \ \ \ \ \ \ \ (i = 0,1,\dots,\m)
  && \Comment{\eqref{eq:N1_cap_1}, \eqref{eq:N1_cap_2}, \eqref{eq:N1_cap_3}} 
  \label{eq:N1_cap_18}.
\end{align}

Since each $\DD{i}{e}$ and $\MM{\?}{i}$ are $k \times n$ and $n \times k$, respectively,
and $k \leq n$,
the rank of each matrix is at most $k$,
but we also have
\begin{align*}
\min\left(\Rank{\DD{i}{e}} \!,\, \Rank{\MM{\?}{i}} \right) &\geq  \Rank{\DD{i}{e} \, \MM{\?}{i}}  
  & & \Comment{\eqref{eq:mat_1_2}}  \\
& = \Rank{I_k} = k
  & & \Comment{\eqref{eq:N1_cap_18}}  ,
\end{align*}
and so $\Rank{\DD{i}{e}} = \Rank{\MM{\?}{i}} = k$,
which, by Lemma~\ref{lem:mat_3}, implies 
\begin{align}
  \Rank{\QQ{i}{e}} = n - k \ \ \ \ \ \ \ \ \ \ 
    (i = 1,2,\dots, \m). 
    \label{eq:N1_cap_19}
\end{align}

Since $\Rank{\MM{\?}{0}} = k$,
by Lemma~\ref{lem:mat_2}, there exists an $n\times n$ 
nonsingular matrix $W$ over $\F$ such that
\begin{align}
W \MM{\?}{0} 
  = \left[
    \begin{array}{c}
      I_k \\
      0_{(n-k) \times k}
    \end{array}
  \right] . \label{eq:N1_cap_20}
\end{align}
Partition each of the $k \times n$ matrix products $\DD{i}{e}W^{-1}$ 
into a $k \times k$ block $T_{i}$ to the left of a $k \times (n-k)$ block $U_{i}$:
\begin{align}
  \DD{i}{e} W^{-1} = \left[T_{i} \ \ \ \ U_{i} \right] 
    \label{eq:N1_cap_21}
\end{align}
and then let $V$ be the following $n\times n$ matrix over $\F$:
\begin{align}
V = \left[
  \begin{array}{cc}
    I_k & U_{0} \\
    0_{(n-k)\times k} & I_{n-k}
  \end{array}
  \right]. 
    \label{eq:N1_cap_22}
\end{align}
It is easy to verify that
\begin{align}
V^{-1} = \left[
  \begin{array}{cc}
    I_k & -U_{0} \\
    0_{(n-k)\times k} & I_{n-k}
  \end{array}
  \right] . 
    \label{eq:N1_cap_23}
\end{align}
For each $i = 0,1,\dots, \m$,
change the network encoding and decoding matrices from $\MM{\?}{i}$ and $\DD{i}{e}$, respectively, to
\begin{align}
  \MM{\?}{i}' &= V W \MM{\?}{i} 
    \label{eq:N1_cap_24}\\
  \DD{i}{e}' &= \DD{i}{e} W^{-1}V^{-1}. 
    \label{eq:N1_cap_25}
\end{align}
We have 
\begin{align}
  T_{0} &= \DD{0}{e} W^{-1} W \MM{\?}{0} = I_k 
    & & \Comment{\eqref{eq:N1_cap_18}, \eqref{eq:N1_cap_20}, \eqref{eq:N1_cap_21}}
    \label{eq:N1_cap_26}
\end{align}
and therefore
\begin{align}
  \MM{\?}{0}' &= \left[
    \begin{array}{c}
      I_k \\
      0
    \end{array}
  \right] 
    & & \Comment{\eqref{eq:N1_cap_20}, \eqref{eq:N1_cap_22}, \eqref{eq:N1_cap_24}} 
      \notag \\
  \DD{0}{e}' & = \left[I_k \ \ \ \ 0\right] 
    & & \Comment{\eqref{eq:N1_cap_21}, \eqref{eq:N1_cap_23}, \eqref{eq:N1_cap_25}, \eqref{eq:N1_cap_26}} .
      \label{eq:N1_cap_27}
\end{align}
In this case,
\begin{align*}
  e' &= \sum_{j = 0}^{\m} \MM{\?}{j}' \, x_j
\end{align*}
and for each $i = 0,1,\dots,\m$,
the messages can be recovered by:
\begin{align*}
  \DD{i}{e}' e' + \DD{\?}{i} e_i 
    &= \DD{i}{e} W^{-1} V^{-1} \sum_{j=0}^\m V W \MM{\?}{j} \, x_j + \DD{\?}{i} e_i 
      & & \Comment{\eqref{eq:N1_cap_24}, \eqref{eq:N1_cap_25}} \\
    &= \DD{i}{e} e + \DD{\?}{i} e_i = x_i  
      & & \Comment{\eqref{eq:N1_cap_2}, \eqref{eq:N1_cap_3}} .
\end{align*}
Thus, this linear code still provides a $(k,n)$ solution.

Partition each of the matrices $\MM{\?}{i}$ 
into a $k\times k$ block $R_{i}$ on top of a $(n-k) \times k$ block $S_{i}$:
\begin{align}
\def\arraystretch{2.2}
\MM{\?}{i} 
  &= \left[ \begin{array}{cc}
    R_{i}\\
    S_{i}
  \end{array}
 \right]  
  \label{eq:N1_cap_28}
\end{align}
and let 
$$\rho = \Rank{\left[R_{1}  \ \ \ \dots \ \ \ R_{\m} \right]}$$
where $\left[R_{1}  \ \ \ \dots \ \ \ R_{\m} \right]$ is the concatenation of the matrices
$R_{i}$ into a $k\times \m k$ matrix.
Clearly $\rho \le k$.
We have
\begin{align*}
  \DD{\?}{0} \, \sum_{j = 1}^{\m} \MM{0}{j} \, x_j 
    = \DD{\?}{0} \, e_0 
    &=- \DD{0}{e} \, \sum_{j = 1}^{\m} \MM{\?}{j} \, x_j 
      && \Comment{\eqref{eq:N1_cap_1}, \eqref{eq:N1_cap_5}}\\  
    & = - \sum_{j = 1}^{\m} R_{j} \, x_j
      && \Comment{\eqref{eq:N1_cap_27}, \eqref{eq:N1_cap_28}}.  
\end{align*}
This gives us
\begin{align*}
  \DD{\?}{0} \, [\MM{0}{1}  \ \ \ \dots \ \ \ \MM{0}{\m}] 
    & = -\left[R_{1}  \ \ \ \dots \ \ \ R_{\m} \right], 
\end{align*}
which implies 
\begin{align}
  \Rank{\DD{\?}{0}}
    &\geq \Rank{[R_{1} \ \ \ \dots \ \ \ R_{\m} ]} = \rho
      & & \Comment{\eqref{eq:mat_1_2}}  \notag \\
  &\therefore \; \Rank{\QQ{\?}{0}} = n - \Rank{\DD{\?}{0}} \leq n - \rho.
    \label{eq:N1_cap_29} 
\end{align}

Since the matrix $\left[R_{1}  \ \ \ \dots \ \ \ R_{\m} \right]$ 
has rank $\rho$,
there exists a $k\times k$ permutation matrix $P$ such that
the first $\rho$ rows of $P \, \left[R_{1}  \ \ \ \dots \ \ \ R_{\m} \right]$ are linearly independent
and the remaining $k - \rho$ rows are linear combinations of those first $\rho$ rows. 
Thus, there exists a $(k - \rho) \times k$ matrix $X$,
whose right-most $k-\rho$ columns form $I_{k-\rho}$, and such that
\begin{align}
  X P \, \left[R_{1}  \ \ \ \dots \ \ \ R_{\m} \right]  
    &= 0_{(k-\rho) \times \m  k}
    \label{eq:N1_cap_30}.
\end{align}

$X$ and $P$ are $(k - \rho) \times k$ and $k\times k$ respectively,
thus the rank of $X$ is at most $(k - \rho)$ and the rank of $P$ is at most $k$.
Since the right-most columns of $X$ form $I_{k -\rho}$, 
we have $\Rank{X} = k - \rho$,
and since $P$ is a permutation matrix,
we have $\Rank{P} = k$.
Since $XP$ is $(k-\rho)\times k$, we have 
\begin{align*}
  k - \rho &\geq \Rank{X P}   \\
        & \geq  \Rank{X} + \Rank{P} - k &\Comment{\eqref{eq:mat_1_1}}  \\
        & = (k - \rho) + k - k = k - \rho 
\end{align*}
and thus $\Rank{X P} = k - \rho$. \\
Define a $(k - \rho) \times n$ matrix $Y$ by
concatenating the product $XP$ with an all-zero matrix as follows:
$Y = \left[X P \ \ \ \ \ 0_{(k-\rho)\times(n-k)} \right]$. 
For each $i = 1,2,\dots, \m$ we have 
\begin{align}
Y \MM{\?}{i} &= \left[XP \ \ \ \ \ 0_{(k-\rho)\times(n-k)} \right] 
  \, \left[ \begin{array}{cc}
      R_{i}\\
      S_{i}
    \end{array}
  \right] 
  = 0_{(k-\rho) \times k} 
  & & \Comment{\eqref{eq:N1_cap_28}, \eqref{eq:N1_cap_30}}.
 \label{eq:N1_cap_31}
\end{align}

Since, for each $i=1,2,\dots,\m$, 
we have $Y \MM{\?}{i} = 0_{(k-\rho) \times k}$
and by \eqref{eq:N1_cap_18},
$\DD{i}{e} \MM{\?}{i} = I_k$,
the rows of $Y$ and the rows of $\DD{i}{e}$ 
are linearly independent.
(If $v$ is a nontrivial linear combination of rows of $\DD{i}{e}$,
then $v \MM{\?}{i} \ne 0$;
if $v'$ is a nontrivial linear combination of rows of $Y$,
then $v' \MM{\?}{i} = 0$,
so $v \ne v'$).
Therefore, by Lemma~\ref{lem:mat_3},
we may choose $\QQ{i}{e}$ 
such that its first $k- \rho$ rows 
are the rows of $Y$. 
By \eqref{eq:N1_cap_19}, each vector function
\[\QQ{i}{e} \, \sum_{\substack{j=0 \\ j\ne i}}^{\m} \MM{\?}{j} \, x_j \] 
in the list $L$ has dimension $n-k$,
but the first $k- \rho$ components of each such vector function can be written as 
\begin{align}
Y \, \sum_{\substack{j=0 \\ j\ne i}}^{\m} \MM{\?}{j} \, x_j
  &=  Y \MM{\?}{0} \, x_{0} & \Comment{\eqref{eq:N1_cap_31}}
  \label{eq:N1_cap_32}.
\end{align}

If we view the message vectors $x_0, x_1, \dots, x_{\m}$ as random variables,
each of whose $k$ components are independent and uniformly distributed over the field $\F$,
then we have the following entropy (using logarithms with base $|\F|$) upper bounds:
\begin{align*}
  H\left(\QQ{\?}{0} e_0 \right) 
    &\le n-\rho 
      & & \Comment{\eqref{eq:N1_cap_29}} \\
  H\left (e_{1}, \dots, e_{\m} \right) 
    &\le  \m n
      & & \Comment{$e_i \in \F^n$} \\
  H \left( \QQ{i}{e} \, \sum_{\substack{j=0 \\ j\ne i}}^{\m} \MM{\?}{j} \, x_j  \; : \;  i = 1,2,\dots, \m \right)
    &\le \m \, (n-k)  - (\m - 1) \,  (k - \rho) 
      & & \Comment{\eqref{eq:N1_cap_19}, \eqref{eq:N1_cap_32}}  .
\end{align*}

Therefore, the entropy of all of the vector functions in the list $L$ is bounded 
by summing these bounds:
\begin{align}
  H(L) &\le  ( \m (n-k) - (\m-1)(k-\rho) ) + (n-\rho) + \m n 
      \notag \\
    &= (2\m +1) n - (\m+1)k -(k-\rho)(\m-2) 
      \notag \\
    &\le (2\m + 1)n - (\m+1)k   
      & & \Comment{$\rho \leq k$ and $\m \geq 2$}. 
      \label{eq:N1_cap_33}
\end{align}
But then we have:
\begin{align*}
  (\m+1)k &= H(x_0, x_1, \dots, x_{\m}) 
    & & \Comment{$x_i \in \F^k$} 
    \\
  &\le H(L) 
    & & \Comment{\eqref{eq:N1_cap_14}, \eqref{eq:N1_cap_17}} 
    \\
  &\le (2\m + 1) \, n - (\m + 1) \, k 
    & & \Comment{\eqref{eq:N1_cap_33}} 
    \\
  \therefore \dfrac{k}{n} &\le \dfrac{2\m +1}{2\m + 2}. 
\end{align*}
Thus the linear capacity of $\Network_1(\m)$
for any finite-field alphabet whose characteristic divides $\m$
is upper bounded by
$$1 - \frac{1}{2 \m + 2}. $$

For each $y \in \F^m$, 
let $[y]_i$ denote the $i$th component 
of $y$. 
To show the upper bound on the linear capacity is tight,
consider a ($2 \m + 1, 2 \m + 2 $) fractional linear code 
for $\Network_1(\m)$ over any finite-field alphabet whose characteristic divides $\m$,
given by:

\begin{align*}
[e_0]_l &= \left\{ \begin{array}{ll}
      \dsum_{\substack{ j = 1 \\ j \ne l}}^{\m} [x_j]_l \; \; \; \; \; \; \; \; \; \; \; \; \; \; \; \;
        &  (l = 1,2, \dots, \m)  \\[1cm]
      \dsum_{j = 1}^{\m} [x_j]_l 
         &  (l = \m+1,\dots,2\m +1) \\[.75cm]
      \dsum_{j =2}^{\m} [x_j]_{j}
        &  (l = 2\m +2)
\end{array} \right. \\
\\
[e_i]_l  &= \left\{ \begin{array}{ll}
    \dsum_{\substack{ j = 0 \\ j \ne i \\ j \ne l}}^{\m} [x_j]_l 
      &  (l = 1,2,\dots, \m \text{ and } l \ne i ) \\[1.25cm]
    \displaystyle [x_0]_{\m+1} + \sum_{ \substack{ j  = 1 \\ j \ne i }}^{\m} [x_j]_{j} 
      &  (l = i) \\[1cm]
    \dsum_{\substack{ j = 0 \\ j \ne i}}^{\m} [x_j]_l 
      &  (l = \m+1,\dots,2\m +1) \\[1cm]
    [x_{0}]_{\m + 1 + i} 
      &  (l  = 2 \m +2)
    \end{array} \right.
      & (i = 1,2,\dots, \m) \\
\\
[e]_l &= \left\{ \begin{array}{ll}
      \dsum_{\substack{j = 0 \\ j \ne l}}^{\m} [x_j]_l 
         &  (l = 1,2,\dots, \m)  \\[1cm]
      \dsum_{j= 0}^{\m} [x_j]_l    
        &  (l = \m+1,\dots,2\m +1) \\[.75cm]
      [x_0]_{\m+1} + \dsum_{j = 1}^{\m} [x_j]_{j} 
        & (l = 2\m +2) .
    \end{array} \right.
\end{align*}
For each $l = 1,2,\dots,\m$,
we have
\begin{align}
  \sum_{\substack{i = 0 \\ i \ne l}}^{\m} [e_i]_l  
    & = \sum_{\substack{i = 0 \\ i \ne l}}^{\m} \sum_{\substack{ j = 0 \\ j \ne i \\ j \ne l }}^{\m} [x_j]_l 
    = (\m-1) \, \sum_{\substack{j = 0 \\ j \ne l}}^{\m} [x_j]_l 
    = - \sum_{\substack{j = 0 \\ j \ne l}}^{\m} [x_j]_l 
        & & \Comment{$\Div{\Char{\F}} {\m}$}.
      \label{eq:N1_cap_34}
\end{align}

For each $i = 1,2,\dots,\m$,
the receivers within $\B(\m)$ can linearly recover all $2\m+1$ components of their respective demands by:
\begin{align*}
R_{0}: \ \ &  
    [e]_{l} - [e_0]_{l}  = [x_{0}]_{l} 
       & & (l = 1,2,\dots, 2\m +1 ) \\
      \\
  R_i: \ \ & 
      {[e]}_l - {[e_i]}_l = [x_i]_l 
        & & (l = 1,2,\dots, 2\m + 1 \text{ and } l \ne i )\\
     & {[e]}_{2\m +2} - {[e_i]}_i = [x_i]_i
\end{align*}
and the additional receiver can linearly recover all components of $x_0$ by:
\begin{align*}
R_x: \ \ &  
   - [e_0]_l - \sum_{\substack{i = 0 \\ i \ne l}}^{\m} [e_i]_l
     = [x_{0}]_l
      && (l = 1,2,\dots,\m)
      & & \Comment{\eqref{eq:N1_cap_34}}\\ 
     & [e_1]_{1} - [e_0]_{2\m +2}  = [x_0]_{\m+1} \\
     &  [e_{l-\m-1}]_{2\m +2} = [x_{0}]_l
      && (l = \m+2,\dots, 2\m +1) .
\end{align*}

Thus, the code is in fact a solution for $\Network_1(\m)$.
\end{proof}
}

\clearpage

\section{The network \texorpdfstring{$\Network_2(\m,\n)$}{N2(m,w)}} \label{sec:N2}
\psfrag{N01(N+1)}{$\B^{(1)}(\m +1)$}
\psfrag{N02(N+1)}{$\B^{(2)}(\m +1)$}
\psfrag{N0M(N+1)}{$\B^{(\n)}(\m +1)$}

\psfrag{S11}{\tiny$\Ss{1}{1}$}
\psfrag{S12}{\tiny$\Ss{1}{2}$}
\psfrag{S1N+1}{\tiny$\Ss{1}{\m+1}$}

\psfrag{S21}{\tiny$\Ss{2}{1}$}
\psfrag{S22}{\tiny$\Ss{2}{2}$}
\psfrag{S2N+1}{\tiny$\Ss{2}{\m+1}$}

\psfrag{SM1}{\tiny$\Ss{\n}{1}$}
\psfrag{SM2}{\tiny$\Ss{\n}{2}$}
\psfrag{SMN+1}{\tiny$\Ss{\n}{\m+1}$}

\psfrag{Sz}{$S_z$}

\psfrag{e10}{$\ee{1}{0}$}
\psfrag{e11}{$\ee{1}{1}$}
\psfrag{e12}{$\ee{1}{2}$}
\psfrag{e1N+1}{$\ee{1}{\m +1}$}

\psfrag{e20}{$\ee{2}{0}$}
\psfrag{e21}{$\ee{2}{1}$}
\psfrag{e22}{$\ee{2}{2}$}
\psfrag{e2N+1}{$\ee{2}{\m +1}$}

\psfrag{eM0}{$\ee{\n}{0}$}
\psfrag{eM1}{$\ee{\n}{1}$}
\psfrag{eM2}{$\ee{\n}{2}$}
\psfrag{eMN+1}{$\ee{\n}{\m +1}$}

\psfrag{x11}{$\xx{1}{1}$}
\psfrag{x12}{$\xx{1}{2}$}
\psfrag{x1N+1}{$\xx{1}{\m+1}$}

\psfrag{x21}{$\xx{2}{1}$}
\psfrag{x22}{$\xx{2}{2}$}
\psfrag{x2N+1}{$\xx{2}{\m+1}$}

\psfrag{xM1}{$\xx{\n}{1}$}
\psfrag{xM2}{$\xx{\n}{2}$}
\psfrag{xMN+1}{$\xx{\n}{\m+1}$}

\psfrag{z}{$z$}

\psfrag{Rz}{\small$R_{z}$}
\begin{figure}[h]
  \centering
  \leavevmode
  \hbox{\epsfxsize=0.975\textwidth\epsffile{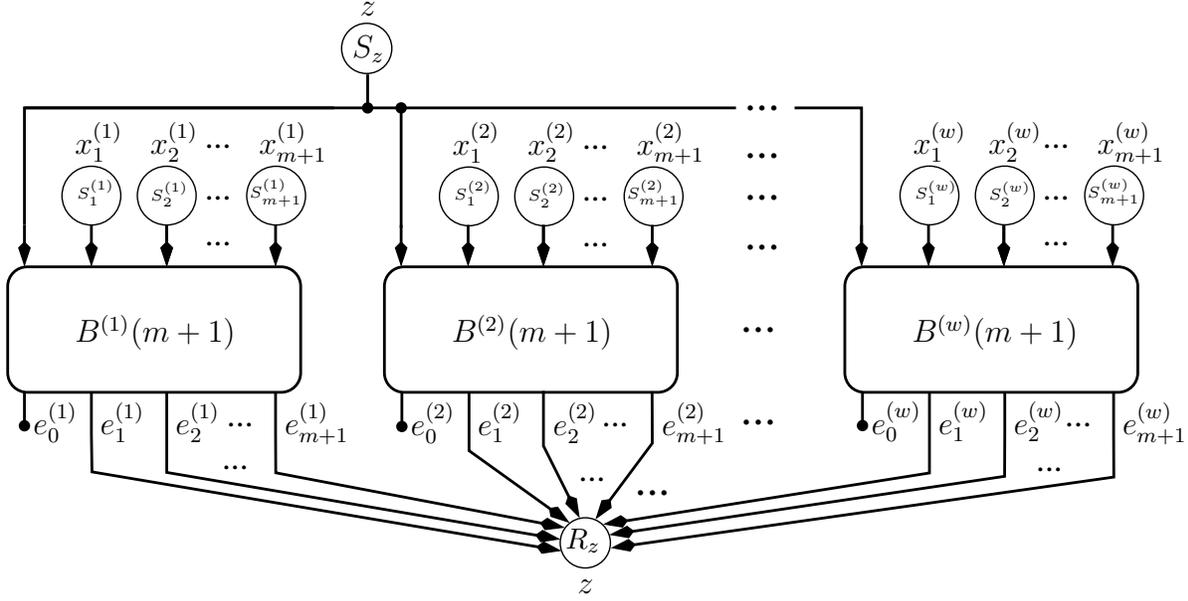}}
  \caption{Network $\Network_2(\m ,\n)$ is constructed from $\n$ blocks of $\B(\m +1)$
  together with
  $\n(\m+1) +1$ source nodes and an additional receiver $R_z$.
  The $l$th block is denoted $\B^{(l)}(\m + 1)$,
  and the nodes and edge symbols within $\B^{(l)}(\m+1)$ are denoted with a superscript $l$.
  For each $l = 1,2,\dots,\n$, the block $\B^{(l)}(\m+1)$ has inputs from source nodes $\Ss{l}{1},\Ss{l}{2},\dots,\Ss{l}{\m+1}$,
  which generate messages $\xx{l}{1},\xx{l}{2}, \dots, \xx{l}{\m+1}$.
  The shared message $z$ is generated by source node $S_z$ and
  is the $0$th input to each $\B^{(l)}(\m+1)$.
  Each of the output edges of $\B^{(l)}(\m+1)$, except the $0$th, is an input to the shared receiver $R_z$,
  which demands the shared message $z$.
  }
  \label{fig:N2}  
\end{figure}
%


For each $\m \geq 2$ and $\n \geq 1$, network $\Network_2(\m,\n)$ is
defined in Figure~\ref{fig:N2}.
We note that $\Network_2(\m,1)$ and $\Network_1(\m+1)$ have similar structure,
but in network $\Network_1(\m+1)$ each of the output edges of $\B(\m+1)$ is connected to $R_x$,
and in network $\Network_2(\m,1)$ all but one of the output edges of $\B(\m+1)$ are connected to $R_z$.
This disconnected edge causes the difference in solvability properties of the two networks.
Lemmas~\ref{lem:N2_non}, \ref{lem:N2_solv}, \ref{lem:N2_lin}, and \ref{lem:N2_cap}
demonstrate that network $\Network_2(\m,\n)$ is:
\begin{enumerate}
\itemsep0em 
  \item non-linear solvable over an alphabet of size $\m \n$, if $\n \geq 2$, 
  \item solvable over alphabet $\A$ only if $|\A|$ is not relatively prime to $\m$,
  \item scalar linear solvable over \niceRModule{} $G$ if and only if $\Char{R}$ divides $\m$,
  \item asymptotically linear solvable over finite field $\F$ if and only if $\Char{\F}$ divides $\m$.
\end{enumerate}

\begin{remark}
For each $\m \geq 2$ and $\n \geq 1$
network $\Network_2(\m,\n)$ has 
$ \n  (\m + 1) + 1$  source nodes,
$ \n  (2 \m + 6)$ intermediate nodes,
and 
$ \n  (\m + 2) + 1$ receiver nodes,
so the total number of nodes in $\Network_2(\m,\n)$
is $4 \m \n + 9 \n + 2$.
  \label{rem:N2_nodes}
\end{remark}

\subsection{Solvability conditions of \texorpdfstring{$\Network_2(\m,\n)$}{N2(m,w)}} \label{ssec:N2_solv}

For each positive integer $\m$, 
we will view the ring $\Z_{\m}$ as the set $\{0,1,\dots,\m-1\}$ together with
addition and multiplication modulo $\m$.
This ring will be used to construct non-linear solutions
in Lemmas~\ref{lem:N2_P}, \ref{lem:N2_non}, \ref{lem:N3_P}, and \ref{lem:N3_non}.

For each $\m, \n \geq 2$ and $a \in \Z_{\m\n}$,
a receiver cannot uniquely determine the symbol $a$ in $\Z_{\m\n}$
from the symbol $\n a \in \Z_{\m\n}$
since $\n$ is not invertible in $\Z_{\m\n}$. 
For example, if a receiver receives $\n a = 0$ in $\Z_{\m\n}$, then the symbol
$a$ could be any element in the set $\{0, \m, 2\m, \dots, (\n-1)\m\}$.
The following lemma describes a technique
for recovering the value of $a$
via a decoding function $\psi$
from the $\n$-tuple $\n \pi_1(a), \n \pi_2(a), \dots, \n \pi_{\n}(a)$,
where each $\pi_i$ is a particular permutation of $\Z_{\m\n}$.
This technique will then be used
to show that network $\Network_2(\m,\n)$ is solvable over an alphabet of size $\m \n$.

\begin{lemma}
  For each $\m \geq 2$ and $\n \geq 1$,
  there exist permutations $\pi_1,\pi_2, \dots,\pi_{\n}$ of $\Z_{\m\n}$
  and a mapping $\psi: \Z_{\m \n}^{\n} \to \Z_{\m \n}$
  such that for all $a \in \Z_{\m\n}$
  $$ \psi\left( \n \pi_1(a), \n \pi_2(a), \,  \dots, \, \n \pi_{\n}(a) \right) = a .$$
  \vspace{-.75cm}
\label{lem:N2_P}
\end{lemma}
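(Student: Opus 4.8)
The plan is to reduce the problem to recovering the ``high'' and ``low'' parts of $a$ separately. Write each $a \in \{0,1,\dots,\m\n - 1\}$ uniquely as $a = q\m + r$ with $q \in \{0,1,\dots,\n-1\}$ and $r \in \{0,1,\dots,\m-1\}$. The first observation is that although $\n\,\pi_i(a)$ does not determine $\pi_i(a)$, it does determine $\pi_i(a) \bmod \m$: in $\Z_{\m\n}$ one has $\n b = \n b'$ iff $\m \mid (b - b')$, so $b \mapsto \n b$ is a bijection from $\{0,1,\dots,\m-1\}$ onto the multiples of $\n$ in $\Z_{\m\n}$. Hence from the tuple $(\n\pi_1(a),\dots,\n\pi_{\n}(a))$ the decoder $\psi$ can extract the residues $\rho_i := \pi_i(a)\bmod\m$, and the whole task becomes: choose the $\pi_i$ so that $a$ is recoverable from $(\pi_1(a)\bmod\m,\dots,\pi_{\n}(a)\bmod\m)$.

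Next I would take the permutations to be \emph{block cyclic shifts}. Pick functions $f_1,\dots,f_{\n}\colon\{0,\dots,\n-1\}\to\{0,\dots,\m-1\}$ (to be specified) and define $\pi_i(q\m + r) = q\m + \bigl((r + f_i(q)) \bmod \m\bigr)$. Each $\pi_i$ fixes $q$ and cyclically shifts $r$ inside the block $\{q\m,\dots,q\m + \m - 1\}$, so it is a genuine permutation of $\Z_{\m\n}$, and $\pi_i(q\m+r)\bmod\m = (r + f_i(q))\bmod\m$. Taking $f_1 \equiv 0$ makes $\rho_1 = r$, so $r$ is recovered immediately; and for $i \ge 2$ the decoder forms $(\rho_i - \rho_1)\bmod\m = f_i(q)$, thereby learning the whole tuple $(f_2(q),\dots,f_{\n}(q))$. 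It then remains only to pick $f_2,\dots,f_{\n}$ so that $q \mapsto (f_2(q),\dots,f_{\n}(q))$ is injective on $\{0,\dots,\n-1\}$.

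This injectivity is the only quantitative point, and it is where $\m \ge 2$ enters: since $\m \ge 2$ we have $\m^{\n-1} \ge 2^{\n-1} \ge \n$, so every $q \in \{0,\dots,\n-1\}$ has a unique base-$\m$ expansion $q = \sum_{j=1}^{\n-1} d_j(q)\,\m^{j-1}$ with digits $d_j(q)\in\{0,\dots,\m-1\}$. Setting $f_{j+1}(q) = d_j(q)$ makes $q \mapsto (f_2(q),\dots,f_{\n}(q))$ injective, so $q$, and hence $a = q\m + r$, is determined. Finally I would write $\psi$ explicitly: on an input $(y_1,\dots,y_{\n})$ in which each $y_i$ is a multiple of $\n$, say $y_i = \n\rho_i$ with $\rho_i\in\{0,\dots,\m-1\}$, set $r = \rho_1$, recover the digits $d_j = (\rho_{j+1}-\rho_1)\bmod\m$, put $q = \sum_{j=1}^{\n-1} d_j\,\m^{j-1}$, and output $q\m + r$; on any other input $\psi$ may be defined arbitrarily, since only inputs of the stated form arise.

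The steps are routine; the only things to watch are that the block-shift maps really are permutations and that the single counting inequality $\n \le \m^{\n-1}$ holds in all cases, including the degenerate $\n = 1$ (where the expansion is empty, $\pi_1$ is the identity, and $a = r$ is recovered outright). I do not anticipate a genuine obstacle, but the place a naive attempt fails is instructive: one cannot simply read the digits of $q$ off of unmodified residues, because each $\pi_i$ must be a permutation, which forces every residue value in $\{0,\dots,\m-1\}$ to be attained equally often. The block-shift device sidesteps exactly this, since cyclically shifting $r$ by $f_i(q)$ keeps every residue equally represented regardless of the choice of $f_i$.
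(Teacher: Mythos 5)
Your proposal is correct and follows essentially the same route as the paper's proof: both rest on the key observation that $\n b \in \Z_{\m\n}$ determines exactly $b \bmod \m$, and both then use block-preserving permutations that cyclically shift the remainder $r$ by an amount depending on the quotient $q$, with one identity permutation serving as the reference from which $r$ itself is read off. The only difference is how $q$ is encoded in the shifts: you spread its base-$\m$ digits across the permutations (which needs the counting inequality $\n \le \m^{\n-1}$), whereas the paper shifts by $1$ exactly on block $l$ under $\pi_l$ (a one-hot encoding), a special case of your framework.
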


\newcommand{\ProofOflemNTwoP}{
\begin{proof}[Proof of Lemma \ref{lem:N2_P}]
Assume $\n = 1$
and let $\pi_1$ and $\psi$ be identity permutations.
For each $a \in \Z_{\m\n}$ we have 
$$\psi(\n \pi_1(a) ) = \psi(a) = a.$$

Assume $\n > 1$.
By the Euclidean Division Theorem,
for each integer $y$,
there exist unique integers $q_y,r_y$
such that $y = q_y  \m + r_y$
and $0 \leq r_y < \m$.
We have 
$\n y = \n (q_y \m + r_y)$,
which implies
\begin{align}
  \n  y  & = \n r_y \; \; \Mod{\m \n}
  \label{eq:N2_P_1} .
\end{align}
For all integers $x,y$ we have
\begin{align}
  \n x = \n y \; \; \; \Mod{\m \n}
    & \Longleftrightarrow 
    \n r_x = \n r_y \; \; \; \Mod{\m \n} 
      & & \Comment{\eqref{eq:N2_P_1}} \notag \\
    & \Longleftrightarrow r_x = r_y
      & & \Comment{$0 \leq r_x,r_y < \m$}.
      \label{eq:N2_P_2}
\end{align}

For each $a = q_a \m + r_a \in \Z_{\m\n}$ such that $r_a \in \{0,1,\dots,\m-1\}$,
let $\hat{r}_a$ be the unique integer in $\{0,1,\dots,\m-1\}$ such that $\hat{r}_a = r_a + 1 \; \; \Mod{\m}$,
and define permutations $\pi_1,\pi_2,\dots,\pi_{\n}$ of $\Z_{\m\n}$ as follows:
\begin{align}
  \pi_l(a) &= \left\{ \begin{array}{lc}
        q_a  \m + \hat{r}_a & \text{if }  q_a = l \\
        q_a  \m + r_a &  \text{ otherwise} 
  \end{array} \right.
  & & (l = 1,2,\dots,\n-1) 
  \label{eq:N2_P_3}\\  
  \pi_{\n}(a) &= a = q_a \m + r_a 
  \label{eq:N2_P_4}.   
\end{align}
Note that for all $l = 1,2,\dots,\n-1$, 
the (non-linear) permutation $\pi_l$
modifies the remainder $r_a$ if $q_a = l$
and otherwise acts as the identity permutation.
Also, $\pi_{\n}$ is the identity permutation.
Since $a \in \Z_{\m\n}$,
we have $0 \leq q_a, < \n$.

For each $a \in \Z_{\m\n}$ 
we will show the mapping
$a \longmapsto (\n \pi_{1}(a), \dots, \n \pi_{\n}(a))$
is injective.
For each $a,b \in \Z_{\m\n}$, suppose
\begin{align}
  \n \pi_l(a) &= \n  \pi_l(b) \; \; \Mod{\m \n}
    & & (l = 1,2,\dots,\n)
    \label{eq:N2_P_5},
\end{align}
where $a = q_a  \m + r_a$ and
$b = q_b  \m + r_b$,
with $0 \leq r_a,r_b < \m$ and $0 \leq q_a,q_b <\n$.
Then we have
\begin{align}
  \n \pi_{\n}(a) &= \n \pi_{\n}(b)
    & & \! \! \! \Mod{\m \n}
    & & \Comment{\eqref{eq:N2_P_5}} 
    \label{eq:N2_P_6}\\
  \n r_a & = \n r_b
    & & \! \! \! \Mod{\m \n}
    & & \Comment{\eqref{eq:N2_P_1}, \eqref{eq:N2_P_4},\eqref{eq:N2_P_6}} 
    \notag \\
\therefore
  r_a &= r_b
    & &  
    & & \Comment{\eqref{eq:N2_P_2}}
    \label{eq:N2_P_7}.
\end{align}

Let $\hat{r}_b$ be the unique integer in $\{0,1,\dots,\m-1\}$ such that
$\hat{r}_b = r_b + 1\; \; \Mod{\m}$.
If $q_a \ne q_b$, then without loss of generality, $q_b \ne 0$, 
so we have:
\begin{align}
  \n \pi_{q_b}(a) &= \n \pi_{q_b}(b)
    & & \! \! \! \Mod{\m \n} 
    & & \Comment{\eqref{eq:N2_P_5}} 
    \label{eq:N2_P_8}\\
\therefore \,
  \n r_a &=  \n \hat{r}_b
      & & \! \! \! \Mod{\m \n}
      & & \Comment{\eqref{eq:N2_P_1}, \eqref{eq:N2_P_3}, \eqref{eq:N2_P_8}} 
      \notag \\
  \therefore \,
   r_a &= r_a + 1
      & & \! \! \! \Mod{\m}
      & & \Comment{\eqref{eq:N2_P_2}, \eqref{eq:N2_P_7}},
      \notag
\end{align}
which is a contradiction,
so we must have $q_a = q_b$.
Thus $a = b$.

We have shown $\n \pi_l(a) = \n \pi_l(b)\; \; \Mod{\m\n}$ for all $l$
if and only if $a = b$.
Thus $a$ can be uniquely determined from
the $\n$-tuple
$(\n \pi_1(a), \n \pi_2(a), \,  \dots, \, \n \pi_{\n}(a))$.
This implies the existence of the claimed mapping.
\end{proof}
}

\begin{example} 
The following table illustrates Lemma~\ref{lem:N2_P}
for the case $\m = 4$ and $\n = 3$.
\begin{align*}
  \begin{array}{|c|c|c||c|c|c|}
    \hline
    a = \pi_3(a)  & \pi_2(a) & \pi_1(a) & 3 \pi_3(a) & 3 \pi_2(a) & 3 \pi_1(a)  \\[0pt] \hline 
    0   & 0   & 0   & 0   & 0   & 0 \\[0pt]
    1   & 1   & 1   & 3   & 3   & 3 \\[0pt]
    2   & 2   & 2   & 6   & 6   & 6 \\[0pt]
    3   & 3   & 3   & 9   & 9   & 9 \\[0pt] \hline
    4   & 4   & 5   & 0   & 0   & 3 \\[0pt]
    5   & 5   & 6   & 3   & 3   & 6 \\[0pt]
    6   & 6   & 7   & 6   & 6   & 9 \\[0pt]
    7   & 7   & 4   & 9   & 9   & 0 \\[0pt]\hline
    8   & 9   & 8   & 0   & 3   & 0 \\[0pt]
    9   & 10  & 9   & 3   & 6   & 3 \\[0pt]
    10  & 11  & 10  & 6   & 9   & 6 \\[0pt]
    11  & 8   & 11  & 9   & 0   & 9 \\[0pt] \hline
  \end{array}
\end{align*}  
For each $a \in \Z_{12}$,
the triple $(3 \pi_3(a), \, 3 \pi_2(a), \, 3 \pi_1(a)) \in \Z_{12}^3$
is distinct.
\label{ex:N2_P}
\end{example}

Lemma~\ref{lem:N2_P} will be used in the proof of Lemma~\ref{lem:N2_non} 
to show that the receiver $R_z$
can recover the message $z$ from the set of edge symbols $\ee{l}{i}$ where $l = 1,2,\dots,\n$
and $i = 1,2,\dots,\m+1$.

\begin{lemma}
  For each $\m \geq 2$ and $\n \geq 1$,
  network $\Network_2(\m,\n)$ is solvable over an alphabet of size $\m \n$. 
\label{lem:N2_non}
\end{lemma}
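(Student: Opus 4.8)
The plan is to exhibit an explicit (in general non-linear) $(1,1)$ solution over the alphabet $\Z_{\m\n}$, which has size $\m\n$, viewed as the Abelian group $(\Z_{\m\n},+)$ under addition modulo $\m\n$. First I would invoke Lemma~\ref{lem:N2_P} to obtain permutations $\pi_1,\dots,\pi_{\n}$ of $\Z_{\m\n}$ and a map $\psi$ with $\psi(\n\pi_1(a),\dots,\n\pi_{\n}(a))=a$ for every $a\in\Z_{\m\n}$. The governing idea is to design each block $\B^{(l)}(\m+1)$ so that the shared receiver $R_z$ can extract precisely $\n\pi_l(z)$ from the outputs of the $l$th block, and then recover $z$ by feeding the tuple $(\n\pi_1(z),\dots,\n\pi_{\n}(z))$ into $\psi$.

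Concretely, in block $l$ I would use a Property $P(\m+1)$ code (Definition~\ref{def:P}) in which every outer permutation $\sigma$ is the identity, the permutation applied to the shared $0$th input $z$ is $\pi_l$, and the permutations applied to the remaining inputs $\xx{l}{1},\dots,\xx{l}{\m+1}$ are all the identity. The block's output symbols then become
\begin{align*}
  \ee{l}{i} &= \pi_l(z)\oplus\bigoplus_{\substack{j=1\\ j\ne i}}^{\m+1}\xx{l}{j} && (i=1,\dots,\m+1),\\
  \ee{l}{0} &= \bigoplus_{j=1}^{\m+1}\xx{l}{j},
\end{align*}
together with the internal edge symbol $\ee{l}{}=\pi_l(z)\oplus\bigoplus_{j=1}^{\m+1}\xx{l}{j}$. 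Since this code has Property $P(\m+1)$, Lemma~\ref{lem:N0_P} guarantees that each internal receiver recovers its demand; explicitly, receiver $R_i^{(l)}$ forms $\ee{l}{}\ominus\ee{l}{i}$ and inverts the relevant permutation ($\pi_l$ when $i=0$, the identity otherwise).

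It remains to handle the shared receiver $R_z$, which sees only $\ee{l}{1},\dots,\ee{l}{\m+1}$ for each $l$. I would show that, within block $l$, the combination $\n\bigoplus_{i=1}^{\m+1}\ee{l}{i}$ equals $\n\pi_l(z)$. The crux is a coefficient count: the term $\pi_l(z)$ appears in all $\m+1$ of the symbols $\ee{l}{1},\dots,\ee{l}{\m+1}$, whereas each $\xx{l}{j}$ appears in exactly $\m$ of them, so $\bigoplus_{i=1}^{\m+1}\ee{l}{i}$ equals $(\m+1)\pi_l(z)$ together with $\m$ copies of $\bigoplus_{j=1}^{\m+1}\xx{l}{j}$. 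Scaling by $\n$ in $\Z_{\m\n}$ annihilates the message terms, since $\m\n\equiv 0$, while the shared term survives as $\n(\m+1)\pi_l(z)=\n\pi_l(z)$. Gathering $\n\pi_1(z),\dots,\n\pi_{\n}(z)$ across the $\n$ blocks and applying $\psi$ then returns $z$, which completes the solution.

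The main obstacle — and the reason the alphabet size must be $\m\n$ rather than $\m$ — is exactly this last modular cancellation: one has to choose the scaling factor $\n$ so that the multiplicity $\m$ of each interfering message becomes a multiple of $\m\n$ and vanishes, while the shared message is left in the reduced form $\n\pi_l(z)$ that Lemma~\ref{lem:N2_P} was constructed to invert. By contrast, the per-block decoding of the internal receivers is immediate from the Property $P$ structure and requires no further work.
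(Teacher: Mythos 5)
Your proposal is correct and takes essentially the same route as the paper's proof: the identical code over $\Z_{\m\n}$ (with $\pi_l$ applied only to the shared message $z$ and all other permutations trivial), the same per-block decoding $\ee{l}{}\ominus\ee{l}{i}$, and the same recovery at $R_z$ by forming $\n\sum_{i=1}^{\m+1}\ee{l}{i}=\n\pi_l(z)$ in each block and applying $\psi$ from Lemma~\ref{lem:N2_P}. The only cosmetic difference is that you justify the internal receivers via Property $P(\m+1)$ and Lemma~\ref{lem:N0_P}, whereas the paper simply writes out those decoding functions explicitly.
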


\newcommand{\ProofOflemNTwoNon}{
\begin{proof}[Proof of Lemma \ref{lem:N2_non}]

Let $\pi_1,\pi_2,\dots,\pi_{\n}$ and $\psi$ be the permutations and mapping, respectively, from Lemma~\ref{lem:N2_P}.
Define a code for network $\Network_2(\m,\n)$ over the ring $\Z_{\m \n}$
for each $l = 1,2,\dots,\n$ by:
\begin{align*}
  \ee{l}{0} &= \sum_{j = 1}^{\m+1} \xx{l}{j} \\     
  \ee{l}{i} & = \pi_l(z) + \sum_{\substack{j = 1 \\ j \ne i}}^{\m+1} \xx{l}{j} 
      & & (i = 1,2,\dots,\m+1) \\
  \ee{l}{} &= \pi_l(z) + \sum_{j = 1}^{\m+1} \xx{l}{j} .    
\end{align*}
For each $l = 1,2,\dots, \n$,
the receivers within each $\B^{(l)}(\m+1)$ block can recover their respective messages as follows:
\begin{align*}
  \RR{l}{0}: \ \ 
    & \pi_l^{-1} \left( \ee{l}{} - \ee{l}{0} \right) = z \\
  \RR{l}{i}:   \ \ 
     & \ee{l}{} - \ee{l}{i} = \xx{l}{i}     
    & &  (i = 1,2,\dots,\m+1).
\end{align*}
We have
\begin{align}
  \n  \sum_{i = 1}^{\m+1} \ee{l}{i} &= \n (\m + 1) \, \pi_l(z) + \m \n \sum_{j = 1}^{\m+1} \xx{l}{j} 
    & & (l = 1,2,\dots,\n)    \notag \\
    & = \n \pi_l(z)
    & & \Comment{$\m \n = 0$ mod $\m \n$} 
    \label{eq:N2_non_1}.
\end{align}
Receiver $R_z$ can recover $z$ from its inputs as follows:
\begin{align*}
  R_z:\ \  & \psi\left(\n \sum_{i = 1}^{\m+1} \ee{1}{i}, \, 
                       \n \sum_{i = 1}^{\m+1} \ee{2}{i}, \, 
                       \dots, \, 
                       \n  \sum_{i = 1}^{\m+1} \ee{\n}{i}   
                \right) \\
     & = \psi\left(\n \pi_1(z), \, \n\pi_2(z), \, \dots, \, \n \pi_{\n}(z) \right) 
        = z
      & & \Comment{\eqref{eq:N2_non_1} and Lemma~\ref{lem:N2_P}}.
\end{align*}  

Thus the network code described above is, in fact, a solution for $\Network_2(\m,\n)$.
\end{proof}
}

In the code given in the proof of Lemma~\ref{lem:N2_non},
if $\n = 1$, then $\pi_1$ and $\psi$ are identity permutations,
so the code is linear.
However if $\n > 1$, then
$\pi_1,\pi_2,\dots,\pi_{\n-1}$ are generally non-linear,
so the code is non-linear.  

\begin{lemma}
  For each $\m \geq 2$ and $\n \geq 1$,
  if network $\Network_2(\m,\n)$ is solvable over alphabet $\A$,
  then $\m$ and $\vert \A \vert$ are not relatively prime.
  \label{lem:N2_solv}
\end{lemma}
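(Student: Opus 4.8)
The plan is to prove the contrapositive: assuming $\Network_2(\m,\n)$ is solvable over $\A$, I suppose for contradiction that $\GCD{\vert\A\vert}{\m}=1$ and then construct two message assignments that differ in the shared message $z$ yet produce identical inputs at the receiver $R_z$. This contradicts the requirement that $R_z$ recover $z$.

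First I would isolate the structure of each block. Within $\Network_2(\m,\n)$, each block $\B^{(l)}(\m+1)$, together with the source nodes $S_z,\Ss{l}{1},\dots,\Ss{l}{\m+1}$ feeding it and its internal receivers $\RR{l}{0},\dots,\RR{l}{\m+1}$, is a copy of $\Network_0(\m+1)$. Hence the restriction of the solution to this block is a $(1,1)$ solution of $\Network_0(\m+1)$, so by Lemma~\ref{lem:N0_P} it satisfies Property $P(\m+1)$: there are an Abelian group $(\A,\oplus)$ and permutations $\pipi{l}{0},\dots,\pipi{l}{\m+1}$ and $\sigsig{l}{0},\dots,\sigsig{l}{\m+1}$ (all possibly depending on $l$) with
$$\ee{l}{i}=\sigsig{l}{i}\!\left(\pipi{l}{0}(z)\oplus\bigoplus_{\substack{j=1\\ j\ne i}}^{\m+1}\pipi{l}{j}(\xx{l}{j})\right)\qquad(i=1,\dots,\m+1),$$
and these are exactly the edges that $R_z$ observes from block $l$ (the $0$th output edge is not connected to $R_z$).

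The key algebraic point is that since $\GCD{\vert\A\vert}{\m}=1$ and $(\A,\oplus)$ has order $\vert\A\vert$, the map $g\mapsto\RepAdd{g}{\oplus}{\m}$ is a bijection of $(\A,\oplus)$: if $\RepAdd{g}{\oplus}{\m}=0_{\A}$ then the additive order of $g$ divides $\m$, but it also divides $\vert\A\vert$, which is coprime to $\m$, forcing $g=0_{\A}$; injectivity on a finite set gives surjectivity. Using this, I would fix any $\hat z\ne z$ and, for each $l$, solve $\RepAdd{c^{(l)}}{\oplus}{\m}=\pipi{l}{0}(z)\ominus\pipi{l}{0}(\hat z)$ for some $c^{(l)}\in\A$. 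Define two assignments: in the first, the shared message is $z$ and $\xx{l}{i}=\pipiinv{l}{i}(0_{\A})$; in the second, the shared message is $\hat z$ and $\xxh{l}{i}=\pipiinv{l}{i}\!\left(c^{(l)}\right)$, for all $l$ and all $i=1,\dots,\m+1$. A direct computation then shows that in both assignments $\ee{l}{i}=\sigsig{l}{i}(\pipi{l}{0}(z))$ for every $l$ and every $i=1,\dots,\m+1$: in the first case the inner sum vanishes, and in the second the $\m$ copies of $c^{(l)}$ (one for each index $j\in\{1,\dots,\m+1\}$ other than $i$) combine to $\pipi{l}{0}(z)\ominus\pipi{l}{0}(\hat z)$, exactly cancelling the change caused by replacing $z$ with $\hat z$. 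Thus $R_z$ receives identical inputs under the two assignments while the demanded message differs, the desired contradiction.

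The essential difficulty is recognizing that the information reaching $R_z$ from each block depends on $z$ only through a quantity that is effectively an $\m$-fold multiple in the block's group, so that coprimality of $\vert\A\vert$ with $\m$ allows the remaining block messages to absorb any change in $z$ invisibly to $R_z$. A secondary point to handle carefully is that the group $(\A,\oplus)$ and the permutations $\pipi{l}{j},\sigsig{l}{i}$ vary from block to block; this is harmless because only $z$ is shared across blocks, while the compensating elements $c^{(l)}$ and the messages $\xxh{l}{i}$ are chosen independently per block, so a single global choice of $\hat z$ suffices.
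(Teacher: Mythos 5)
Your proposal is correct and takes essentially the same route as the paper's proof: it decomposes $\Network_2(\m,\n)$ into copies of $\Network_0(\m+1)$, invokes Lemma~\ref{lem:N0_P} to obtain Property $P(\m+1)$ in each block, uses coprimality of $\vert\A\vert$ and $\m$ to invert the $\m$-fold addition map in each block's group, and constructs two message assignments that produce identical inputs at $R_z$ while differing in $z$. The only difference is cosmetic: the paper compensates asymmetrically (placing the nonzero block messages in block $1$ of one assignment and in blocks $2,\dots,\n$ of the other), whereas you compensate uniformly in every block of the second assignment.
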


\newcommand{\ProofOflemNTwoSolv}{
\begin{proof}[Proof of Lemma \ref{lem:N2_solv}]
Assume $\Network_2(\m,\n)$ is solvable over $\A$.
For each $l = 1,2,\dots, \n$,
the block $\B^{(l)}(\m+1)$ together with source nodes
$S_z, \Ss{l}{1},\Ss{l}{2},\dots,\Ss{l}{\m+1}$
forms a copy of $\Network_0(\m+1)$,
so by Lemma~\ref{lem:N0_P}, 
the edge functions within block $\B^{(l)}(\m+1)$
must satisfy Property $P(\m+1)$.
Thus, for each $l$, 
there exists an Abelian group $\left(\A, \oplus_l \right)$, 
with identity $0_l \in \A$,
and permutations $\pipi{l}{0}, \pipi{l}{1}, \dots, \pipi{l}{\m+1}$ and 
$\sigsig{l}{0}, \sigsig{l}{1}, \dots, \sigsig{l}{\m+1}$ of $\A$,
such that
the edges carry the symbols:
\begin{align}
  \ee{l}{0} &= \sigsig{l}{0} \left( \bigoplus_{j = 1}^{\m+1} \pipi{l}{j}\left( \xx{l}{j} \right)  \right) 
     \notag \\
  \ee{l}{i} &= \sigsig{l}{i} \left( \pipi{l}{0}(z) \oplus_l \bigoplus_{\substack{j = 1 \\ j \ne i}}^{\m+1} \pipi{l}{j}\left( \xx{l}{j} \right)  \right)
    & & (i = 1,2,\dots,\m+1)
      \label{eq:N2_solv_0}  \\  
  \ee{l}{} &= \pipi{l}{0}(z) \oplus_i \bigoplus_{j = 1}^{\m+1} \pipi{l}{j}\left( \xx{l}{j} \right)  \notag,
\end{align}
where $\bigoplus$ in each of the previous three equations denotes $\oplus_l$.

Now suppose to the contrary that $\m$ and $\vert \A \vert$ are relatively prime.
Then by Cauchy's Theorem, 
for each group $(\A,\oplus_l)$
there are no non-identity elements 
whose order divides $\m$.
That is, for each $\oplus_l$ and each $a \in \A$, we have $\RepAdd{a}{\oplus_l}{\m} = 0_l$
if and only if $a = 0_l$.
So for each $l = 1,2,\dots,\n$
let $a, b \in \A$.
We have 
\begin{align*}
  \RepAdd{ a } { \oplus_l } { \m } = \RepAdd{ b } { \oplus_l } { \m } 
    & \Longleftrightarrow \RepAdd{ \left(a \ominus_l b \right) } { \oplus_l }{ \m } = 0_l
    & & \Comment{$\left(\A, \oplus_l \right)$ Abelian} \\
  & \Longleftrightarrow a =  b
    & & \Comment{$\GCD{\m}{\vert\A\vert}=1$}.
\end{align*}
Thus, for each $l$ the mapping
$a \longmapsto \RepAdd{ a }{ \oplus_l }{ \m }$ 
is injective on the finite set $\A$ and therefore is bijective,
and its inverse
$\phi_l: \, \A \to \A$
satisfies
\begin{align}
  \RepAdd{\phi_l(a)} {\oplus_l}{\m}  &= a 
  & & (l = 1,2,\dots,\n) 
  \label{eq:N2_solv_1} .
\end{align}

For each $a \in \A$ such that $a \ne 0_1$, let
\begin{align}
  f_l(a)  &= \pipi{l}{0} \left( \pipiinv{1}{0}(0_1) \right) \, \ominus_l \, \pipi{l}{0} \left( \pipiinv{1}{0}(a) \right)
    && (l = 2,\dots,\n)
      \label{eq:N2_solv_2},
\end{align}
and  define two collections of messages as follows:
\begin{align*}
  \xx{1}{j}   &= \pipiinv{1}{j}(\phi_1(a)))
    & & (j = 1,2,\dots, \m+1)  \\
   z       & = \pipiinv{1}{0}(0_1) \\
  \xx{l}{j}   &= \pipiinv{l}{j}(0_l) 
    & & \Stack{(l = 2,\dots,\n)}{(j = 1,2,\dots, \m+1)} \\
    \\
  \xxh{1}{j} &= \pipiinv{1}{j}(0_1) 
    & & (j = 1,2,\dots, \m+1)\\
   \hat{z}      &=  \pipiinv{1}{0}(a) \\
  \xxh{l}{j} &= \pipiinv{l}{j} \left(  \phi_l( f_l(a) )  \right)
      & & \Stack{(l = 2,\dots,\n)}{(j = 1,2,\dots, \m+1).} 
\end{align*}
Since $a \ne 0_1$
and $\pipi{1}{0}$ is bijective, it follows that $z \ne \hat{z}$.
By Property $P(\m + 1)$
and \eqref{eq:N2_solv_0},
for each $i = 1,2,\dots, \m+1$ we have:
\begin{align*}
  \ee{1}{i} &= \sigsig{1}{i} \left(  \RepAdd{\phi_1(a)}{\oplus_1}{\m}  \right) = \sigsig{1}{i}(a)
        && \Comment{\eqref{eq:N2_solv_1}} \\
  \ee{l}{i} &= \sigsig{l}{i} \left( \pipi{l}{0}\left( \pipiinv{1}{0}(0_1) \right) \right)
    & & (l = 2, \dots, \n)
\end{align*}
for the messages $\xx{l}{j},z$, and
\begin{align*}
  \ee{1}{i} &= \sigsig{1}{i}\left(  a  \right) \\
  \ee{l}{i} &= \sigsig{l}{i} \left( \pipi{l}{0} \left( \pipiinv{1}{0}(a) \right) \oplus_l \RepAdd{ \phi_l(f_l(a)) }{\oplus_l}{\m}  \right)
        & & (l = 2, \dots, \n) \\
      &= \sigsig{l}{i} \left( \pipi{l}{0} \left( \pipiinv{1}{0}(a) \right)\oplus_l f_l(a)  \right)
        & & \Comment{\eqref{eq:N2_solv_1}} \\
      &= \sigsig{l}{i}\left( \pipi{l}{0} \left( \pipiinv{1}{0}(0_1) \right) \right) 
        & & \Comment{\eqref{eq:N2_solv_2}}.
\end{align*}
for the messages $\xxh{l}{j},\hat{z}$.
For both collections of messages,
the edge symbols $\ee{l}{i}$ are the same for all
$l = 1,2,\dots,\n$ and $i = 1,2,\dots,\m+1$,
and therefore the decoded value $z$ at $R_z$ must be the same.
However, this contradicts the fact that $z \ne \hat{z}$.
\end{proof}
}

Lemmas~\ref{lem:N2_non} and \ref{lem:N2_solv} together provide a partial characterization
of the alphabet sizes over which $\Network_2(\m,\n)$ is solvable.
However, these conditions are sufficient for showing our main results.

\subsection{Linear solvability conditions of \texorpdfstring{$\Network_2(\m,\n)$}{N2(m,w)}} \label{ssec:N2_lin}

Lemma~\ref{lem:N2_lin} characterizes a necessary and sufficient condition
for the scalar linear solvability of $\Network_2(\m,\n)$ over \niceRModules.

\begin{lemma}
  Let $\m \geq 2$ and $\n \geq 1$,
  and let $G$ be a \niceRModule.
  Then network $\Network_2(\m,\n)$ is scalar linear solvable over $G$
  if and only if $\Char{R}$ divides $\m$.
  \label{lem:N2_lin}
\end{lemma}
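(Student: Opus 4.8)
The plan is to follow the template of Lemma~\ref{lem:N1_lin}, proving necessity and sufficiency separately, and to exploit that each block $\B^{(l)}(\m+1)$ together with its source nodes $S_z,\Ss{l}{1},\dots,\Ss{l}{\m+1}$ forms a copy of $\Network_0(\m+1)$, with the shared message $z$ in the role of the $0$th message. Thus Lemma~\ref{lem:N0_lin} pins down the in-block coefficients, and the only genuinely new constraint comes from the shared receiver $R_z$. The key structural observation is that $R_z$ receives $\ee{l}{1},\dots,\ee{l}{\m+1}$ but not $\ee{l}{0}$, and $\ee{l}{0}$ is exactly the one edge of block $l$ that does not carry $z$; so $R_z$ in fact sees every $z$-bearing edge, which is what makes recovery of $z$ possible at all.

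For necessity I would suppose a scalar linear solution over the standard $R$-module $G$ exists. Writing block $l$ in $\Network_0(\m+1)$ form, Lemma~\ref{lem:N0_lin} supplies invertible $\ddd{l}{\?}{i}$ and $\ccc{l}{\?}{i}$ together with $\ccc{l}{i}{j} = -(\ddd{l}{\?}{i})^{-1}\,\ddd{l}{i}{e}\,\ccc{l}{\?}{j}$. I would write the decoding at $R_z$ as $z=\bigoplus_{l=1}^{\n}\bigoplus_{i=1}^{\m+1}\ddd{z}{l}{i}\act\ee{l}{i}$ and set $B^{(l)}_i = \ddd{z}{l}{i}\,(\ddd{l}{\?}{i})^{-1}\,\ddd{l}{i}{e}$. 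Equating the coefficient of each $\xx{l}{j}$ to $0_R$, substituting the $\Network_0$ identity, and cancelling the invertible factor $\ccc{l}{\?}{j}$ on the right collapses the equation to $\sum_{i\ne j}B^{(l)}_i = 0_R$ for every $j=1,\dots,\m+1$. With $T^{(l)} := \sum_{i=1}^{\m+1}B^{(l)}_i$ this forces $B^{(l)}_j = T^{(l)}$ for all $j$, hence $T^{(l)} = (\m+1)T^{(l)}$, i.e.\ $\m\,T^{(l)} = 0_R$. Equating instead the coefficient of $z$ to $1_R$ gives, after the same substitution, $-\sum_{l=1}^{\n}T^{(l)}\,\ccc{l}{\?}{0} = 1_R$; multiplying through by $\m$ and pulling $\m$ inside each summand makes every term contain $\m\,T^{(l)} = 0_R$, so $\m\cdot 1_R = 0_R$, which is precisely $\Div{\Char{R}}{\m}$.

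For sufficiency I would assume $\Div{\Char{R}}{\m}$, so that $\m\,g = (\m 1_R)\act g = 0_G$ for all $g\in G$ (using $1_R\act g = g$). In each block take the all-identity code $\ee{l}{0}=\bigoplus_{j=1}^{\m+1}\xx{l}{j}$, $\ee{l}{i}=z\oplus\bigoplus_{j\ne i}\xx{l}{j}$ for $i=1,\dots,\m+1$, and $\ee{l}{}=z\oplus\bigoplus_{j=1}^{\m+1}\xx{l}{j}$. Each in-block receiver then decodes as in $\Network_0(\m+1)$: $\RR{l}{0}$ recovers $z=\ee{l}{}\ominus\ee{l}{0}$ and $\RR{l}{i}$ recovers $\xx{l}{i}=\ee{l}{}\ominus\ee{l}{i}$. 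Finally $R_z$ computes $\bigoplus_{i=1}^{\m+1}\ee{1}{i} = (\m+1)z \oplus \bigoplus_{j=1}^{\m+1}\m\,\xx{1}{j} = (\m+1)z = z$, where $\m\,\xx{1}{j}=0_G$ and $(\m+1)z = z$ both use $\m\,g=0_G$. Hence every demand is met and the code is a scalar linear solution.

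The main obstacle is the forward bookkeeping: correctly importing the per-block identities of Lemma~\ref{lem:N0_lin}, recognizing that the missing $0$th edge at $R_z$ is harmless because it is the unique $z$-free edge, and isolating the auxiliary quantity $T^{(l)}$ that is annihilated by $\m$. Once $\m\,T^{(l)}=0_R$ is in hand, the single ``multiply by $\m$'' step applied to the $z$-coefficient equation forces $\m\cdot 1_R = 0_R$ --- the precise dual of how the $\Network_1(\m)$ argument instead extracts invertibility of $\m$, which is what accounts for the opposite divisibility condition in the two lemmas.
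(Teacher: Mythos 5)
Your proposal is correct and follows essentially the same route as the paper's proof: decompose $\Network_2(\m,\n)$ into copies of $\Network_0(\m+1)$, import the coefficient identities of Lemma~\ref{lem:N0_lin}, equate message components at $R_z$, and combine the $\xx{l}{j}$-coefficient equations (summed over $j$, which is what your $B^{(l)}_j = T^{(l)}$, $\m\,T^{(l)} = 0_R$ bookkeeping amounts to) with the $z$-coefficient equation to force $\m\,1_R = 0_R$; your converse code is the identical identity-coefficient solution. The only differences are cosmetic: the paper multiplies by $\cccinv{l}{\?}{j}\,\ccc{l}{\?}{0}$ and sums the double sum directly rather than introducing $T^{(l)}$, but the algebra is the same.
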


\newcommand{\ProofOflemNTwoLin}{
\begin{proof}[Proof of Lemma \ref{lem:N2_lin}]
For any ring $R$ with multiplicative identity $1_R$,
the characteristic of $R$ divides $\m$ if and only if $\m = \m \, 1_R = 0_R$,
so it suffices to show that
for each $\m, \n$ and each \niceRModule{} $G$, 
network $\Network_2(\m,\n)$ is scalar linear solvable
over $G$ if and only if $\m = 0_R$.

Assume network $\Network_2(\m,\n)$ is scalar linear solvable 
over \niceRModule{} $G$.
The messages are drawn from $G$, and
there exist $\ccc{l}{i}{j}, \ccc{l}{\?}{j} \in R$,
such that for each $l = 1,2,\dots,\n$, 
the edge symbols can be written as:
\begin{align}
\ee{l}{0} &= \bigoplus_{j = 1}^{\m+1} \left( \ccc{l}{0}{j} \act \xx{l}{j} \right)
  \label{eq:N2_lin_1} \\  
\ee{l}{i} &= \left( \ccc{l}{i}{0} \act  z \right)
    \oplus \bigoplus_{\substack{j = 1\\ j\ne i}}^{\m+1} \left( \ccc{l}{i}{j} \act \xx{l}{j} \right)
  & &  (i = 1,2, \dots,\m+1) 
  \label{eq:N2_lin_2} \\
\ee{l}{} &= \left( \ccc{l}{\?}{0} \act  z\right) \oplus \bigoplus_{j = 1}^{\m+1} \left( \ccc{l}{\?}{j} \act \xx{l}{j} \right)
  \label{eq:N2_lin_3}
\end{align}
and there exist $\ddd{l}{i}{e}, \ddd{l}{\?}{i}, \ddd{l}{z}{i} \in R$,
such that each receiver can linearly recover its respective message
from its received edge symbols by:
\begin{align}
\RR{l}{0} : \ \  & 
  \; \; z  \ = \left( \ddd{l}{0}{e} \act \ee{l}{} \right) \oplus \left( \ddd{l}{\?}{0} \act \ee{l}{0}\right)
    & & (l = 1,2,\dots, \n)
    \label{eq:N2_lin_4} \\    
\RR{l}{i} : \ \ & 
  \xx{l}{i}  = \left( \ddd{l}{i}{e} \act \ee{l}{} \right) \oplus \left( \ddd{l}{\?}{i} \act \ee{l}{i} \right)
    & & \Stack{(l = 1,2,\dots, \n)}{(i = 1, 2,\dots,\m+1)}
    \label{eq:N2_lin_5} \\
R_z : \ \   & 
  \; \; z \  =  \bigoplus_{l = 1}^{\n} \bigoplus_{i = 1}^{\m+1} \left( \ddd{l}{z}{i} \act \ee{l}{i} \right)
    \label{eq:N2_lin_6} .
\end{align}
For each $l = 1,2,\dots,\n$,
the block $\B^{(l)}(\m+1)$ together with source nodes
$S_z, \Ss{l}{1},\Ss{l}{2},\dots,\Ss{l}{\m+1}$
forms a copy of $\Network_0(\m+1)$,
so by Lemma~\ref{lem:N0_lin}
and \eqref{eq:N2_lin_1} -- \eqref{eq:N2_lin_5},
each $\ccc{l}{\?}{i}$ and each $\ddd{l}{\?}{i}$
is invertible in $R$,
and
\begin{align}
 \ccc{l}{i}{j} = - \dddinv{l}{\?}{i} \! \ddd{l}{i}{e} \, \ccc{l}{\?}{j}
      & & \Stack{(l = 1,2,\dots, \n)}{(i,j = 0,1, \dots, \m+1 \text{ and } j \ne i) .}
      \label{eq:N2_lin_7}
\end{align}
Equating message components at $R_z$ yields:
\begin{align}
1_R &= \sum_{l = 1}^{\n} \sum_{i = 1}^{\m+1} \ddd{l}{z}{i} \,  \ccc{l}{i}{0}
    & & 
    & &\Comment{\eqref{eq:N2_lin_2}, \eqref{eq:N2_lin_6}} \notag \\
    & = - \sum_{l = 1}^{\n} \sum_{i = 1}^{\m+1} \ddd{l}{z}{i} \,  \dddinv{l}{\?}{i} \! \ddd{l}{i}{e} \, \ccc{l}{\?}{0}
    & & 
    & & \Comment{\eqref{eq:N2_lin_7}}
    \label{eq:N2_lin_8}
\end{align}
and for each $l = 1,2,\dots,\n$,
\begin{align}
  0_R &=  \sum_{\substack{i = 1 \\ i \ne j}}^{\m+1} \ddd{l}{z}{i} \, \ccc{l}{i}{j}
    & &  (j = 1,2,\dots, \m+1)
    & &\Comment{\eqref{eq:N2_lin_2}, \eqref{eq:N2_lin_6}}
    \notag \\
  &= - \left( \sum_{\substack{i = 1 \\ i \ne j}}^{\m+1} \ddd{l}{z}{i} \, \dddinv{l}{\?}{i} \! \ddd{l}{i}{e} \right) \, \ccc{l}{\?}{j}
    & &  (j = 1,2,\dots, \m+1)
    & &\Comment{\eqref{eq:N2_lin_7}}.
    \label{eq:N2_lin_07}
\end{align}
For each $l=1,2,\dots,\n$,
by multiplying \eqref{eq:N2_lin_07} by
$\cccinv{l}{\?}{j} \! \ccc{l}{\?}{0}$,
we have
\begin{align}
0_R &=  \sum_{\substack{i = 1 \\ i \ne j}}^{\m+1} \ddd{l}{z}{i} \, \dddinv{l}{\?}{i} \! \ddd{l}{i}{e} \, \ccc{l}{\?}{0}
    & &  (j = 1,2,\dots, \m+1)
    & & \Comment{\eqref{eq:N2_lin_07}}
    \notag
\end{align}
and by summing over $j = 1,2,\dots,\m+1$ we have
\begin{align}
0_R 
  &= \sum_{j = 1}^{\m+1} \sum_{\substack{i = 1 \\ i \ne j}}^{\m+1} \ddd{l}{z}{i} \, \dddinv{l}{\?}{i} \! \ddd{l}{i}{e} \, \ccc{l}{\?}{0}
       \notag \\
    &= \m  \sum_{i=1}^{\m+1} \ddd{l}{z}{i} \, \dddinv{l}{\?}{i} \! \ddd{l}{i}{e} \, \ccc{l}{\?}{0}
    \label{eq:N2_lin_9} .
\end{align}

By summing \eqref{eq:N2_lin_9}
over $l = 1,2,\dots,\n$, 
we have
\begin{align*}
  0_R &= \m \sum_{i=1}^{\n} \sum_{i=1}^{\m+1} \ddd{l}{z}{i} \, \dddinv{l}{\?}{i} \! \ddd{l}{i}{e} \, \ccc{l}{\?}{0}
    & & \Comment{\eqref{eq:N2_lin_9}}  \\
  \therefore 0_R  &=  \m 
      & & \Comment{\eqref{eq:N2_lin_8}}.
\end{align*}

To prove the converse,
let $G$ be a \niceRModule{}
such that $\m \, 1_R = 0_R$.
Define a scalar linear code over $G$,
for each $l = 1,2,\dots,\n$, by: 
\begin{align*}
\ee{l}{0} &= \bigoplus_{j = 1}^{\m+1}  \xx{l}{j} \\ 
\ee{l}{i} &=  z \oplus \bigoplus_{\substack{j=1\\ j \ne i}}^{\m+1}  \xx{l}{j} 
  & & (i = 1,2, \dots,\m+1) \\
\ee{l}{} &=   z \oplus \bigoplus_{j = 1}^{\m+1} \xx{l}{j} .
\end{align*}

For each $l = 1,2,\dots, \n$,
the receivers within each $\B^{(l)}(\m+1)$ block can linearly recover their respective messages as follows:
\begin{align*}
  \RR{l}{0}: \ \
    & \ee{l}{} \ominus \ee{l}{0} = z\\
  \RR{l}{i}: \ \ 
    & \ee{l}{} \ominus \ee{l}{i} = \xx{l}{i}
      & & (i = 1,2,\dots,\m+1).
\end{align*}
Receiver $R_z$ can linearly recover $z$ as follows:
\begin{align*}
  R_z: \ \ 
    & \bigoplus_{i = 1}^{\m+1} \ee{1}{i} =  z  \oplus  (\m \, z) \oplus \left( \m  \bigoplus_{j = 1}^{\m+1} \xx{1}{j} \right)
      = z 
    & & \Comment{$\m  = 0_R$}.
\end{align*}
Thus the code is a scalar linear solution for $\Network_2(\m,\n)$.
\end{proof}
}

\noindent
By Lemma~\ref{lem:N2_non},
for every $\m,\n \geq 2,$
the network $\Network_2(\m,\n)$ is solvable over the ring $\Z_{\m \n}$,
but $\Char{\Z_{\m\n}} = \NDiv{\m \n}{\m} $
so by Lemma~\ref{lem:N2_lin}, the solution is necessarily non-linear.

\subsection{Capacity and linear capacity of \texorpdfstring{$\Network_2(\m,\n)$}{N2(m,w)}} \label{ssec:N2_cap}

The following lemma provides a partial characterization of the linear capacity of $\Network_2(\m,\n)$ over finite-field alphabets.

\begin{lemma}
  For each $\m \geq 2$ and $\n \geq 1$,
  network $\Network_2(\m,\n)$ has
  \begin{itemize}
  \itemsep0em 
    \item[(a)] capacity equal to $1$,
    \item[(b)] linear capacity equal to $1$ for any finite-field alphabet whose characteristic divides $\m$,
    \item[(c)] linear capacity upper bounded by
      $1 - \frac{1}{2 \m \n + 2 \n + 1}$
      for any finite-field alphabet whose characteristic does not divide $\m$.
  \end{itemize}
\label{lem:N2_cap}
\end{lemma}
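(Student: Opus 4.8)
The plan is to dispatch parts (a) and (b) quickly and to concentrate on the upper bound in part (c), which follows the template of the proof of Lemma~\ref{lem:N1_cap}(c) but with the divisibility hypothesis reversed. For part (a), each block $\B^{(l)}(\m+1)$ together with its source nodes is a copy of $\Network_0(\m+1)$, so by Lemma~\ref{lem:N0_cap} the capacity of $\Network_2(\m,\n)$ is at most $1$; a $(1,1)$ solution exists by Lemma~\ref{lem:N2_non}, so the capacity equals $1$. For part (b), when $\Char{\F}$ divides $\m$ the network is scalar linear solvable over $\F$ by Lemma~\ref{lem:N2_lin} (a finite field being a \niceRModule), so the linear capacity is at least $1$; since it is at most the capacity $1$, it equals $1$.

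For part (c), fix a $(k,n)$ linear solution over a field $\F$ whose characteristic does not divide $\m$, so that $\m$ is a nonzero---hence invertible---element of $\F$, and write $M=\n(\m+1)$. Writing the block edge functions with coding matrices $\MMM{l}{i}{j},\MMM{l}{\?}{j}$ and decoding matrices $\DDD{l}{i}{e},\DDD{l}{\?}{i}$ as in the proof of Lemma~\ref{lem:N2_lin}, I first equate the $\xx{l}{i}$-components of the decoding $\xx{l}{i}=\DDD{l}{i}{e}\,\ee{l}{}+\DDD{l}{\?}{i}\,\ee{l}{i}$ at $\RR{l}{i}$ to get $I_k=\DDD{l}{i}{e}\,\MMM{l}{\?}{i}$, whence $\Rank{\DDD{l}{i}{e}}=\Rank{\MMM{l}{\?}{i}}=k$ and Lemma~\ref{lem:mat_3} supplies $(n-k)\times n$ matrices $\QQQ{l}{i}{e}$ of rank $n-k$. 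Setting $\xx{l}{i}=0$ in the same decoding gives $\ee{l}{i}\longrightarrow\DDD{l}{i}{e}\bigl(\MMM{l}{\?}{0}\,z+\sum_{j\ne i}\MMM{l}{\?}{j}\,\xx{l}{j}\bigr)$ (the sum running over $1\le j\le\m+1$, $j\ne i$), which combined with $\QQQ{l}{i}{e}$ applied to the same $n$-vector recovers, via Lemma~\ref{lem:mat_3}, the full vector $\MMM{l}{\?}{0}\,z+\sum_{j\ne i}\MMM{l}{\?}{j}\,\xx{l}{j}$.

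I then take $L$ to be the list of the $M$ edge vectors $\ee{l}{i}$ together with the $M$ vectors $\QQQ{l}{i}{e}\bigl(\MMM{l}{\?}{0}\,z+\sum_{j\ne i}\MMM{l}{\?}{j}\,\xx{l}{j}\bigr)$, over all $l$ and all $i=1,\dots,\m+1$. Since $R_z$ recovers $z$ from the $\ee{l}{i}$, we have $L\longrightarrow z$. Summing the recovered vectors over $i=1,\dots,\m+1$ in a fixed block yields $(\m+1)\,\MMM{l}{\?}{0}\,z+\m\sum_{j=1}^{\m+1}\MMM{l}{\?}{j}\,\xx{l}{j}$; subtracting the already-known $z$-term and multiplying by $\m^{-1}$---the single place where the hypothesis $\NDiv{\Char{\F}}{\m}$ is used---gives $L\longrightarrow\sum_{j=1}^{\m+1}\MMM{l}{\?}{j}\,\xx{l}{j}$, hence $L\longrightarrow\ee{l}{}$ and, from the decoding at each $\RR{l}{i}$, $L\longrightarrow\xx{l}{i}$. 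Thus $L$ determines all $1+M$ messages. Finally, using entropies to base $|\F|$, each $\ee{l}{i}$ contributes at most $n$ and each of the $M$ vectors $\QQQ{l}{i}{e}(\cdots)$ at most $n-k$, so $H(L)\le Mn+M(n-k)=2Mn-Mk$; since $(1+M)k=H(\text{messages})\le H(L)$, we get $(1+2M)k\le 2Mn$, i.e.\ $k/n\le 2M/(2M+1)=1-1/(2\m\n+2\n+1)$.

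The main obstacle is routing the recovery through the shared message: unlike a block of $\Network_1$, a single block here is missing the edge $\ee{l}{0}$ and cannot linearly expose $z$ on its own, so one must first obtain $z$ globally at $R_z$ and then feed it back into each block to isolate the block message part $\sum_j\MMM{l}{\?}{j}\,\xx{l}{j}$, with the invertibility of $\m$ doing the decisive algebraic work. The remaining care is purely bookkeeping: the naive per-vector dimension count already produces the odd denominator $2\m\n+2\n+1$, so, unlike in Lemma~\ref{lem:N1_cap}, no further refinement of $H(L)$ is needed, and since part (c) claims only an upper bound, no matching fractional construction is required.
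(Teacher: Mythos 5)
Your proposal is correct and follows essentially the same route as the paper's proof: the same list of $\n(\m+1)$ edge vectors $\ee{l}{i}$ plus $\n(\m+1)$ vectors $\QQQ{l}{i}{e}(\cdot)$ from Lemma~\ref{lem:mat_3}, the same feedback of the globally recovered $z$ into each block using the invertibility of $\m$, and the same entropy count yielding $k/n \le 2\n(\m+1)/(2\n(\m+1)+1)$. The only (immaterial) differences are that you merge the per-block lists $L^{(l)}$ into one list $L$, certify the capacity lower bound in part (a) via the nonlinear solution of Lemma~\ref{lem:N2_non} rather than the linear solution of Lemma~\ref{lem:N2_lin}, and isolate $\sum_j \MMM{l}{\?}{j}\,\xx{l}{j}$ instead of $\m\,\ee{l}{}$ before reconstructing $\ee{l}{}$.
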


\newcommand{\ProofOflemNTwoCap}{
\begin{proof}[Proof of Lemma \ref{lem:N2_cap}]
Since a scalar linear solution over a finite-field alphabet is a special case of a scalar linear solution over a \niceRModule,
by Lemma~\ref{lem:N2_lin}, 
$\Network_2(\m,\n)$ is scalar linear solvable over any finite-field alphabet whose characteristic divides $\m$,
so the linear capacity for such finite-field alphabets is at least $1$.
By Lemma~\ref{lem:N0_cap}, 
network $\Network_0(\m+1)$ has capacity equal to $1$,
and 
the block $\B^{(1)}(\m+1)$ together with the source nodes
$S_z, \Ss{1}{1},\Ss{1}{2},\dots,\Ss{1}{\m+1}$
forms a copy of $\Network_0(\m+1)$,
so the capacity of $\Network_2(\m,\n)$ is at most $1$.
Thus both the capacity of $\Network_2(\m,\n)$
and its linear capacity over any finite-field alphabet whose characteristic divides $\m$
are $1$. 

To prove part (c),
consider a $(k,n)$ fractional linear solution for $\Network_2(\m,\n)$ 
over a finite field $\F$ whose characteristic does not divide $\m$.
Since $\NDiv{\Char{\F}}{\m}$, the integer $m$ is invertible in $\F$.

We have $\xx{l}{j}, z \in \F^k$
and $\ee{l}{i},\ee{l}{} \in \F^n$,
with $n \geq k$, since the capacity is one.
There exist $n \times k$ coding matrices 
$\MMM{l}{\?}{j}, \, \MMM{l}{i}{j}$ 
over $\F$,
such that
for each $l = 1,2,\dots,\n$ 
the edge vectors can be written as:
\begin{align}
  \ee{l}{0} &= \sum_{j = 1}^{\m+1} \MMM{l}{0}{j} \, \xx{l}{j}
    \notag \\
  \ee{l}{i} &= \MMM{l}{i}{0} \, z + \sum_{\substack{j=1 \\ j \ne i}}^{\m+1} \MMM{l}{i}{j}\, \xx{l}{j} 
    & & (i = 1,2,\dots, \m+1)
    \label{eq:N2_cap_1}\\
  \ee{l}{} &= \MMM{l}{\?}{0} \, z + \sum_{j = 1}^{\m+1} \MMM{l}{\?}{j} \, \xx{l}{j} 
    \label{eq:N2_cap_2} 
\end{align}
and there exist $k \times n$ decoding matrices 
$\DDD{l}{i}{e}$ and $\DDD{l}{\?}{i}$ over $\F$,
such that 
for each $l = 1,2,\dots,\n$
the message $\xx{l}{i}$ can be linearly decoded 
at $\RR{l}{i}$ from the $n$-vectors $\ee{l}{i}$ and $\ee{l}{}$ by:
\begin{align} %
  \RR{l}{i}: \ \  
  \xx{l}{i} &= \DDD{l}{i}{e} \, \ee{l}{} + \DDD{l}{\?}{i} \, \ee{l}{i}   
    & & (i = 1,2,\dots, \m+1).
    \label{eq:N2_cap_3}
\end{align}
Since receiver $R_z$ linearly recovers $z$ from 
its incoming edge vectors,
we have
\begin{align}
  & \left\{ \ee{l}{i} \; : \; 
  \Stack{ l = 1,2,\dots, \n }{ i = 1,2,\dots, \m+1} 
  \right\}
    \, \longrightarrow \, z 
  \label{eq:N2_cap_4} .
\end{align}

For each $l = 1,2,\dots,\n$ and $i = 1,2,\dots,\m+1$,
if we set $\xx{l}{i} = 0$ in \eqref{eq:N2_cap_3},
then, since $\ee{l}{i}$ does not depend on $\xx{l}{i}$,
we get the following relationship 
among the remaining messages:
\begin{align}
  0 &=  \DDD{l}{i}{e} \, \left( \MMM{l}{\?}{0} \, z 
      + \sum_{\substack{ j = 1 \\ j \ne i}}^{\m+1} \MMM{l}{\?}{j} \, \xx{l}{j}  \right)
    + \DDD{l}{\?}{i} \, \ee{l}{i}
    & & \Comment{\eqref{eq:N2_cap_1}, \eqref{eq:N2_cap_2}, \eqref{eq:N2_cap_3}} 
    \label{eq:N2_cap_5}
\end{align}
and thus
\begin{align}
  &\ee{l}{i} \, \longrightarrow \, \DDD{l}{i}{e} \, \left( \MMM{l}{\?}{0} \, z 
      + \sum_{\substack{ j = 1 \\ j \ne i}}^{\m+1} \MMM{l}{\?}{j} \, \xx{l}{j}  \right)
    & &  \Stack{(l = 1,2,\dots, \n)}{(i = 1,2,\dots, \m+1)}
    & & \Comment{\eqref{eq:N2_cap_5}}
    \label{eq:N2_cap_6}  .
\end{align}

For each $l = 1,2,\dots,\n$
and $i = 1,2,\dots,\m+1$,
let $\QQQ{l}{i}{e}$ be the matrix $Q$ in Lemma~\ref{lem:mat_3}
corresponding to when $\DDD{l}{i}{e}$ is the matrix $A$
in Lemma~\ref{lem:mat_3}.

For each $l = 1,2,\dots,\n$, let $L^{(l)}$ be the
following list of $2 (\m + 1)$ vector
functions of \\ $z,\xx{l}{1},\xx{l}{2},\dots,\xx{l}{\m+1}$:
\begin{align*}
  &\QQQ{l}{i}{e} \, \left( \MMM{l}{\?}{0} \, z 
      + \sum_{\substack{ j = 1 \\ j \ne i}}^{\m+1} \MMM{l}{\?}{j} \, \xx{l}{j}  \right)
    && (i = 1,2,\dots,\m+1) \\
  &\ee{l}{i}  
    && (i = 1,2,\dots,\m+1) .
\end{align*}

For each $l = 1,2,\dots,\n$ we have
\begin{align}
  & L^{(l)} \, \longrightarrow \, 
    \DDD{l}{i}{e} \, \left( \MMM{l}{\?}{0} \, z 
      + \sum_{\substack{ j = 1 \\ j \ne l}}^{\m+1} \MMM{l}{\?}{j} \, \xx{l}{j}  \right)
    & & (i = 1,2,\dots, \m+1)
    & & \Comment{\eqref{eq:N2_cap_6}} 
    \label{eq:N2_cap_7} \\
  & L^{(l)} \, \longrightarrow \, 
    \MMM{l}{\?}{0} \, z + \sum_{\substack{ j = 1 \\ j \ne i}}^{\m+1} \MMM{l}{\?}{j} \, \xx{l}{j}
    & & (i = 1,2,\dots, \m+1)
    && \Comment{Lemma~\ref{lem:mat_3}, \eqref{eq:N2_cap_7}} ,
    \label{eq:N2_cap_8}
\end{align}
and 
\begin{align}
  &z, 
   \ \ \ \
  \left\{ \MMM{l}{\?}{0} \, z + \sum_{\substack{ j = 1 \\ j \ne i}}^{\m+1} \MMM{l}{\?}{j} \, \xx{l}{j}
    \; : \; i = 1,2,\dots,\m+1 \right\}
    \notag \\
  & \; \; \; \; \longrightarrow 
    \sum_{i=1}^{\m+1} \left( \MMM{l}{\?}{0} \, z + \sum_{\substack{ j = 1 \\ j \ne i}}^{\m+1} \MMM{l}{\?}{j} \, \xx{l}{j} \right) 
    - \MMM{l}{\?}{0} \, z  
    \notag \\
  & \; \; \; \; =  (\m+1) \, \MMM{l}{\?}{0} \, z + \m \, \sum_{j=1}^{\m+1} \MMM{l}{\?}{j} \, \xx{l}{j}
    - \MMM{l}{\?}{0} \, z 
    \notag \\
  & \; \; \; \;  = \m \, \ee{l}{} \, \longrightarrow \, \ee{l}{} 
    & & \Comment{\eqref{eq:N2_cap_2} and $\NDiv{\Char{\F}}{\m}$} .
    \label{eq:N2_cap_9}
\end{align}
We also have
\begin{align}
  L^{(1)} , \dots , L^{(\n)}  \longrightarrow &  z
    & & & & \Comment{\eqref{eq:N2_cap_4}}
    \label{eq:N2_cap_10}
\end{align}
and for each $l = 1,2,\dots, \n$
\begin{align}
  L^{(l)}, \, z \longrightarrow & \ee{l}{}  
    & &
    & &\Comment{\eqref{eq:N2_cap_8}, \eqref{eq:N2_cap_9}} 
    \label{eq:N2_cap_11} \\
  L^{(l)}, z \longrightarrow & \xx{l}{i}
    & &(i = 1,2,\dots, \m+1)
    & & \Comment{\eqref{eq:N2_cap_3}, \eqref{eq:N2_cap_11}} .
    \label{eq:N2_cap_12}
\end{align}
Thus
\begin{align}
  L^{(1)} , \dots , L^{(\n)} \, \longrightarrow \, z, \, \left\{\xx{l}{i} 
    \; : \; 
     \Stack{l = 1,2,\dots, \n}{i = 1,2,\dots, \m+1}
    \right\}
    & & \Comment{\eqref{eq:N2_cap_10}, \eqref{eq:N2_cap_12}}
    \label{eq:N2_cap_13}.
\end{align}
We will now bound the number of independent entries in each list $L^{(l)}$. 

By equating message components in equation \eqref{eq:N2_cap_3}, 
we have:
\begin{align}
 I_k = &\DDD{l}{i}{e} \, \MMM{l}{\?}{i} 
  & & 
    \Stack{(l = 1,2,\dots, \n)}{(i = 1,2,\dots, \m+1)} 
  & & \Comment{\eqref{eq:N2_cap_1}, \eqref{eq:N2_cap_2}, \eqref{eq:N2_cap_3}}  
  \label{eq:N2_cap_14}
\end{align}
Since each $\DDD{l}{i}{e}$ is $k \times n$ and $k \leq n$,
the rank of each matrix is at most $k$, but we also have
\begin{align*}
 \Rank{\DDD{l}{i}{e}} &\geq \Rank{\DDD{l}{i}{e} \, \MMM{l}{\?}{i}} = \Rank{I_k}  = k
  & & \Comment{\eqref{eq:mat_1_2}, \eqref{eq:N2_cap_14}},
\end{align*}
and so $\Rank{\DDD{l}{i}{e}} = k$. 
By Lemma~\eqref{lem:mat_3}, this implies $\Rank{\QQQ{l}{i}{e}} = n - k$.
Therefore each vector function
$$\QQQ{l}{i}{e} \, \left( \MMM{l}{\?}{0} \, z 
      + \sum_{\substack{ j = 1 \\ j \ne i}}^{\m+1} \MMM{l}{\?}{j} \, \xx{l}{j}  \right)
\ \ \ \ \   
    \Stack{(l = 1,2,\dots, \n)} {(i = 1,2,\dots, \m+1)} $$
in the list $L^{(l)}$ has dimension $n - k$.

If we view the messages vectors as random variables,
each of whose $k$ components are independent and uniformly distributed over the field $\F$,
then we have the following entropy 
(using logarithms base $\vert \F \vert$)
upper bounds:
\begin{align}
  H\left(
    \QQQ{l}{i}{e} \, \left( \MMM{l}{\?}{0} \, z 
      + \sum_{\substack{ j = 1 \\ j \ne i}}^{\m+1} \MMM{l}{\?}{j} \, \xx{l}{j}  \right)
    \; : \;  
    \Stack{l = 1,2,\dots, \n}{i = 1,2,\dots, \m+1}
  \right)
  & \leq \n  (\m + 1) \, (n - k) 
\label{eq:N2_cap_15}\\
\notag \\
  H\left(
    \ee{l}{i} \; : \; 
    \Stack{l = 1,2,\dots, \n}{i = 1,2,\dots, \m+1}
  \right)
  & \leq \n (\m + 1) \, n
\label{eq:N2_cap_16}.  
\end{align}

Therefore, the entropy of all of the vector functions in the list of lists $L^{(1)}, \dots , L^{(\n)}$
is bounded by summing the bounds in \eqref{eq:N2_cap_15} and \eqref{eq:N2_cap_16}:
\begin{align}
  H\left(L^{(1)} , \dots , L^{(\n)} \right) & \leq  \n (\m + 1) \, n - \n (\m + 1) \, k
    & & \Comment{\eqref{eq:N2_cap_15}, \eqref{eq:N2_cap_16}} .
    \label{eq:N2_cap_17} 
\end{align}
But then we have:
\begin{align*}
  (\n  (\m + 1) + 1) \, k 
    &= H\left( z, \; \left\{ \xx{l}{i} 
      \; : \;
      \Stack{l = 1,2,\dots,\n}{i = 1,2,\dots,\m+1}
     \right\} \right)
    & & \Comment{$z, \xx{l}{i} \in \F^k$} \\
  & \leq H\left(L^{(1)} , \dots , L^{(\n)} \right)
    & & \Comment{\eqref{eq:N2_cap_13}} \\
  & \leq  2  \n (\m + 1) \, n - \n (\m + 1) \, k
    & & \Comment{\eqref{eq:N2_cap_17}} \\
  \therefore \frac{k}{n} &\leq \frac{ 2 \n (\m + 1) }{2 \n (\m + 1) + 1}  .
\end{align*}
Thus the linear capacity of $\Network_2(\m,\n)$ for finite-field alphabets whose characteristic does not divide $\m$
is upper bounded by
$$1 - \frac{1}{2 \m \n + 2 \n + 1}.$$
\end{proof}
}

Improving these upper-bounds on the linear capacities
and/or finding codes at these rates are left as open problems.
The problems appear to be non-trivial,
and such improvements are unrelated to the main results of this paper.

\clearpage

\section{The network \texorpdfstring{$\Network_3(\m_1,\m_2)$}{N3(m1,m2)}} \label{sec:N3}
\psfrag{N0(a)}{\large$\B^{(1)}(\m_1)$}
\psfrag{N0(b)}{\large$\B^{(2)}(\m_2)$}

\psfrag{x11}{$\xx{1}{1}$}
\psfrag{x12}{$\xx{1}{2}$}
\psfrag{x1a}{$\xx{1}{\m_1}$}

\psfrag{x21}{$\xx{2}{1}$}
\psfrag{x22}{$\xx{2}{2}$}
\psfrag{x2b}{$\xx{2}{\m_2}$}

\psfrag{S11}{\tiny$\Ss{1}{1}$}
\psfrag{S12}{\tiny$\Ss{1}{2}$}
\psfrag{S1a}{\tiny$\Ss{1}{\m_1}$}

\psfrag{S21}{\tiny$\Ss{2}{1}$}
\psfrag{S22}{\tiny$\Ss{2}{2}$}
\psfrag{S2b}{\tiny$\Ss{2}{\m_2}$}

\psfrag{e11}{$\ee{1}{1}$}
\psfrag{e12}{$\ee{1}{2}$}
\psfrag{e1a}{$\ee{1}{\m_1}$}
\psfrag{e1z}{$\ee{1}{0}$}

\psfrag{e21}{$\ee{2}{1}$}
\psfrag{e22}{$\ee{2}{2}$}
\psfrag{e2b}{$\ee{2}{\m_2}$}
\psfrag{e2z}{$\ee{2}{0}$}
\psfrag{Rz}{$R_z$}
\psfrag{z}{$z$}

\begin{figure*}[h]
\begin{center}
\leavevmode
\hbox{\epsfxsize=.68\textwidth\epsffile{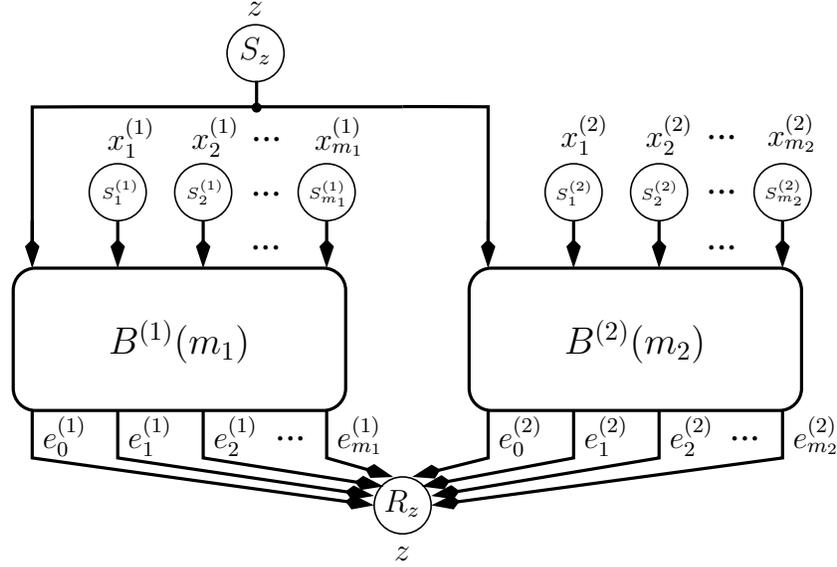}}
\end{center}
\caption{The network $\Network_3(\m_1,\m_2)$ is constructed from $\B(\m_1)$ and $\B(\m_2)$ blocks
  together with
  $\m_1 + \m_2 + 1$ source nodes
  and an additional receiver $R_z$. 
  The blocks are denoted $\B^{(1)}(\m_1)$ and $\B^{(2)}(\m_2)$ respectively,
  and for each $l = 1,2,$ 
  the nodes and edge symbols in $\B^{(l)}(\m_l)$ are denoted with a superscript $l$.
  Each $\B^{(l)}(\m_l)$ block has inputs from source nodes $\Ss{l}{1},\Ss{l}{2},\dots,\Ss{l}{\m_l}$,
  which generate messages $\xx{l}{1}, \xx{l}{2}, \dots, \xx{l}{\m_l}$.
  The shared message $z$ is generated by source node $S_z$ and
  is the $0$th input to $\B^{(l)}(\m_l)$.
  The additional receiver $R_z$ receives all of the output edges of $\B^{(1)}(\m_1)$ and $\B^{(2)}(\m_2)$
  and demands the shared message $z$.
}
\label{fig:N3}
\end{figure*}

For each $\m_1,\m_2 \geq 2,$ network $\Network_3(\m_1,\m_2)$ is
defined in Figure~\ref{fig:N3}.
We note that $\Network_2(m,2)$ and $\Network_3(m+1,m+1)$ have similar structure,
with the exception of the disconnected output edge of each $\B(m+1)$ in $\Network_2(m,2)$.
This disconnected edge causes the difference in solvability properties of the two networks.
Corollary~\ref{cor:N3_non} and Lemmas~\ref{lem:N3_solv}, \ref{lem:N3_lin}, and \ref{lem:N3_cap} 
demonstrate that network $\Network_3(\m_1,\m_2)$ is:
\begin{enumerate} 
\itemsep0em 
  \item non-linear solvable over an alphabet of size $t \m_1^{\alpha+1}$,
  if $\alpha\ge 1$, $\m_2 = s \m_1^{\alpha}$,
  and $s$ and $t$ are relatively prime to $\m_1$,
  
  \item solvable over alphabet $\A$ only if $|\A|$ is relatively prime to $\m_1$ or $|\A|$ does not divide $\m_2,$

  \item scalar linear solvable over \niceRModule{} $G$ if and only if $\GGGCD{\Char{R} \!}{\m_1}{\m_2} = 1$,
  
  \item asymptotically linear solvable over finite field $\F$ if and only if $\Char{\F}$ is relatively prime to $\m_1$ or $\m_2$.
\end{enumerate}

\begin{remark}
For each $\m_1,\m_2 \geq 2,$
the network $\Network_3(\m_1,\m_2)$ has 
$\m_1 + \m_2 + 1$ source nodes,
$2 (\m_1 + \m_2 + 4)$ intermediate nodes,
and $\m_1 + \m_2 + 3$ receiver nodes,
so the total number of nodes in $\Network_3(\m_1,\m_2)$ is
$4 \m_1 + 4\m_2 + 12$.
\label{rem:N3_nodes}
\end{remark}

\subsection{Solvability conditions of \texorpdfstring{$\Network_3(\m_1,\m_2)$}{N3(m1,m2)}} \label{ssec:N3_solv}

The following lemmas demonstrate that $\Network_3(\m_1,\m_2)$
is non-linear solvable when $\m_2 = s \m_1^{\alpha}$, $\alpha \ge 1$, 
and $s$ is relatively prime to $\m_1$.
Consider the ring alphabet $\Z_{\m_1^{\alpha+1}}$.
For every $a \in \Z_{\m_1^{\alpha+1}}$,
a receiver cannot uniquely determine a symbol $a$ in $\Z_{\m_1^{\alpha+1}}$
from the symbols $\m_1 a$ and $s \m_1^{\alpha} a$,
since $\m_1$ is not invertible in $\Z_{\m_1^{\alpha+1}}$.
For example, if a receiver receives 
$\m_1 a = s \m_1^{\alpha} a = 0$ in $\Z_{\m_1^{\alpha+1}}$, then
the symbol $a$ 
could be any element in the set $\{0,\m_1^{\alpha},2\m_1^{\alpha}, \dots, (\m_1-1)\m_1^{\alpha} \}$.
The following lemma describes a technique for recovering the value of $a$
via a decoding function $\psi$
from $\m_1  \pi_1(a)$ and $s \m_1^{\alpha}  \pi_2(a)$, 
where $\pi_1$ and $\pi_2$ are particular permutations of $\Z_{\m_1^{\alpha+1}}$.

\begin{lemma}
  Let $\m \geq 2$
  and $\alpha, s \geq 1$ 
  be integers such that $s$ is relatively prime to $\m$.
  Then there exist permutations
  $\pi_1$ and $\pi_2$
  of $\Z_{\m^{\alpha+1}}$
  and a mapping $\psi: \Z_{\m^{\alpha+1}}^2 \to \Z_{\m^{\alpha+1}}$
  such that for all $a \in \Z_{\m^{\alpha+1}}$,
  $$ \psi\left( \m \pi_1(a), \;  s \m^{\alpha}  \pi_2(a)\right) 
      = a .$$
  \vspace{-.5cm}
  \label{lem:N3_P}
\end{lemma}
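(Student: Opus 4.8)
The plan is to recover $a$ by exploiting the base-$\m$ representation of elements of $\Z_{\m^{\alpha+1}}$. The two key reductions are the following. First, since $\m \cdot \m^{\alpha} \equiv 0 \Mod{\m^{\alpha+1}}$, the value $\m y \Mod{\m^{\alpha+1}}$ depends only on $y \bmod \m^{\alpha}$ and in fact determines it, because $r \mapsto \m r$ is injective on $\{0,\dots,\m^{\alpha}-1\}$. Second, $s\m^{\alpha} y \equiv \m^{\alpha}\,(s\,(y \bmod \m) \bmod \m) \Mod{\m^{\alpha+1}}$ depends only on $y \bmod \m$, and since $\GCD{s}{\m} = 1$ it determines $y \bmod \m$, because $j \mapsto \m^{\alpha} j$ is injective on $\{0,\dots,\m-1\}$ and $s$ is invertible modulo $\m$. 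Thus $\m\pi_1(a)$ surfaces $\pi_1(a) \bmod \m^{\alpha}$, i.e.\ the low $\alpha$ base-$\m$ digits of $\pi_1(a)$, while $s\m^{\alpha}\pi_2(a)$ surfaces the single lowest base-$\m$ digit $\pi_2(a) \bmod \m$.

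Writing $a = q\m^{\alpha} + r$ with $0 \le q < \m$ and $0 \le r < \m^{\alpha}$, I would take $\pi_1$ to be the identity, so that $\m\pi_1(a)$ recovers $r = a \bmod \m^{\alpha}$, the bottom $\alpha$ digits. The only remaining unknown is the top digit $q = \lfloor a/\m^{\alpha}\rfloor$, which is exactly what multiplication by $\m$ destroys. To surface it I define $\pi_2$ by the mixed-radix swap $\pi_2(q\m^{\alpha} + r) = q + r\m$; this is a permutation of $\Z_{\m^{\alpha+1}}$ because $(q,r) \mapsto q + r\m$ is a bijection from $\{0,\dots,\m-1\} \times \{0,\dots,\m^{\alpha}-1\}$ onto $\{0,\dots,\m^{\alpha+1}-1\}$. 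Then $\pi_2(a) \bmod \m = q$, so the second reduction applied to $y = \pi_2(a)$ recovers $q$, after which $a = q\m^{\alpha} + r$.

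The mapping $\psi$ is then defined on the pair $\left(\m\pi_1(a),\, s\m^{\alpha}\pi_2(a)\right)$ by extracting $r$ from the first coordinate and $q$ from the second as above, and returning $q\m^{\alpha} + r$. As in the proof of Lemma~\ref{lem:N2_P}, it suffices to observe that the encoding $a \mapsto \left(\m\pi_1(a),\, s\m^{\alpha}\pi_2(a)\right)$ is injective on the finite set $\Z_{\m^{\alpha+1}}$; the explicit inversion just described exhibits this, and any extension of $\psi$ off its image will serve. I expect the only delicate points to be the two congruence reductions modulo $\m^{\alpha+1}$ — verifying that $\m y$ retains precisely $y \bmod \m^{\alpha}$ while $s\m^{\alpha} y$ retains precisely $y \bmod \m$ — together with the single use of the hypothesis $\GCD{s}{\m} = 1$, which is what lets the unit $s$ be inverted modulo $\m$ so that $y \bmod \m$ can be untangled from $s\,(y \bmod \m) \bmod \m$. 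Everything else is routine mixed-radix bookkeeping, parallel to the argument already carried out for Lemma~\ref{lem:N2_P}.
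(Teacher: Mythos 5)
Your proof is correct and takes essentially the same approach as the paper: both decompose $a$ into base-$\m$ digits, apply a cyclic digit shift in one coordinate so that multiplication by $\m$ (respectively $s\m^{\alpha}$) retains complementary blocks of digits, and invoke $\GCD{s}{\m}=1$ exactly once to invert $s$ modulo $\m$. The only difference is cosmetic: the paper makes $\pi_1$ the shift and $\pi_2$ the identity (so $\m\pi_1(a)$ yields the high digits and $s\m^{\alpha}a$ the low digit), whereas you mirror this by making $\pi_1$ the identity and $\pi_2$ the shift.
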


\newcommand{\ProofOflemNThreeP}{
\begin{proof}[Proof of Lemma \ref{lem:N3_P}]
Define permutations $\pi_1,\pi_2$ of $\Z_{\m^{\alpha+1}}$ as follows.
For each $a \in \Z_{\m^{\alpha+1}}$, 
let $\sum_{i = 0}^{\alpha} \m^{i} a_i$ denote the base $m$ representation of $a$.
We define
\begin{align}
  \pi_1(a) &=   \m^{\alpha} a_{0} + \sum_{i = 1}^{\alpha} \m^{i-1} a_{i}
    \label{eq:N3_P_1} \\
  \pi_2(a) &= a =  \sum_{i = 0}^{\alpha} \m^{i} a_{i}
    \label{eq:N3_P_2} .
\end{align}
The (non-linear) permutation $\pi_1$ performs a right-cyclic shift of the base-$\m$ digits of $a$,
and
$\pi_2$ is the identity permutation.
For each $a \in \Z_{\m^{\alpha+1}}$,
we will show the mapping
$a \longmapsto (\m \pi_1(a), \; s \m^{\alpha} \pi_2(a))$
is injective.
For each $a,b \in \Z_{\m^{\alpha+1}}$,
suppose
\begin{align}
    \m  \pi_1(a) & = \m  \pi_1(b) & & \! \! \! \Mod {\m^{\alpha+1}}
      \label{eq:N3_P_3}\\
    s \m^{\alpha} \pi_2(a) & = s \m^{\alpha} \pi_2(b) & & \! \! \! \Mod {\m^{\alpha+1}}
      \label{eq:N3_P_4}
\end{align}
where $a = \sum_{i=0}^{\alpha} \m^{i} a_i$
and $b = \sum_{i=0}^{\alpha} \m^{i} b_i$.
Then we have
\begin{align} 
  \sum_{i = 1}^{\alpha} \m^i a_{i}
    &= \sum_{i = 1}^{\alpha} \m^i b_{i}
      & & \! \! \! \Mod {\m^{\alpha+1}}
      & & \Comment{\eqref{eq:N3_P_1}, \eqref{eq:N3_P_3}}   
    \notag\\
  \therefore \,
      a_{i} &= b_{i} 
       && \!  (i = 1,2,\dots, \alpha)
    & & \Comment{$0 \leq a_{i},b_{i} < \m $}
    \notag 
\end{align}
and
\begin{align}
  s \m^{\alpha}  a_{0} 
    &=  s \m^{\alpha}  b_{0}
    & & \! \! \! \Mod{\m^{\alpha+1}}
    & &  \Comment{\eqref{eq:N3_P_2}, \eqref{eq:N3_P_4}} 
    \notag \\
  \therefore \,
  \m^{\alpha}  a_{0} 
    &= \m^{\alpha} b_{0} 
    & & \! \! \! \Mod{\m^{\alpha+1}}
    & & \Comment{$\GCD{\m}{s} = 1$} 
    \notag \\
  \therefore \, a_{0} &= b_{0}
    & & & & \Comment{$0 \leq a_{0},b_{0} < \m$}. \notag
\end{align}
Thus $a = b$.

We have shown that
$\m \pi_1(a) = \m \pi_1(b)$ 
and $s \m^{\alpha} \pi_2(a)  =  s \m^{\alpha} \pi_2(b)$
if and only if $a = b$.
Thus $a$ can be uniquely determined from
$\m \pi_1(a)$ and $s \m^{\alpha} \pi_2(a)$.
This implies the existence of the claimed mapping.
\end{proof}
}

\begin{example}
The table below illustrates Lemma~\ref{lem:N3_P} for the case
$\m = 2$, $s = 3$, and $\alpha = 2$,
and permutations $\pi_1$ and $\pi_2$ of $\Z_8$.
{\small 
$$
  \begin{array}{|c|c||c|c|}
  \hline
    a = \pi_2(a) & \pi_1(a) & 12 \pi_2(a) & 2 \pi_1(a) \\[0pt] \hline 
      0  & 0  & 0    & 0  \\[0pt]
      1  & 4  & 4    & 0  \\[0pt]
      2  & 1  & 0    & 2  \\[0pt]
      3  & 5  & 4    & 2  \\[0pt]
      4  & 2  & 0    & 4  \\[0pt]
      5  & 6  & 4    & 4  \\[0pt]
      6  & 3  & 0    & 6  \\[0pt]
      7  & 7  & 4    & 6  \\[0pt] \hline
  \end{array}
$$}
For each $a \in \Z_{8}$, the pair $(2 \pi_1(a), \, 12 \pi_2(a)) \in \Z_{8}^2$ is distinct.
\label{ex:N3_P}
\end{example}

Lemma~\ref{lem:N3_P}
will be used in the proof of Lemma~\ref{lem:N3_non}
to show that
the receiver $R_z$ can recover the message $z$
from the set of edge symbols $\ee{l}{i}$,
where $l = 1,2$ and $i = 0,1,\dots,\m_l$.
\begin{lemma}
  Let $\m_1, \m_2 \geq 2$ and $\alpha, s \ge 1$ be integers
  such that $\m_2 = s \m_1^{\alpha}$ and
  $s$ is relatively prime to $\m_1$.
  Then network $\Network_3(\m_1, \m_2 )$ is solvable 
  over an alphabet of size $\m_1^{\alpha+1}$.
  \label{lem:N3_non}
\end{lemma}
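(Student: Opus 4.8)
The plan is to exhibit an explicit solution over the ring $\Z_{\m_1^{\alpha+1}}$, whose cardinality is exactly $\m_1^{\alpha+1}$. Since $s$ is relatively prime to $\m_1$, I would first invoke Lemma~\ref{lem:N3_P} with $\m = \m_1$ to obtain permutations $\pi_1,\pi_2$ of $\Z_{\m_1^{\alpha+1}}$ and a map $\psi\colon \Z_{\m_1^{\alpha+1}}^2 \to \Z_{\m_1^{\alpha+1}}$ satisfying $\psi\bigl(\m_1\pi_1(a),\, s\m_1^{\alpha}\pi_2(a)\bigr) = a$ for every $a$. The strategy is to route $\pi_l(z)$ through block $\B^{(l)}(\m_l)$ so that $R_z$ can extract $\m_1\pi_1(z)$ from the outputs of the first block and $\m_2\pi_2(z) = s\m_1^{\alpha}\pi_2(z)$ from the outputs of the second, and then reconstruct $z$ by applying $\psi$.

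Accordingly, for each $l = 1,2$ I would define the edge symbols of $\B^{(l)}(\m_l)$ by
\begin{align*}
  \ee{l}{0} &= \sum_{j=1}^{\m_l} \xx{l}{j}, \\
  \ee{l}{i} &= \pi_l(z) + \sum_{\substack{j=1\\ j\ne i}}^{\m_l} \xx{l}{j} & & (i = 1,2,\dots,\m_l), \\
  \ee{l}{}  &= \pi_l(z) + \sum_{j=1}^{\m_l} \xx{l}{j}.
\end{align*}
The internal receivers then decode exactly as for $\Network_0(\m_l)$: receiver $\RR{l}{i}$ forms $\ee{l}{} - \ee{l}{i} = \xx{l}{i}$, while $\RR{l}{0}$ forms $\pi_l^{-1}\bigl(\ee{l}{} - \ee{l}{0}\bigr) = z$. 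This satisfies every demand except that of $R_z$.

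The crux is the recovery at $R_z$, which receives all output edges $\ee{1}{0},\dots,\ee{1}{\m_1}$ and $\ee{2}{0},\dots,\ee{2}{\m_2}$. Here I would use the fact that each local message $\xx{l}{j}$ occurs in exactly $\m_l - 1$ of the edges $\ee{l}{1},\dots,\ee{l}{\m_l}$ while $\pi_l(z)$ occurs in all $\m_l$ of them, so that
\begin{align*}
  \sum_{i=1}^{\m_l} \ee{l}{i} - (\m_l - 1)\,\ee{l}{0}
    &= \m_l\,\pi_l(z) + (\m_l - 1)\sum_{j=1}^{\m_l}\xx{l}{j} - (\m_l - 1)\sum_{j=1}^{\m_l}\xx{l}{j} \\
    &= \m_l\,\pi_l(z).
\end{align*}
For $l = 1$ this produces $\m_1\pi_1(z)$, and for $l = 2$, using $\m_2 = s\m_1^{\alpha}$, it produces $s\m_1^{\alpha}\pi_2(z)$; feeding these two quantities into $\psi$ then yields $z$.

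The single genuinely new ingredient relative to $\Network_2$ is precisely this cancellation: because $R_z$ here also receives the $0$th output edge $\ee{l}{0}$, it can subtract off the local-message sum directly, whereas in the $\Network_2$ construction the analogous cancellation had to be engineered through the alphabet relation $\m\n = 0$. I therefore expect the only delicate point to be confirming this cancellation and the correct coefficient $\m_l - 1$; once Lemma~\ref{lem:N3_P} is available, everything else is a routine verification of the displayed identities and of each receiver's decoding.
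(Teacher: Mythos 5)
Your proposal is correct and matches the paper's own proof essentially verbatim: the same code over $\Z_{\m_1^{\alpha+1}}$ built from the permutations of Lemma~\ref{lem:N3_P}, the same internal decodings, and the same linear combination at $R_z$ (your $\sum_{i=1}^{\m_l} \ee{l}{i} - (\m_l-1)\,\ee{l}{0}$ is identical to the paper's $\sum_{i=0}^{\m_l} \ee{l}{i} - \m_l\,\ee{l}{0}$), followed by applying $\psi$ to $\m_1\pi_1(z)$ and $s\m_1^{\alpha}\pi_2(z)$. Your closing observation—that access to $\ee{l}{0}$ is what lets $R_z$ cancel the local messages directly, unlike in $\Network_2$—is also consistent with the paper's remarks contrasting the two constructions.
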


\newcommand{\ProofOflemNThreeNon}{
\begin{proof}[Proof of Lemma \ref{lem:N3_non}]
Let $\pi_1, \pi_2$ and $\psi$
be the permutations and mapping, respectively, from\\ Lemma~\ref{lem:N3_P}.
Define a code for the network $\Network_3(\m_1,\m_2)$ over the ring $\Z_{\m_1^{\alpha+1}}$,
for each $l = 1,2$, by:
\begin{align*}
  \ee{l}{0} &= \sum_{j = 1}^{\m_l} \xx{l}{j} \\
  \ee{l}{i} &= \pi_l(z) + \sum_{\substack{j = 1\\
                            j \ne i}}^{\m_l} \xx{l}{j} 
          &&  (i =1,2,\dots,\m_l) \\
  \ee{l}{} &= \pi_l(z) + \sum_{j = 1}^{\m_l} \xx{l}{j}.
\end{align*}
For each $l = 1,2,$ the receivers within the block $\B^{(l)}(\m_l)$
can recover their respective messages as follows:
\begin{align*}
\RR{l}{0}: \ \  
  & \pi_l^{-1}\left(\ee{l}{} - \ee{l}{0} \right) = z \\
\RR{l}{i}: \ \  
  & \ee{l}{} - \ee{l}{i} = \xx{l}{i}
    & & (i = 1,2,\dots,\m_l) .
\end{align*}
For each $l = 1,2,$
we have
\begin{align}
-\m_l  \ee{l}{0} + \sum_{i=0}^{\m_l} \ee{l}{i} 
  &  = -\m_l  \sum_{j = 1}^{\m_l} \xx{l}{j} 
    + \m_l  \pi_l(z) 
    + \m_l  \sum_{j = 1}^{\m_l} \xx{l}{j} \notag \\
  &   = \m_l   \pi_l(z). \label{eq:N3_non_1}
\end{align}
The receiver $R_z$ can recover $z$ from its inputs as follows:
\begin{align*}
  &\psi\left( -\m_1  \ee{1}{0} + \sum_{i=0}^{\m_1} \ee{1}{i}, \; -\m_2 \ee{2}{0} + \sum_{i=0}^{\m_2} \ee{2}{i}     \right) \\
  &\ \ \ = \psi\left( \m_1 \pi_1(z), \; \m_2  \pi_2(z) \right)
    & & \Comment{\eqref{eq:N3_non_1}} \\
  &\ \ \ = \psi\left( \m_1 \pi_1(z), \; s \m_1^{\alpha} \pi_2(z) \right) = z
    & & \Comment{$\m_2 = s \m_1^{\alpha}$ and Lemma~\ref{lem:N3_P}}.
\end{align*}
Thus the network code described above is, in fact, a solution for $\Network_3(\m_1,\m_2)$.
\end{proof}
}

In the code given in the proof of Lemma~\ref{lem:N3_non},
the permutation $\pi_1$ is non-linear,
so the code is non-linear.

\begin{lemma}
  Let $\m_1,\m_2 \geq 2$.
  If network $\Network_3(\m_1,\m_2)$ is solvable over alphabet $\A$
  and $\vert \A \vert$ divides $\m_2$, then $\m_1$ and $\vert\A\vert$ are relatively prime.
\label{lem:N3_solv}
\end{lemma}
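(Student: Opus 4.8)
The plan is to argue by contradiction, in the spirit of the proofs of Lemmas~\ref{lem:N1_solv} and \ref{lem:N2_solv}. Suppose $\Network_3(\m_1,\m_2)$ is solvable over $\A$ with $\Div{|\A|}{\m_2}$, yet $\m_1$ and $|\A|$ share a prime factor $p$. First I would note that, for each $l=1,2$, the block $\B^{(l)}(\m_l)$ together with the source nodes $S_z,\Ss{l}{1},\dots,\Ss{l}{\m_l}$ forms a copy of $\Network_0(\m_l)$, so by Lemma~\ref{lem:N0_P} its edge functions satisfy Property $P(\m_l)$. Thus for each $l$ there is an Abelian group $(\A,\oplus_l)$ with identity $0_l$ and permutations $\pipi{l}{0},\dots,\pipi{l}{\m_l}$ and $\sigsig{l}{0},\dots,\sigsig{l}{\m_l}$ of $\A$, with the shared message $z$ as the $0$th input, so that
\begin{align*}
  \ee{l}{0} &= \sigsig{l}{0}\!\left( \bigoplus_{j=1}^{\m_l} \pipi{l}{j}(\xx{l}{j}) \right), \\
  \ee{l}{i} &= \sigsig{l}{i}\!\left( \pipi{l}{0}(z) \oplus_l \bigoplus_{\substack{j=1\\ j\ne i}}^{\m_l} \pipi{l}{j}(\xx{l}{j}) \right) && (i=1,\dots,\m_l),
\end{align*}
where $\bigoplus$ denotes $\oplus_l$. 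The crucial structural feature — and the difference from $\Network_2$ — is that $R_z$ receives \emph{every} output edge $\ee{l}{i}$, $i=0,1,\dots,\m_l$, including the $0$th ones; so to contradict solvability I must hold all of these fixed while changing $z$.

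The core of the argument is to exhibit two message collections that agree on every edge symbol reaching $R_z$ but differ in $z$. For block $1$, I would apply Cauchy's Theorem to $(\A,\oplus_1)$ (whose order is $|\A|$, divisible by $p$) to obtain $a\in\A$ of order $p$; since $\Div{p}{\m_1}$, we have $\RepAdd{a}{\oplus_1}{\m_1}=0_1$. Define $z=\pipiinv{1}{0}(0_1)$ and $\hat z=\pipiinv{1}{0}(a)$, which are distinct because $\pipi{1}{0}$ is a bijection, and set $\xx{1}{j}=\pipiinv{1}{j}(0_1)$ and $\xxh{1}{j}=\pipiinv{1}{j}(a)$ for $j=1,\dots,\m_1$. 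Exactly as in Lemma~\ref{lem:N1_solv}, both collections then force $\ee{1}{0}=\sigsig{1}{0}(0_1)$ and $\ee{1}{i}=\sigsig{1}{i}(0_1)$ for all $i\ge 1$, using $\RepAdd{a}{\oplus_1}{\m_1}=0_1$ for the $\m_1$-fold sums.

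The second block must then be handled with the \emph{same} pair $z,\hat z$, and this is where $\Div{|\A|}{\m_2}$ enters. Writing $c=\pipi{2}{0}(z)$ and $d=\pipi{2}{0}(\hat z)\ominus_2 c$, I would set $\xx{2}{j}=\pipiinv{2}{j}(0_2)$ and $\xxh{2}{j}=\pipiinv{2}{j}(d)$. Because $\Div{|\A|}{\m_2}$, the additive order of every element of $(\A,\oplus_2)$ divides $\m_2$, so $\RepAdd{d}{\oplus_2}{\m_2}=0_2$; this makes $\ee{2}{0}=\sigsig{2}{0}(0_2)$ in both collections, and a short computation using $\RepAdd{d}{\oplus_2}{\m_2}=0_2$ collapses each $\ee{2}{i}$ ($i\ge1$) to $\sigsig{2}{i}(c)$ in both collections as well. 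Hence $R_z$ receives identical inputs under the two collections but must output both $z$ and $\hat z\ne z$, the desired contradiction, proving $\GCD{\m_1}{|\A|}=1$.

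Finally, the main obstacle to anticipate: unlike in $\Network_2$, the receiver $R_z$ also sees the $0$th edges $\ee{l}{0}$, which do not depend on $z$ but do depend on the other messages, so changing $z$ must be compensated by adjusting the $\xx{l}{j}$ in such a way that \emph{both} the $z$-carrying edges and the $0$th edges stay fixed. The two algebraic facts that make this compensation possible are precisely $\Div{p}{\m_1}$ (giving, via Cauchy, an element $a$ with $\RepAdd{a}{\oplus_1}{\m_1}=0_1$) and $\Div{|\A|}{\m_2}$ (giving $\RepAdd{d}{\oplus_2}{\m_2}=0_2$ for the $d$ induced by the forced choice of $z,\hat z$). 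Verifying that both the $0$th edges and all the $z$-edges coincide in the two collections is the only non-routine step.
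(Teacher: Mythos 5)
Your proposal is correct and follows essentially the same route as the paper's proof: decompose $\Network_3(\m_1,\m_2)$ into two copies of $\Network_0$, invoke Property $P(\m_l)$ via Lemma~\ref{lem:N0_P}, use Cauchy's theorem on $(\A,\oplus_1)$ for the first block and the divisibility $\Div{\vert\A\vert}{\m_2}$ (via Lagrange) for the second, and exhibit two message collections that agree on every edge entering $R_z$ while differing in $z$. The only difference is cosmetic: the paper chooses all block-2 inputs so that every block-2 edge collapses to $\sigsig{2}{i}(0_2)$ in both collections, whereas you normalize the first collection to $0_2$ and shift the second by $d = \pipi{2}{0}(\hat z) \ominus_2 \pipi{2}{0}(z)$, collapsing the edges to $\sigsig{2}{i}(c)$ — the two parametrizations are equivalent.
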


\newcommand{\ProofOflemNThreeSolv}{
\begin{proof}[Proof of Lemma \ref{lem:N3_solv}]
Assume $\Network_3(\m_1,\m_2)$ is solvable over $\A$.
For each $l =1,2$ the block $\B^{(l)}(\m_l)$ together with the source nodes
$S_z, \Ss{l}{1},\Ss{l}{2},\dots,\Ss{l}{\m_l}$
forms a copy of $\Network_0(\m_l)$,
so by Lemma~\ref{lem:N0_P}, 
the edge functions within $\B^{(1)}(\m_1)$ and $\B^{(2)}(\m_2)$ must satisfy 
Property $P(\m_1)$ and Property $P(\m_2)$, respectively.
Thus there exist Abelian groups $(\A,\oplus_1)$ and $(\A, \oplus_2)$ 
with identity elements $0_1$ and $0_2$
for the left-hand side and right-hand side of the network, respectively,
and permutations $\pipi{l}{0},\pipi{l}{1},\dots,\pipi{l}{\m_l}$
and $\sigsig{l}{0},\sigsig{l}{1},\dots,\sigsig{l}{\m_l}$ of $\A$,
such that for each $l = 1,2$ the edges carry the symbols:
\begin{align}
  \ee{l}{0} &= \sigsig{l}{0} \left( 
      \bigoplus_{ j = 1}^{\m_l} \pipi{l}{j} \left( \xx{l}{j} \right) 
        \right)
    \label{eq:N3_solv_1}\\
  \ee{l}{i} &= \sigsig{l}{i} \left( 
      \pipi{l}{0}(z) \oplus_l \bigoplus_{\substack{ j = 1 \\ j \ne i}}^{\m_l} \pipi{l}{j}\left(\xx{l}{j} \right)  
        \right) 
    &&( i = 1,2,\dots, \m_l)
    \label{eq:N3_solv_2}\\
  \ee{l}{}  &= 
      \pipi{l}{0}(z) \oplus_l \bigoplus_{ j = 1}^{\m_l} \pipi{l}{j} \left( \xx{l}{j} \right)
     \notag
\end{align}
where $\bigoplus$ in each of the previous three equations denotes $\oplus_l$.

Now suppose to the contrary
that $\m_1$ and $\vert\A\vert$ are not relatively prime
and $\vert\A\vert$ divides $\m_2$.
Then, since $(\A,\oplus_2)$ is a finite group,
for all $a \in \A$,
we have
\begin{align}
  \RepAdd{a}{\oplus_2}{\m_2} = 0_2 
    & & \Comment{$\Div{\vert \A \vert}{\m_2}$}.
    \label{eq:N3_solv_3}
\end{align}

Since $\m_1$ and $\vert\A\vert$ are not relatively prime,
$\m_1$ and $\vert\A\vert$ share a common factor $p$.
Since $\Div{p}{\vert\A\vert}$, by Cauchy's Theorem,
there exists $a \in \A\backslash\{0_1\}$
such that the order of $a$ is $p$,
and since $p$ divides $\m_1$
we have $\RepAdd{a}{\oplus_1}{\m_1} = 0_1$.
Define two collections of messages as follows:
\begin{align*}
  \xx{1}{j} &= \pipiinv{1}{j}(0_1) 
    & & ( j = 1,2,\dots, \m_1 ) \\
  \xx{2}{j} & = \pipiinv{2}{j} \left(  \pipi{2}{0}  \left(  \pipiinv{1}{0}(0_1) \right) \right) 
    & & ( j = 1,2,\dots, \m_2 ) \\
  z & = \pipiinv{1}{0}(0_1)
\end{align*}
\begin{align*}
  \xxh{1}{j} &= \pipiinv{1}{j}(a) 
    & & ( j = 1,2,\dots, \m_1 ) \\
  \xxh{2}{j} & = \pipiinv{2}{j} \left(  \pipi{2}{0} \left(   \pipiinv{1}{0}(a)  \right) \right)           
    & & ( j = 1,2,\dots, \m_2 ) \\
  \hat{z} & = \pipiinv{1}{0}(a)  .
\end{align*}
Since $a \ne 0_1$ and $\pipi{1}{0}$ is bijective,
it follows that $z \ne \hat{z}$.
By Properties $P(\m_1)$ and $P(\m_2)$
and \eqref{eq:N3_solv_1} and \eqref{eq:N3_solv_2},
we have
\begin{align*}
\ee{1}{i} &= \sigsig{1}{i} \left( \RepAdd{0_1}{\oplus_1}{\m_1} \right) = \sigsig{1}{i} ( 0_1 )
    & &  ( i = 0,1,\dots, \m_1)
    \\
\ee{2}{i} &= \sigsig{2}{i} \left( \RepAdd{ \pipi{2}{0}  \left(  \pipiinv{1}{0}(0_1) \right) }{\oplus_2}{\m_2} \right)
    & & (i = 0,1,\dots, \m_2) \\
    &= \sigsig{2}{i} \left( 0_2 \right)   
      & & \Comment{\eqref{eq:N3_solv_3}}
\end{align*}
for the messages $\xx{l}{j},z$, and
\begin{align*}
\ee{1}{i} &= \sigsig{1}{i} \left( \RepAdd{a}{\oplus_1}{\m_1} \right) = \sigsig{1}{i} ( 0_1 )
    & &  ( i = 0,1,\dots, \m_1)
    \\
\ee{2}{i} &= \sigsig{2}{i} \left( \RepAdd{ \pipi{2}{0}  \left(  \pipiinv{1}{0}(a) \right) }{\oplus_2}{\m_2} \right)
    & & (i = 0,1,\dots, \m_2)  \\
    &= \sigsig{2}{i} \left( 0_2 \right)   
    & & \Comment{\eqref{eq:N3_solv_3}}
\end{align*}
for the messages $\xxh{l}{j},\hat{z}$.
For both collections of messages, 
the edge symbols $\ee{1}{0},\ee{1}{1},\dots,\ee{1}{\m_1}$
and $\ee{2}{0},\ee{2}{1},\dots,\ee{2}{\m_2}$
are the same, and therefore the decoded value $z$ at $R_z$
must be the same.
However, this contradicts the fact that $z \ne \hat{z}$.
\end{proof}
}

Lemmas~\ref{lem:N3_non} and \ref{lem:N3_solv} together provide a partial characterization
of the alphabet sizes over which $\Network_2(\m,\n)$ is solvable.
However, these conditions are sufficient for showing our main results.

\subsection{Linear solvability conditions of \texorpdfstring{$\Network_3(\m_1,\m_2)$}{N3(m1,m2)}} \label{ssec:N3_lin}

The following lemma characterizes a necessary and sufficient condition 
for the scalar linear solvability
of $\Network_3(\m_1,\m_2)$ over \niceRModules.
\begin{lemma}
  Let $\m_1,\m_2 \geq 2$,
  and let $G$ be a \niceRModule.
  Then network $\Network_3(\m_1,\m_2)$ is scalar linear solvable over $G$
  if and only if $\GGGCD{\Char{R} \!}{\m_1}{\m_2} = 1$.
  \label{lem:N3_lin}
\end{lemma}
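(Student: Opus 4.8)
The plan is to mirror the proofs of Lemmas~\ref{lem:N1_lin} and \ref{lem:N2_lin}, collapsing the three-way condition to a one-integer condition. Writing $d=\GCD{\m_1}{\m_2}$, the gcd identity $\GGGCD{\Char{R}}{\m_1}{\m_2}=\GCD{\Char{R}}{d}$ together with Lemma~\ref{lem:RingInv2} shows that the stated condition is equivalent to ``$d$ is invertible in $R$.'' I would therefore prove the lemma in the form: $\Network_3(\m_1,\m_2)$ is scalar linear solvable over $G$ if and only if $d$ is invertible in $R$. Throughout I must keep every ring coefficient on its correct side, since $R$ need not be commutative, while freely using that integer multiples of $1_R$ are central.

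First (necessity), suppose a scalar linear solution exists. Each block $\B^{(l)}(\m_l)$ together with its sources is a copy of $\Network_0(\m_l)$, so writing the block edge functions as $\ee{l}{0}=\bigoplus_{j}\ccc{l}{0}{j}\act\xx{l}{j}$ and $\ee{l}{i}=\ccc{l}{i}{0}\act z\oplus\bigoplus_{j\ne i}\ccc{l}{i}{j}\act\xx{l}{j}$ (for $i\geq 1$), Lemma~\ref{lem:N0_lin} gives that each $\ccc{l}{\?}{i}$ and $\ddd{l}{\?}{i}$ is invertible and $\ccc{l}{i}{j}=-\dddinv{l}{\?}{i}\,\ddd{l}{i}{e}\,\ccc{l}{\?}{j}$ for $i\ne j$. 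The extra receiver decodes $z=\bigoplus_{l=1}^{2}\bigoplus_{i=0}^{\m_l}\ddd{l}{z}{i}\act\ee{l}{i}$; note $R_z$ also receives the $0$th output edge, but $\ee{l}{0}$ carries no $z$. Equating the coefficient of $z$ to $1_R$ and that of each $\xx{l}{j}$ to $0_R$, substituting the Lemma~\ref{lem:N0_lin} relation, and right-multiplying the $\xx{l}{j}$-equation by $\cccinv{l}{\?}{j}\ccc{l}{\?}{0}$, I obtain with $t_{l,i}:=\ddd{l}{z}{i}\,\dddinv{l}{\?}{i}\,\ddd{l}{i}{e}\,\ccc{l}{\?}{0}$
\begin{align*}
1_R=-\sum_{l=1}^{2}\sum_{i=1}^{\m_l}t_{l,i},
\qquad
0_R=\sum_{\substack{i=0\\ i\ne j}}^{\m_l}t_{l,i}\quad(j=1,\dots,\m_l).
\end{align*}
Summing the second family over $j$ collapses to $\sum_{i=1}^{\m_l}t_{l,i}=(\m_l\,1_R)\,T_l$ with $T_l:=\sum_{i=0}^{\m_l}t_{l,i}$, so the first identity becomes $1_R=(\m_1\,1_R)(-T_1)+(\m_2\,1_R)(-T_2)$. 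Because $d\mid\m_1,\m_2$ and $d\,1_R$ is central, both $\m_1\,1_R$ and $\m_2\,1_R$ lie in the right ideal $(d\,1_R)R$, hence $1_R\in(d\,1_R)R$; thus $d\,1_R$ has a right inverse, and since in a standard $R$-module a one-sided inverse is automatically two-sided (Definition~\ref{def:nice_mod}), $d\,1_R$ is invertible, i.e. $d$ is invertible in $R$.

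For sufficiency, assume $d$ is invertible and fix B\'ezout integers $a,b$ with $a\m_1+b\m_2=d$. Use the all-ones linear code in each block, $\ee{l}{0}=\bigoplus_{j}\xx{l}{j}$, $\ee{l}{i}=z\oplus\bigoplus_{j\ne i}\xx{l}{j}$, $\ee{l}{}=z\oplus\bigoplus_{j}\xx{l}{j}$; the within-block receivers decode exactly as in $\Network_0(\m_l)$ (via $\ee{l}{}\ominus\ee{l}{i}$). A direct computation gives $\bigoplus_{i=0}^{\m_l}\ee{l}{i}\ominus(\m_l\,1_R)\act\ee{l}{0}=(\m_l\,1_R)\act z$, so $R_z$ can form $(\m_1\,1_R)\act z$ and $(\m_2\,1_R)\act z$ and then recover $z=d^{-1}\act\big((a\,1_R)\act((\m_1\,1_R)\act z)\oplus(b\,1_R)\act((\m_2\,1_R)\act z)\big)$, since $(a\m_1+b\m_2)\,1_R=d\,1_R$ and $d^{-1}(d\,1_R)=1_R$.

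The bulk of the argument is routine ring algebra; the genuinely delicate point is the noncommutativity, which is why the Lemma~\ref{lem:N0_lin} factorization must be inserted with $\ddd{l}{i}{e}$ sandwiched and the final reduction carried out inside the right ideal $(d\,1_R)R$ (this is where centrality of $d\,1_R$ is essential). The conceptual crux, which organizes everything, is the observation that the only linear information $R_z$ can extract about $z$ from block $l$ is the multiple $(\m_l\,1_R)\act z$; solvability therefore amounts precisely to solving $a\m_1+b\m_2\equiv 1$ modulo $\Char{R}$, equivalently the invertibility of $\GCD{\m_1}{\m_2}$, with everything else being bookkeeping.
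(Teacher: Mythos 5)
Your proof is correct and follows essentially the same route as the paper's: the same reduction via $\GGGCD{\Char{R}\!}{\m_1}{\m_2} = \GCD{\Char{R}\!}{\GCD{\m_1}{\m_2}}$ and Lemma~\ref{lem:RingInv2}, the same use of Lemma~\ref{lem:N0_lin} together with coefficient-matching at $R_z$ to arrive at an identity of the form $1_R = \m_1 r_1 + \m_2 r_2$ and conclude invertibility of $\GCD{\m_1}{\m_2}$, and the same all-ones block code with a B\'{e}zout combination at $R_z$ for sufficiency. The only cosmetic differences are that you substitute the summed $\xx{l}{j}$-equations back into the $z$-equation rather than subtracting, and you apply B\'{e}zout to $(\m_1,\m_2)$ with $\GCD{\m_1}{\m_2}^{-1}$ at the end instead of to the coprime pair $\left(\m_1/\GCD{\m_1}{\m_2},\,\m_2/\GCD{\m_1}{\m_2}\right)$; your explicit appeal to Definition~\ref{def:nice_mod} to promote the one-sided inverse of $\GCD{\m_1}{\m_2}\,1_R$ to a two-sided one is a point the paper leaves implicit.
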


\newcommand{\ProofOflemNThreeLin}{
\begin{proof}[Proof of Lemma \ref{lem:N3_lin}]
For any integers $a,b,c \geq 1$,
we have $\GGGCD{a}{b}{c} = \GCD{\GCD{a}{b}}{c}$,
so by Lemma~\ref{lem:RingInv2}
$\GCD{\m_1}{\m_2}$
is invertible in $R$
if and only if $\GGGCD{\m_1}{\m_2}{\Char{R}} = 1$.
Thus it suffices to show that
for each $\m_1,\m_2$ and each \niceRModule{} $G$, 
network $\Network_3(\m_1,\m_2)$ is scalar linear solvable over $G$
if and only if $\GCD{\m_1}{\m_2}$ is invertible in $R$.

Assume network $\Network_3(\m_1,\m_2)$ is scalar linear solvable 
over \niceRModule{} $G$.
The messages are drawn from $G$, and
there exist $\ccc{l}{i}{j}, \ccc{l}{\?}{j} \in R$,
such that for each $l = 1,2$ the edge symbols can be written as:
\begin{align}
  \ee{l}{0} &= \bigoplus_{j = 1}^{\m_l} \left(\ccc{l}{0}{j} \act \xx{l}{j} \right)
    \label{eq:N3_lin_1} \\
  \ee{l}{i} &= \left( \ccc{l}{i}{0} \act z  \right)
      \oplus \bigoplus_{\substack{j=1\\ j\ne i}}^{\m_l} \left( \ccc{l}{i}{j} \act \xx{l}{j} \right)
    &  & (i = 1, \dots, \m_l) 
    \label{eq:N3_lin_2} \\    
  \ee{l}{} &= \left( \ccc{l}{\?}{0} \act z \right)
      \oplus \bigoplus_{j = 1}^{\m_l} \left( \ccc{l}{\?}{j} \act \xx{l}{j} \right)
    \label{eq:N3_lin_3} 
\end{align}
and there exist $\ddd{l}{i}{e}, \ddd{l}{\?}{i}, \ddd{l}{z}{i} \in R$,
such that each receiver can linearly recover its respective message
from its received edge symbols by:
\begin{align}
 \RR{l}{0} : \ \ & 
  \; \; z  \ = \left( \ddd{l}{0}{e} \act \ee{l}{} \right) \oplus  \left( \ddd{l}{\?}{0} \act \ee{l}{0} \right)
    & & (l = 1,2)
    \label{eq:N3_lin_4} \\
 \RR{l}{i} : \ \   &
  \xx{l}{i} = \left( \ddd{l}{i}{e} \act \ee{l}{} \right)  \oplus \left( \ddd{l}{\?}{i} \act  \ee{l}{i} \right)
    & & \Stack{(l = 1,2)}{(i = 1, \dots,\m_l)} 
    \label{eq:N3_lin_5} \\
 R_z :  \ \ &  
  \; \; z  \  = \bigoplus_{l = 1}^{2}  \bigoplus_{i = 0}^{\m_l} \left( \ddd{l}{z}{i} \act \ee{l}{i}  \right)
    \label{eq:N3_lin_6} .
\end{align}

For each $l =1,2$ the block $\B^{(l)}(\m_l)$ together with the source nodes
$S_z, \Ss{l}{1},\Ss{l}{2},\dots,\Ss{l}{\m_l}$
forms a copy of $\Network_0(\m_l)$,
so by Lemma~\ref{lem:N0_lin}
and \eqref{eq:N3_lin_1} -- \eqref{eq:N3_lin_5},
each $\ccc{l}{\?}{i}$ and each $\ddd{l}{\?}{i}$
is invertible in $R$,
and
\begin{align}
  \ccc{l}{i}{j} &= - \dddinv{l}{\?}{i} \! \ddd{l}{i}{e} \, \ccc{l}{\?}{j}
    & & \Stack{(l = 1,2)}{(i,j = 0,1,\dots,\m_l \text{ and } j \ne i).}
    \label{eq:N3_lin_7} 
\end{align}
Equating message components at $R_z$ yields:
\begin{align}
1_R &=  \sum_{l = 1}^2 \sum_{i = 1}^{\m_l} \ddd{l}{z}{i} \, \ccc{l}{i}{0} 
    & & 
    & &\Comment{\eqref{eq:N3_lin_1}, \eqref{eq:N3_lin_2}, \eqref{eq:N3_lin_6}}
    \notag \\
  &= - \sum_{l = 1}^2 \sum_{i = 1}^{\m_l} \ddd{l}{z}{i} \, \dddinv{l}{\?}{i} \! \ddd{l}{i}{e} \, \ccc{l}{\?}{0}
    & & 
    & & \Comment{\eqref{eq:N3_lin_7}}
    \label{eq:N3_lin_8}
\end{align}
and for each $l = 1,2$ we have
\begin{align}
0_R &= \sum_{\substack{ i = 0 \\ i\ne j}}^{\m_l} \ddd{l}{z}{i} \, \ccc{l}{i}{j}
    & & (j = 1,2,\dots, \m_l)
    & & \Comment{\eqref{eq:N3_lin_2}, \eqref{eq:N3_lin_1}, \eqref{eq:N3_lin_6}} 
    \notag \\
  &= - \left( \sum_{\substack{ i = 0 \\ i\ne j}}^{\m_l} \ddd{l}{z}{i} \, \dddinv{l}{\?}{i} \! \ddd{l}{i}{e} \right) \, \ccc{l}{\?}{j}
    & & (j = 1,2,\dots, \m_l)
    & & \Comment{\eqref{eq:N3_lin_7}} 
    \label{eq:N3_lin_07}.
\end{align}
For each $l=1,2,$
by multiplying \eqref{eq:N3_lin_07} by $\cccinv{l}{\?}{j} \! \ccc{l}{\?}{0},$
we have
\begin{align}
  0_R &= \sum_{\substack{ i = 0 \\ i\ne j}}^{\m_l} \ddd{l}{z}{i} \, \dddinv{l}{\?}{i} \! \ddd{l}{i}{e} \, \ccc{l}{\?}{0}
    & & (j = 1,2,\dots, \m_l).
    \label{eq:N3_lin_9} 
\end{align}
Summing
\eqref{eq:N3_lin_9} over $l = 1,2$ and $j = 1,2,\dots,\m_l$
and subtracting \eqref{eq:N3_lin_8}, 
yields
\begin{align}
  - 1_R &= \sum_{l = 1}^{2} \sum_{j = 0}^{\m_l} \sum_{\substack{i = 0 \\ i \ne j}}^{\m_l}
        \ddd{l}{z}{i} \, \dddinv{l}{\?}{i} \! \ddd{l}{i}{e} \, \ccc{l}{\?}{0} \notag \\
    &= \sum_{l = 1}^{2} \m_l \, \sum_{i = 0}^{\m_l} \ddd{l}{z}{i} \, \dddinv{l}{\?}{i} \! \ddd{l}{i}{e} \, \ccc{l}{\?}{0}.
    \label{eq:N3_lin_10}
\end{align}
Equation \eqref{eq:N3_lin_10} implies there exist $r_1,r_2 \in R$ such that
\begin{align}
  1_R &= \m_1 \, r_1 + \m_2 \, r_2 \label{eq:N3_lin_11} .
\end{align}
Since $\GCD{\m_1}{\m_2}$ can be factored out of both terms on the right-hand side of equation \eqref{eq:N3_lin_11},
the ring element $\GCD{\m_1}{\m_2}$ is invertible. 

To prove the converse, let $G$ be a \niceRModule{},
such that $\GCD{\m_1}{\m_2}$ is invertible in $R$.
Define a scalar linear code over $G$ for $\Network_3(\m_1,\m_2)$,
for each $l = 1,2$, by:
\begin{align*}
  \ee{l}{0} &= \bigoplus_{j = 1}^{\m_l} \xx{l}{j}
    \\
  \ee{l}{i} &= z \oplus \bigoplus_{\substack{j=1\\ j\ne i}}^{\m_l} \xx{l}{j} 
    & &(i = 1, \dots, \m_l) \\
  \ee{l}{} &=  z \oplus \bigoplus_{j = 1}^{\m_l}  \xx{l}{j} .
\end{align*}
For each $l = 1,2$, the receivers within $\B^{(l)}(\m_l)$
can linearly recover their respective messages by:
\begin{align*}
\RR{l}{0}: \ \
  & \ee{l}{} \ominus \ee{l}{0} = z \\
\RR{l}{i}: \ \ 
  & \ee{l}{} \ominus \ee{l}{i} = \xx{l}{i}
  & & (i = 1,2,\dots,\m_l).
\end{align*}
Let 
$\m_1' = \m_1/\GCD{\m_1}{\m_2}$ and
$\m_2' = \m_2/\GCD{\m_1}{\m_2}$.
Then $\m_1'$ and $\m_2'$ are relatively prime,
so there exist $n_1,n_2 \in \Z$
such that $n_1 \m_1' + n_2 \m_2' = 1$.
Thus in $R$ we have
$$(n_1 \m_1') \, 1_R + (n_2 \m_2') \, 1_R = 1_R . $$

Receiver $R_z$ can linearly recover message $z$ as follows:
\begin{align*}
R_z: \ \  
    & \bigoplus_{l = 1}^2  \left( \left(n_l \,  \GCD{\m_1}{\m_2}^{-1}  \right) \act
      \left( \bigoplus_{i = 0}^{\m_l} \ee{l}{i}  \ominus \left( \m_l \ee{l}{0} \right) \right) \right) \\
    &= \bigoplus_{l = 1}^2  \left( \left(n_l \,  \GCD{\m_1}{\m_2}^{-1} \right) \act \left( \m_l \, z \right) \right) \\
    &= (n_1 \m_1' \, z) \oplus (n_2 \m_2' \, z) = \left((n_1 \m_1') \, 1_R + (n_2 \m_2') \, 1_R \right) \, z = z.
\end{align*}
Thus the code is a scalar linear solution for $\Network_3(\m_1,\m_2)$.

\end{proof}
}


\begin{corollary}
  Let $\m_1, \m_2 \geq 2$ and $\alpha, s, t \ge 1$ be integers such that
  $\m_2 = s \m_1^{\alpha}$ and 
  $s$ and $t$ are relatively prime to $\m_1$.
  Then network $\Network_3(\m_1, \m_2 )$ is solvable 
  over an alphabet of size $t \m_1^{\alpha+1}$.
  \label{cor:N3_non}
\end{corollary}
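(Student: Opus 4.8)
The plan is to realize $\Network_3(\m_1,\m_2)$ as solvable over a Cartesian-product alphabet $\A_1 \times \A_2$, where $\A_1$ has size $\m_1^{\alpha+1}$ and $\A_2$ has size $t$, and then to appeal to the elementary fact that a network solvable over each of two alphabets is solvable over their product. If $t = 1$ the claim is exactly Lemma~\ref{lem:N3_non}, so I would assume $t \geq 2$. First I would record the relevant number-theoretic fact: since $\m_2 = s\m_1^{\alpha}$ with $\alpha \geq 1$, we have $\Div{\m_1}{\m_2}$, so $\GCD{\m_1}{\m_2} = \m_1$, and therefore
\[
  \GGGCD{t}{\m_1}{\m_2} = \GCD{\GCD{\m_1}{\m_2}}{t} = \GCD{\m_1}{t} = 1,
\]
the last equality because $t$ is relatively prime to $\m_1$.

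Next I would produce the two constituent solutions. For the factor of size $t$, take the ring $\Z_t$ acting on itself; as noted in Section~\ref{ssec:def} this is a \niceRModule{} with $\Char{\Z_t} = t$. By the computation above, $\GGGCD{\Char{\Z_t}}{\m_1}{\m_2} = 1$, so Lemma~\ref{lem:N3_lin} yields a scalar linear solution for $\Network_3(\m_1,\m_2)$ over $\Z_t$; in particular the network is solvable over an alphabet $\A_2$ of size $t$. For the factor of size $\m_1^{\alpha+1}$, Lemma~\ref{lem:N3_non}, which applies since $\m_2 = s\m_1^{\alpha}$ and $s$ is relatively prime to $\m_1$, gives a solution over an alphabet $\A_1$ of size $\m_1^{\alpha+1}$.

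Finally I would combine the two. I would define a $(1,1)$ code over $\A = \A_1 \times \A_2$ in which every edge symbol is a pair: its first coordinate is computed by the edge functions of the $\A_1$-solution applied to the first coordinates of the node's inputs, and its second coordinate is computed by the edge functions of the $\A_2$-solution applied to the second coordinates. Each receiver then decodes coordinate-wise using the two respective decoding functions. Because each factor code already recovers every demanded message in its own alphabet, the product code recovers both coordinates of each demand, hence the full demand, and is therefore a solution over $\A$, which has size $|\A_1|\,|\A_2| = t\,\m_1^{\alpha+1}$. One may further note that, since $\GCD{t}{\m_1^{\alpha+1}} = 1$, the Chinese Remainder Theorem identifies $\A_1 \times \A_2$ with the ring $\Z_{t\m_1^{\alpha+1}}$. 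The only step requiring any care is this product construction, namely verifying that the two codes running on the two coordinates of $\A_1 \times \A_2$ operate independently and do not interfere; this is routine, since every edge and decoding function acts separately on each coordinate, so I do not expect a genuine obstacle.
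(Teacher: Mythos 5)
Your proposal is correct and takes essentially the same route as the paper's proof: both obtain a nonlinear solution over an alphabet of size $\m_1^{\alpha+1}$ from Lemma~\ref{lem:N3_non}, a scalar linear solution over the ring $\Z_t$ from Lemma~\ref{lem:N3_lin}, and combine the two by a Cartesian product code. Your explicit treatment of the case $t=1$ and your computation $\GGGCD{t}{\m_1}{\m_2} = \GCD{\m_1}{t} = 1$ simply spell out details the paper leaves implicit.
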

\begin{proof}
  By Lemma~\ref{lem:N3_non},
  network $\Network_3(\m_1,\m_2)$ is solvable over an alphabet of size $\m_1^{\alpha+1}$.
  $\Z_t$ is a standard $\Z_t$-module and $\Char{\Z_t} = t$ is relatively prime to $\m_1$,
  so by Lemma~\ref{lem:N3_lin},
  network $\Network_3(\m_1,\m_2)$ is scalar linear solvable over the ring $\Z_t$.

  By taking the Cartesian product code of these solutions,
  network $\Network_3(\m_1,\m_2)$ is solvable over an alphabet of size $t \m_1^{\alpha+1}$.
\end{proof}

For each $\m_1 \geq 2$
  and $\alpha, s \geq 1$ 
  such that $s$ is relatively prime to $\m_1$,
  let $\m_2 = \m_1^{\alpha} s$.
By Lemma~\ref{lem:N3_non}, network $\Network_3(\m_1,\m_2)$ is solvable over $\Z_{\m_1^{\alpha+1}},$
but we have 
$$\GGGCD{\m_1}{\m_2}{\Char{\Z_{\m_1^{\alpha+1}}}} = \GGGCD{\m_1}{\m_1^{\alpha} s}{\m_1^{\alpha+1}} = \m_1 \ne 1,$$ 
in this case,
so by Lemma~\ref{lem:N3_lin} the solution is necessarily non-linear.
This also implies that the Cartesian product code in Corollary~\ref{cor:N3_non}
is necessarily non-linear.

\subsection{Capacity and linear capacity of \texorpdfstring{$\Network_3(\m_1,\m_2)$}{N3(m1,m2)}} \label{ssec:N3_cap}

Since the characteristic of any finite field is prime, 
the conditions of (b) and (c)
of the following lemma are complements of one another.
\begin{lemma}
  For each $\m_1,\m_2 \geq 2,$
  network $\Network_3(\m_1,\m_2)$ has
  \begin{itemize}
  \itemsep0em 
      \item[(a)] capacity equal to $1$,
      \item[(b)] linear capacity equal to $1$ for any finite-field alphabet
          whose characteristic is relatively prime to $\m_1$ or $\m_2,$
      \item[(c)] linear capacity equal to $1 - \frac{1}{2\m_1 + 2\m_2+3}$ for any finite-field alphabet
          whose characteristic divides $\m_1$ and $\m_2$.
  \end{itemize}
  \label{lem:N3_cap}
\end{lemma}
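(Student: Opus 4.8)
The plan is to handle the three parts in order, with (a) and (b) reducing immediately to earlier results and (c) carrying the real work. For (a), note that for each $l$ the block $\B^{(l)}(\m_l)$ together with its sources $S_z,\Ss{l}{1},\dots,\Ss{l}{\m_l}$ is a copy of $\Network_0(\m_l)$, which has capacity $1$ by Lemma~\ref{lem:N0_cap}; hence the capacity of $\Network_3$ is at most $1$. For the matching lower bound I would take any finite field $\F$ whose characteristic is relatively prime to $\m_1$, so that $\GGGCD{\Char{\F}}{\m_1}{\m_2}=1$ and Lemma~\ref{lem:N3_lin} supplies a scalar linear, hence rate-one, solution. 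Part (b) is then immediate: if $\Char{\F}$ is relatively prime to $\m_1$ or to $\m_2$ then $\GGGCD{\Char{\F}}{\m_1}{\m_2}=1$, so Lemma~\ref{lem:N3_lin} gives a scalar linear solution and the linear capacity is at least $1$, while (a) caps it at $1$.

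For the upper bound in (c), fix a $(k,n)$ fractional linear solution over a field $\F$ with $\Char{\F}$ dividing both $\m_1$ and $\m_2$, so that $\m_1=\m_2=0$ in $\F$, and write the edge vectors with coding matrices $\MMM{l}{i}{j},\MMM{l}{\?}{j}$ as in the proof of Lemma~\ref{lem:N2_cap}. Equating message components at the internal receivers yields $\Rank{\MMM{l}{\?}{i}}=\Rank{\DDD{l}{i}{e}}=k$ for every $i=0,1,\dots,\m_l$ (the case $i=0$ uses that $\RR{l}{0}$ decodes $z$ and that $\ee{l}{0}$ is independent of $z$); setting each $\xx{l}{i}=0$ gives $\ee{l}{i}\,\yields\,\DDD{l}{i}{e}\bigl(\MMM{l}{\?}{0}z+\sum_{j\ne i}\MMM{l}{\?}{j}\xx{l}{j}\bigr)$, and setting $z=0$ gives $\sum_{j}\MMM{l}{\?}{j}\xx{l}{j}\,\yields\,\DDD{l}{\?}{0}\ee{l}{0}$. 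The decisive structural difference from $\Network_2$ is that $R_z$ now also reads the output edges $\ee{l}{0}$, so these must be fed into the entropy count cheaply rather than at full dimension. Accordingly I would form, for each block $l$, a list $L^{(l)}$ consisting of the $\m_l$ vectors $\ee{l}{i}$ with $i\ge 1$, the $\m_l$ reduced vectors $\QQQ{l}{i}{e}\bigl(\MMM{l}{\?}{0}z+\sum_{j\ne i}\MMM{l}{\?}{j}\xx{l}{j}\bigr)$, and the single reduced vector $\QQQ{l}{\?}{0}\ee{l}{0}$, the $Q$-matrices being those of Lemma~\ref{lem:mat_3}.

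I would then check that $L^{(1)},L^{(2)}$ recover every message. Combining each $\ee{l}{i}$ with the matching $\QQQ{l}{i}{e}$ term recovers $\MMM{l}{\?}{0}z+\sum_{j\ne i}\MMM{l}{\?}{j}\xx{l}{j}$; summing over $i$ and using $\m_l=0$ yields $\sum_{j}\MMM{l}{\?}{j}\xx{l}{j}$, which gives $\DDD{l}{\?}{0}\ee{l}{0}$ and hence, with $\QQQ{l}{\?}{0}\ee{l}{0}$, the full edge $\ee{l}{0}$; now all output edges of both blocks are available, $R_z$'s decoding returns $z$, and each $\ee{l}{}$ and each $\xx{l}{i}$ follows. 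Counting dimensions ($\m_l$ vectors of dimension $n$ and $\m_l+1$ reduced vectors of dimension $n-k$ per block) bounds the joint entropy by $(2\m_1+2\m_2+2)n-(\m_1+\m_2+2)k$, which against the $(\m_1+\m_2+1)k$ bits of message entropy gives $k/n\le (2\m_1+2\m_2+2)/(2\m_1+2\m_2+3)$. To justify that $\QQQ{l}{\?}{0}\ee{l}{0}$ costs only $n-k$—since $\Rank{\DDD{l}{\?}{0}}$ need not be $k$—I would import the per-block rank refinement of Lemma~\ref{lem:N1_cap}: set $\rho_l$ to be the rank of the concatenation of the top $k\times k$ blocks of $\MMM{l}{\?}{1},\dots,\MMM{l}{\?}{\m_l}$ after a basis change normalizing $\MMM{l}{\?}{0}$, bound $\Rank{\QQQ{l}{\?}{0}}\le n-\rho_l$, and force the first $k-\rho_l$ rows of the $\QQQ{l}{i}{e}$ to coincide so that the slack cancels in the sum, using $\rho_l\le k$ and $\m_l\ge 2$.

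The main obstacle is the achievability half of (c): constructing an explicit $(2\m_1+2\m_2+2,\,2\m_1+2\m_2+3)$ fractional linear solution over $\F$. I would adapt the component-wise technique from the end of the proof of Lemma~\ref{lem:N1_cap} to the two-block setting, letting the internal receivers of each $\B^{(l)}(\m_l)$ decode by the usual $\ee{l}{}\ominus\ee{l}{i}$ differences while the single surplus coordinate (the $+1$ in the edge length) is dedicated to letting $R_z$ reconstruct all $2\m_1+2\m_2+2$ components of $z$ from the block outputs, using $\m_l=0$ cancellations in the spirit of the $\sum_i\ee{l}{i}$ identity from the proof of Lemma~\ref{lem:N2_non}. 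The genuinely new difficulty, absent in the single-block network $\Network_1$, is that the two blocks must jointly encode the full, $(2\m_1+2\m_2+2)$-dimensional shared message $z$ and still let $R_z$ invert it; arranging the component assignment so that this holds while every internal demand is simultaneously met is the delicate bookkeeping that I expect to be the heart of the proof.
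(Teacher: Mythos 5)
Your treatment of parts (a) and (b), and your converse (upper-bound) argument for part (c), follow the paper's proof essentially step for step: the paper forms exactly the lists $L^{(1)},L^{(2)}$ you describe (each being the list $L$ of Lemma~\ref{lem:N1_cap} with $z$ in the role of $x_0$), runs the same recovery chain $\ee{l}{i} \yields \cdots \yields \ee{l}{0} \yields z \yields \xx{l}{i}$, and then simply cites the per-block entropy bound \eqref{eq:N1_cap_33} — i.e., precisely the $\rho_l$ rank refinement you propose to import — to obtain $H(L^{(1)},L^{(2)}) \le (2\m_1+2\m_2+2)\,n - (\m_1+\m_2+2)\,k$ and hence $k/n \le 1 - \frac{1}{2\m_1+2\m_2+3}$. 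Up to that point your proposal is correct and is the paper's argument.

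The genuine gap is the achievability half of part (c). The lemma asserts that the linear capacity \emph{equals} $1-\frac{1}{2\m_1+2\m_2+3}$ when $\Char{\F}$ divides both $\m_1$ and $\m_2$, so a $(2\m_1+2\m_2+2,\,2\m_1+2\m_2+3)$ fractional linear solution must actually be exhibited; you state the plan (adapt the component-wise code from Lemma~\ref{lem:N1_cap}) and then explicitly defer the ``delicate bookkeeping,'' which is exactly the part that cannot be waved through — it constitutes the bulk of the paper's proof of this lemma. The paper resolves the difficulty you flag (two blocks jointly encoding a $(2\m_1+2\m_2+2)$-dimensional $z$) by \emph{partitioning} the components of $z$ between the blocks: with $\delta = 2\m_1+\m_2+2$, block $1$ delivers $[z]_1,\dots,[z]_{2\m_1+1}$ to $R_z$ — components $1,\dots,\m_1$ via the characteristic-$p$ cancellation $\sum_{i\ne l}\vcomp{\ee{1}{i}}{l} = -[z]_l - 2\,\vcomp{\ee{1}{0}}{l}$, and components $\m_1+1,\dots,2\m_1+1$ via a dedicated surplus coordinate (the $(2\m_1+2\m_2+3)$rd) of the edges $\ee{1}{i}$ — while block $2$ symmetrically delivers the remaining $2\m_2+1$ components $[z]_{2\m_1+2},\dots,[z]_{2\m_1+2\m_2+2}$ using its own surplus coordinate and cancellation identity; note $(2\m_1+1)+(2\m_2+1) = 2\m_1+2\m_2+2$, so the split is exact, and each internal receiver $\RR{l}{i}$ still decodes by coordinate-wise differences of $\ee{l}{}$ and $\ee{l}{i}$. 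Without writing down such an assignment and verifying all demands simultaneously, part (c) is established only as an upper bound, not an equality.
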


\newcommand{\ProofOflemNThreeCap}{
\begin{proof}[Proof of Lemma \ref{lem:N3_cap}]
By Lemma~\ref{lem:N3_lin}, network $\Network_3(\m_1,\m_2)$
is scalar linear solvable over any finite-field alphabet whose characteristic is relatively prime to $\m_1$ or $\m_2,$
so the network's linear capacity for such finite-field alphabets is at least $1$.
By Lemma~\ref{lem:N0_cap}, 
network $\Network_0(\m_1)$ has capacity equal to $1$,
the block $\B^{(1)}(\m_1)$ together with the source nodes
$S_z, \Ss{1}{1},\Ss{1}{2},\dots,\Ss{1}{\m_1}$
forms a copy of $\Network_0(\m_1)$,
so the capacity of $\Network_3(\m_1,\m_2)$ is at most $1$.
Thus both the capacity of $\Network_3(\m_1,\m_2)$
and its linear capacity over any finite-field alphabet whose characteristic is relatively prime to $\m_1$ or $\m_2$
are $1$. 

To prove part (c),
consider a $(k,n)$ fractional linear solution for $\Network_3(\m_1,\m_2)$
over a finite field $\F$ whose characteristic divides both $\m_1$ and $\m_2$.
Since $\Div{\Char{\F}}{\m_1}$ and $\Div{\Char{\F}}{\m_2}$,
we have $\m_1 = \m_2 = 0$ in $\F$.

We have $\xx{l}{j},z \in \F^k$
and $\ee{l}{i},\ee{l}{} \in \F^n$,
with $n \geq k$, since the capacity is one.
There exist $n \times k$ coding matrices
$\MMM{l}{\?}{j},\MMM{l}{i}{j}$ with entries in $\F$,
such that for each $l = 1,2$ 
the edge vectors can be written as:
\begin{align}
  \ee{l}{0} &= \sum_{j = 1}^{\m_l} \MMM{l}{0}{j} \, \xx{l}{j}
    \label{eq:N3_cap_1} \\
  \ee{l}{i} &= \MMM{l}{\?}{0} \, z + \sum_{\substack{j = 1 \\ j \ne i }}^{\m_l} \MMM{l}{i}{j} \, \xx{l}{j}
    & & (i = 1,2, \dots, \m_l)
    \label{eq:N3_cap_2} \\
  \ee{l}{} &= \MMM{l}{\?}{0} \, z + \sum_{j = 1}^{\m_l} \MMM{l}{\?}{j} \, \xx{l}{j}
    \label{eq:N3_cap_3} 
\end{align}
and there exist $k \times n$ decoding matrices $\DDD{l}{i}{e}, \,  \DDD{l}{\?}{i}$
with entries in $\F$, such that 
for each $l = 1,2$
the receivers within the block $\B^{(l)}(\m_l)$ can recover their respective messages
from their received edge vectors by:
\begin{align}
\RR{l}{0}: \ \ &
  \; \; z \ = \DDD{l}{0}{e} \, \ee{l}{} 
            + \DDD{l}{\?}{0} \, \ee{l}{0}
    \label{eq:N3_cap_4} \\
\RR{l}{i}: \ \  &  
  \xx{l}{i} = \DDD{l}{i}{e} \, \ee{l}{} 
            + \DDD{l}{\?}{i} \, \ee{l}{i}
    & & (i = 1,2,\dots, \m_l).
    \label{eq:N3_cap_5}
\end{align}

Since the receiver $R_z$ recovers message $z$ linearly from its incoming edge vectors, we have
\begin{align}
  \left\{\ee{l}{i} \; : \; 
    \Stack{l = 1,2}{i = 0,1,\dots,\m_l}
    \right\} 
    & \longrightarrow z .
    \label{eq:N3_cap_6}
\end{align}

By setting $z = 0$ in \eqref{eq:N3_cap_4}, for each $l = 1,2$ we have
\begin{align}
  0 &= \DDD{l}{0}{e} \, \sum_{j = 1}^{\m_l} \MMM{l}{\?}{j} \, \xx{l}{j} 
     + \DDD{l}{\?}{0} \, \ee{l}{0} 
    & & \Comment{\eqref{eq:N3_cap_1}, \eqref{eq:N3_cap_3}, \eqref{eq:N3_cap_4}}\notag \\
  \therefore \, &
    \sum_{j = 1}^{\m_l} \MMM{l}{\?}{j} \, \xx{l}{j} 
      \longrightarrow
    \DDD{l}{\?}{0} \, \ee{l}{0}, \label{eq:N3_cap_7}
\end{align}
and similarly, by setting $\xx{l}{i} = 0$ in \eqref{eq:N3_cap_5} for $l = 1,2$ we have
\begin{align}
  0 &= \DDD{l}{i}{e} \, \left( \MMM{l}{\?}{0} \, z + \sum_{\substack{ j = 1 \\ j \ne i}}^{\m_l} \MMM{l}{\?}{j} \, \xx{l}{j}  \right) 
     + \DDD{l}{\?}{i} \, \ee{l}{i}
      & & (i = 1,2,\dots,\m_l) 
      & & \Comment{\eqref{eq:N3_cap_2}, \eqref{eq:N3_cap_3}, \eqref{eq:N3_cap_4}}\notag \\
 \therefore \, &
    \ee{l}{i}
      \longrightarrow
      \DDD{l}{i}{e} \left(\MMM{l}{\?}{0} \, z + \sum_{\substack{ j = 1 \\ j \ne i}}^{\m_l} \MMM{l}{\?}{j} \, \xx{l}{j} \right)
      & & (i = 1,2,\dots, \m_l).
      \label{eq:N3_cap_8}
\end{align}

As in Lemma~\ref{lem:N1_cap},
for each $l = 1,2$ and $i = 1,2,\dots,\m_l$,
let $\QQQ{l}{\?}{0}$ be the matrix $Q$
in Lemma~\ref{lem:mat_3}
corresponding to when $\DDD{l}{\?}{0}$ is the matrix $A$
in the lemma,
and let
$\QQQ{l}{i}{e}$ be the matrix $Q$
corresponding to when $\DDD{l}{i}{e}$ is the matrix $A$.

Let $L^{(1)}$ and $L^{(2)}$ be the lists
from Lemma~\ref{lem:N1_cap} (where $z$ plays the role of $x_0$),
corresponding to the left-hand side and right-hand side
of the network, respectively.
Specifically, for each $l = 1,2,$ 
let $L^{(l)}$ be the list
\begin{align*}
  &\QQQ{l}{\?}{0} \, \ee{l}{0} \\
  &\ee{l}{i}    
    & & ( i = 1,2,\dots,\m_l) \\
  &\QQQ{l}{i}{e} \left( \MMM{l}{\?}{0} \, z + \sum_{\substack{ j = 1 \\ j \ne i}}^{\m_l} \MMM{l}{\?}{j} \, \xx{l}{j} \right)   
    & & (i = 1,2,\dots,\m_l) .
\end{align*}

For each $l = 1,2$ we have
\begin{align}
  L^{(l)} & \longrightarrow 
    \DDD{l}{i}{e} \left(\MMM{l}{\?}{0} \, z + \sum_{\substack{ j = 1 \\ j \ne i}}^{\m_l} \MMM{l}{\?}{j} \, \xx{l}{j} \right)
      & & \Comment{\eqref{eq:N3_cap_8}} 
      \label{eq:N3_cap_9} \\
  L^{(l)} & \longrightarrow
    \MMM{l}{\?}{0} \, z + \sum_{\substack{ j = 1 \\ j \ne i}}^{\m_l} \MMM{l}{\?}{j} \, \xx{l}{j}
      & & \Comment{Lemma~\ref{lem:mat_3}, \eqref{eq:N3_cap_9}} 
      \label{eq:N3_cap_10} .
\end{align}

For each $l = 1,2$ we also have
\begin{align}
  & \left\{ \MMM{l}{\?}{0} \, z + \sum_{\substack{ j = 1 \\ j \ne i}}^{\m_l} \MMM{l}{\?}{j} \, \xx{l}{j} 
      \; : \; 
      i = 1,2,\dots, \m_l\right\} 
    \notag \\
\longrightarrow & \sum_{i = 1}^{\m_l} \left( \MMM{l}{\?}{0} \, z + \sum_{\substack{ j = 1 \\ j \ne i}}^{\m_l} \MMM{l}{\?}{j} \, \xx{l}{j} \right) 
   \notag \\
  &= \m_l \, \MMM{l}{\?}{0} \, z + (\m_1 - 1) \, \sum_{j = 1}^{\m_l} \MMM{l}{\?}{j} \, \xx{l}{j} 
    \notag \\
  &= - \sum_{j = 1}^{\m_l} \MMM{l}{\?}{j} \, \xx{l}{j} 
    && \Comment{$\Div{\Char{\F}}{\m_l}$},
    \label{eq:N3_cap_11}
\end{align}
and so
\begin{align}
  L^{(l)} & \longrightarrow
    \sum_{j = 1}^{\m_l} \MMM{l}{\?}{j} \, \xx{l}{j} 
      & & \Comment{\eqref{eq:N3_cap_11}, \eqref{eq:N3_cap_10}}
      \label{eq:N3_cap_12} \\
  L^{(l)} & \longrightarrow
    \DDD{l}{\?}{0} \, \ee{l}{0} 
      & & \Comment{\eqref{eq:N3_cap_7}, \eqref{eq:N3_cap_12}} 
      \label{eq:N3_cap_13} \\
  L^{(l)} & \longrightarrow
      \ee{l}{0}
      & & \Comment{Lemma~\ref{lem:mat_3}, \eqref{eq:N3_cap_13}}
      \label{eq:N3_cap_14}.
\end{align}

We have
\begin{align}
  L^{(1)} , L^{(2)} \,& \longrightarrow \ z 
    && \Comment{\eqref{eq:N3_cap_6}, \eqref{eq:N3_cap_14}}
    \label{eq:N3_cap_15} .
\end{align}

For each $l = 1,2$ we also have
\begin{align}
  z, \, \sum_{j = 1}^{\m_l} \MMM{l}{\?}{j} \, \xx{l}{j} 
    & \longrightarrow \ee{l}{} 
    & &
    & & \Comment{\eqref{eq:N3_cap_3}}
    \label{eq:N3_cap_16} \\
  L^{(l)}, \,  z & \longrightarrow \ee{l}{}
    & &
    & & \Comment{\eqref{eq:N3_cap_12}, \eqref{eq:N3_cap_16}}
    \label{eq:N3_cap_17} \\
  L^{(l)}, \ z \, &\longrightarrow \ \xx{l}{i}  
  & &  (i = 1,2,\dots, \m_l) 
  & & \Comment{\eqref{eq:N3_cap_5}, \eqref{eq:N3_cap_17}} 
  \label{eq:N3_cap_18}.
\end{align}

Thus 
\begin{align}
L^{(1)} , L^{(2)} \, \longrightarrow \, z, \, \left\{\xx{l}{i} 
  \; : \; 
  \Stack{l = 1,2}{i = 1,2,\dots,\m_l}
  \right\}
  & & \Comment{\eqref{eq:N3_cap_15}, \eqref{eq:N3_cap_18}}
  \label{eq:N3_cap_19}.
\end{align}

We have $L^{(l)}$ corresponding to the same set of vector functions 
as the list $L$ for $\Network_1(\m_l)$ in Lemma~\ref{lem:N1_cap} 
(with a slight change of labeling).
Thus the bound on the entropy of the list $L$ in \eqref{eq:N1_cap_33} in Lemma~\ref{lem:N1_cap}
can be used to bound the entropy of 
the list $L^{(1)} , L^{(2)}$:
\begin{align}
  H\left(L^{(1)} , L^{(2)} \right) 
    &  \leq \ (2 \m_1 + 2 \m_2 +2) \, n - (\m_1 + \m_2 + 2) \, k   
    & & \Comment{\eqref{eq:N1_cap_33}} 
    \label{eq:N3_cap_20}.
\end{align}
But then we have
\begin{align*}
  (\m_1 + \m_2 + 1) \, k &= H\left(z, \, \left\{\xx{l}{i} \; : \; \Stack{l = 1,2}{i = 1,2,\dots,\m_l}  \right\} \right)
        & & \Comment{$z, \xx{l}{i} \in \F^k$} \\
      & \leq H(L_1 , L_2) 
        & & \Comment{\eqref{eq:N3_cap_19}}\\ 
      & \leq (2 \m_1 + 2 \m_2 + 2) \ n - ( \m_1 + \m_2 + 2) \ k 
        & & \Comment{\eqref{eq:N3_cap_20}} \\
  \therefore \frac{k}{n} & \leq \frac{ 2 \m_1 + 2 \m_2 + 2} { 2 \m_1 +2 \m_2 + 3}.
\end{align*}

Thus the linear capacity of $\Network_3(\m_1,\m_2)$
for finite-field alphabets whose characteristic divides both $\m_1$ and $\m_2$ is upper bounded by
$$1 - \frac{1}{2 \m_1 + 2 \m_2 + 3}.$$

Consider a $(2 \m_1 + 2 \m_2 + 2, 2 \m_1 + 2 \m_2 + 3)$ fractional linear code for 
$\Network_3(\m_1,\m_2)$ over any finite-field alphabet whose characteristic divides both $\m_1$ and $\m_2,$ 
described below.

The edges symbols on the left-hand side of $\Network_3(\m_1,\m_2)$ are given by:
\begin{align*}
\vcomp{\ee{1}{0}}{l} & = 
  \left\{ \begin{array}{ll}
    \dsum_{\substack{ j = 1 \\ j \ne l}}^{\m_1}  \vcomp{\xx{1}{j}}{l} 
    \; \; \; \; \; \; \; \; \; \; \; \; \; \; \; 
      &  (l = 1,2, \dots, \m_1)  
      \\ [2em]
    \dsum_{j = 1}^{\m_1} \vcomp{\xx{1}{j}}{l} 
       &  (l = \m_1 + 1,\dots, 2 \m_1 + 2 \m_2 + 2) 
       \\ [2em]
    \dsum_{j =2 }^{\m_1} \vcomp{\xx{1}{j}}{j}
      & (l = 2 \m_1 + 2 \m_2 + 3)
    \end{array} \right. \\
\\
\vcomp{\ee{1}{i}}{l}  & = 
  \left\{ \begin{array}{ll}
    [z]_l + \dsum_{\substack{ j = 1 \\ j \ne i \\ j \ne l}}^{\m_1} \vcomp{\xx{1}{j}}{l} 
      &  (l = 1,2,\dots, \m_1 \text{ and } l \ne i)
      \\ [3em]
    [z]_{\m_1+1} + \dsum_{ \substack{ j  = 1 \\ j \ne i }}^{\m_1} \vcomp{\xx{1}{j}}{j} 
      & (l = i ) 
      \\ [3em]
    [z]_l + \dsum_{\substack{ j = 1 \\ j \ne i}}^{\m_1} \vcomp{\xx{1}{j}}{l} 
      & (l = \m_1 + 1,\dots, 2 \m_1 + 2 \m_2 + 2) 
      \\ [3em]
    [z]_{\m_1 + i + 1} 
      & (l  = 2 \m_1 + 2 \m_2 +3)
    \end{array} \right. 
    & ( i = 1,2,\dots, \m_1)\\
\\
\vcomp{\ee{1}{}}{l} &= 
  \left\{ \begin{array}{ll}
    [z]_l + \dsum_{\substack{j = 1 \\ j \ne l}}^{\m_1} \vcomp{\xx{1}{j}}{l}  
      \; \; \;
       &  (l = 1,2,\dots, \m_1)  
       \\ [2em]
    [z]_l + \dsum_{j= 1}^{\m_1} \vcomp{\xx{1}{j}}{l}    
      &  (l = \m_1+1,\dots, 2 \m_1 + 2 \m_2 + 2) 
      \\ [2em]
    [z]_{\m_1+1} + \dsum_{j = 1}^{\m_1} \vcomp{\xx{1}{j}}{j} 
      &  (l = 2 \m_1 + 2 \m_2 + 3) .
  \end{array} \right.     
\end{align*}

\newpage
For brevity, let $\delta = 2 \m_1 + \m_2 + 2 = n - (\m_2 + 1)$.
The edges symbols on the right-hand side of $\Network_3(\m_1,\m_2)$ are given by:
\begin{align*}
\vcomp{\ee{2}{0}}{l} &= 
  \left\{ \begin{array}{ll}
    \dsum_{j = 1}^{\m_2} \vcomp{\xx{2}{j}}{l} 
     \; \; \; \; \; \; \; \; \; \; \;  \; \; \; \;  
        & \ \ \;\; (l = 1, 2, \dots, \delta)  
        \\ [2em]
      \dsum_{\substack{ j = 1 \\ j \ne l - \delta}}^{\m_2} \vcomp{\xx{2}{j}}{l} 
        & \ \ \;   \; (l = \delta+1 ,  \dots,  \delta + \m_2) 
        \\ [2em] 
      \dsum_{j =2}^{\m_2} \vcomp{\xx{2}{j}}{\delta + j} 
        & \ \ \;  \; (l = \delta + \m_2 + 1)
  \end{array} \right. \\
\\
\vcomp{\ee{2}{i}}{l}  &= 
  \left\{ \begin{array}{ll}
    [z]_{l} + \dsum_{\substack{ j = 1 \\ j \ne i}}^{\m_2} \vcomp{\xx{2}{j}}{l} 
      &\ \ \; \;   (l = 1, 2, \dots,  \delta)  
      \\ [3em] 
    [z]_{\delta} + \dsum_{ \substack{ j  = 1 \\ j \ne i }}^{\m_2} \vcomp{\xx{2}{j}}{\delta + j} 
      & \ \  \; \;(l = \delta + i) 
      \\ [3em]
    [z]_{l} + \dsum_{\substack{ j = 1 \\ j \ne i \\ j \ne l - \delta }}^{\m_2} \vcomp{\xx{2}{j}}{l}  
      & \left(  \begin{array}{l}  
          l = \delta+ 1 ,  \dots  , \delta + \m_2 \\  
          \text{and } l \ne \delta + i  
      \end{array} \right) 
      \\ [3em]
    [z]_{2 \m_1 + 1 + i } 
      & \ \ \; \; (l  = \delta + \m_2 + 1)
    \end{array} \right.
    & (i = 1,2,\dots, \m_2) \\
\\
\vcomp{\ee{2}{}}{l} &=
   \left\{ \begin{array}{ll}
      [z]_{l} + \dsum_{j= 1}^{\m_2} \vcomp{\xx{2}{j}}{l}          
         & \ \ \; \; (l = 1, 2, \dots,  \delta)  
         \\ [2em] 
      [z]_{l} + \dsum_{\substack{j = 1 \\ j \ne l-\delta}}^{\m_2} \vcomp{\xx{2}{j}}{l}  
        & \ \ \; \; (l =  \delta + 1  ,\dots, \delta + \m_2) 
        \\ [2em] 
      [z]_{\delta} + \dsum_{j = 1}^{\m_2} \vcomp{\xx{2}{j}}{\delta + j}  
        & \ \ \;   \; (l = \delta + \m_2 + 1) .
    \end{array} \right.     
\end{align*}

We have 
\begin{align}
  \sum_{\substack{i = 1 \\ i \ne l}}^{\m_1} \vcomp{\ee{1}{i}}{l} 
    &= (\m_1 - 1) \, [z]_l + (\m_1 - 2) \, \sum_{\substack{ j = 1 \\  j \ne l}}^{\m_1} \vcomp{\xx{1}{j}}{l}  
      & & (l = 1,2,\dots,\m_1) \notag \\
  &= -[z]_l - 2 \, \vcomp{\ee{1}{0}}{l}
      & & \Comment{$\Div{\Char{\F}}{\m_1}$}
      \label{eq:N3_cap_21} \\
  \sum_{\substack{i = 1 \\ i \ne l-\delta}}^{\m_2} \vcomp{\ee{2}{i}}{l}
    &=  (\m_2 - 1) \, [z]_l + (\m_2 - 2) \, \sum_{\substack{ j = 1 \\ j \ne l - \delta }}^{\m_2} \vcomp{\xx{2}{j}}{l}  
      & & (l = \delta + 1,\dots,\delta + \m_2) \notag \\
  &= -[z]_l - 2 \, \vcomp{\ee{2}{0}}{l}
      & & \Comment{$\Div{\Char{\F}}{\m_2}$}.
      \label{eq:N3_cap_22}
\end{align}

Each of the receivers can linearly recover 
each of the $2 \m_1 + 2 \m_2 + 2$
components of its demanded message
from its received vectors by:
\begin{align*}
\RR{1}{0}: \ \
   & \vcomp{\ee{1}{}}{l} - \vcomp{\ee{1}{0}}{l} = [z]_{l}
      & &  (l = 1,2,\dots, 2 \m_1 + 2 \m_2 + 2)  \\
  \\
\RR{1}{i}: \ \ 
   &  \vcomp{\ee{1}{}}{2 \m_1 + 2 \m_2 + 3} - \vcomp{\ee{1}{i}}{i} 
        = \vcomp{\xx{1}{i}}{i}
        && (i = 1,2,\dots,\m_1) \\
   &  \vcomp{\ee{1}{}}{l} - \vcomp{\ee{1}{i}}{l} 
        = \vcomp{\xx{1}{i}}{l} 
        & & (l = 1,2,\dots, 2 \m_1 + 2 \m_2 + 2 \text{ and } l \ne i)\\
    \\
\RR{2}{0}: \ \
   &  \vcomp{\ee{2}{}}{l} - \vcomp{\ee{2}{0}}{l} = [z]_{l}
      & & (l = 1,2,\dots, 2 \m_1 + 2 \m_2 + 2) \\
\\
\RR{2}{i}: \ \
  & \vcomp{\ee{2}{}}{\delta + \m_2 + 1} - \vcomp{\ee{2}{i}}{\delta+i} 
        = \vcomp{\xx{2}{i}}{\delta+i}
        & & (i = 1,2,\dots,\m_2) \\
   &  \vcomp{\ee{2}{}}{l} - \vcomp{\ee{2}{i}}{l} 
        = \vcomp{\xx{2}{i}}{l} 
        & & (l = 1,2,\dots, 2\m_1 + 2 \m_2 + 2 \text{ and } l \ne \delta + i) 
\end{align*}

\begin{align*}
R_z: \ \
   & -2 \, \vcomp{\ee{1}{0}}{l} - \sum_{\substack{i = 1 \\ i \ne l}}^{\m_1} \vcomp{\ee{1}{i}}{l} = [z]_l           
      & & \; \; \; \; (l = 1,2,\dots,\m_1) 
      &  & \Comment{\eqref{eq:N3_cap_21}} \\
    \\
   &\vcomp{\ee{1}{1}}{1} - \vcomp{\ee{1}{0}}{2 \m_1 + 2 \m_2 + 3}  
     = [z]_{\m_1+1} \\     
    \\
   & \vcomp{\ee{1}{l - \m_1 - 1}}{2 \m_1 + 2 \m_2 + 3} = [z]_l
    & & \; \; \; \; (l = \m_1 + 2 , \dots, 2 \m_1 + 1) \\ 
    \\
   & \vcomp{\ee{2}{l - 2 \m_1 - 1}}{\delta + \m_1 + 1} = [z]_l
    & & \; \; \; \; (l = 2 \m_1 + 2 , \dots, 2 \m_1 + \m_2 + 1)  \\
    \\  
   & \vcomp{\ee{2}{1}}{\delta+1} - \vcomp{\ee{2}{0}}{2 _1 + 2 \m_2 + 3}  
    = [z]_{\delta} & & \; \; \; \; (\delta = 2 \m_1 + \m_2 + 2) \\   
    \\
   & -2 \, \vcomp{\ee{2}{0}}{l} - \sum_{\substack{i = 1 \\ i \ne l-\delta}}^{\m_2} \vcomp{\ee{2}{i}}{l} = [z]_l
        & & \; \; \; \;  (l = \delta + 1,\dots, \delta + \m_2)
        & & \Comment{\eqref{eq:N3_cap_22}} .
\end{align*}
  Thus the code is in fact a linear solution for $\Network_3(\m_1,\m_2)$.
\end{proof}
}

\clearpage

\section{The network \texorpdfstring{$\Network_4(\m)$}{N4(m)}} \label{sec:N4}

A \textit{disjoint union} of networks refers to a new network formed by combining existing networks 
with disjoint sets of nodes, edges, sources, and receivers.
Specifically, the nodes/edges/sources/receivers in the resulting network
are the disjoint union of the nodes/edges/sources/receivers in the smaller networks.

\begin{remark}
  The \textit{disjoint union} of networks
  $\Network_1, \dots, \Network_\n$,
  has a $(k,n)$ solution over alphabet $\A$
  if and only if $\Network_1, \dots, \Network_\n$
  each has a $(k,n)$ solution over $\A$.
  \label{rem:DJ}
\end{remark}

For any integer $\m \ge 2,$ 
let $\omega(\m)$ denote the number of distinct prime factors of $\m$.
Denote the prime factorization of $\m$ by
$$\m = \PrimeFact{\m}$$ 
where $\gamma_1,\dots,\gamma_{\omega(\m)} \geq 1$ and
$p_1,\dots,p_{\omega(\m)}$ are distinct primes.
We define the following functions of $\m$ and its prime divisors,
which will be used throughout this section:
\begin{align}
  f(\m) &= p_1^{\gamma_1 - 1} \dots p_{\omega(\m)}^{\gamma_{\omega(m)} -1 }
     \label{eq:N4_f}\\
  \mu(\m,i) &=\min \; \left\{ \alpha \ge 0 \; : \; p_i^{\alpha} \ge f(\m) \right\}
    & & ( i = 1,\dots, \omega(\m)) 
    \label{eq:N4_u}\\
  g(\m,i) &= p_i^{\gamma_i-1} \prod_{\substack{j = 1 \\ j \ne i}}^{\omega(\m)} p_j^{\mu(\m,j)}
    & & ( i = 1,\dots, \omega(\m)) 
    \label{eq:N4_g}.
\end{align}

\noindent For each $\m \geq 2$ with prime factorization $\m = \PrimeFact{\m}$,
we construct network $\Network_4(\m)$ from the following \textit{disjoint union}%
\footnote{When node (respectively, edge and message) labels are repeated (e.g. $\Network_1(\m_1)$ and $\Network_1(\m_2)$
both have receiver $R_x$), add additional superscripts to each node (respectively, edge and message) to avoid repeated labels.
Each disjoint network has a set of messages, nodes, and edges
which is disjoint to every other network's set in the union.
The messages, nodes, and edges are not directly referenced in this section,
so the additional level of labeling is arbitrary
so long as the networks are disjoint.}
of networks:
\begin{align}
  \Network_4(\m)
    & = 
    \left( 
      \bigcup_{\substack{\text{prime } q \\ 
                    \NDiv{q}{\m} \\ 
                    q < f(\m)}} 
          \Network_1(q) 
    \right)
    \, \cup \,
    \left( 
        \bigcup_{i = 1}^{\omega(\m)} 
          \Network_2\left(p_i^{\gamma_i}, \, (\m / p_i^{\gamma_i}) \right) 
    \right)
    \, \cup \,
    \left( 
        \bigcup_{\substack{i = 1 \\ 
                  \gamma_i > 1}}^{\omega(\m)} 
          \Network_3\left(p_i, \ g(\m,i) \right) 
    \right).
    \label{eq:N4_N4}
\end{align}

\begin{theorem}
  For each $\m \geq 2,$
  the network $\Network_4(\m)$ is:
  \begin{enumerate}
  \itemsep0em 
    \item solvable over an alphabet of size $\m$,
    \item not solvable over any alphabet whose size is less than $\m$,
    \item scalar linear solvable over $\GF{\m}$, if $\m$ is prime,
    \item neither vector linear solvable over any $R$-module alphabet
        nor asymptotically linear solvable over any finite-field alphabet
        if $\m$ is composite.
  \end{enumerate}
  \label{thm:N4_results}
\end{theorem}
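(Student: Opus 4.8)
The plan is to reduce every claim to a statement about the constituent networks $\Network_1$, $\Network_2$, $\Network_3$ in the disjoint union \eqref{eq:N4_N4}, using Remark~\ref{rem:DJ} together with its evident linear analogue: since the components of a disjoint union share no nodes, edges, sources, or receivers, a code for the union is exactly a tuple of codes for the components, so the union is $(k,n)$ (linear) solvable over $G$ if and only if each component is. Throughout I write $\m=\PrimeFact{\m}$ and use the functions $f,\mu,g$ of \eqref{eq:N4_f}--\eqref{eq:N4_g}, noting the identity $f(\m)\prod_i p_i=\m$.

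For part (1) I would produce a $(1,1)$ solution over a common size-$\m$ alphabet for each component. Each $\Network_1(q)$ (with $q$ prime, $\NDiv{q}{\m}$) is scalar linear solvable over $\Z_{\m}$ by Lemma~\ref{lem:N1_lin}, since $\Char{\Z_\m}=\m$ is relatively prime to $q$. Each $\Network_2(p_i^{\gamma_i},\m/p_i^{\gamma_i})$ is solvable over an alphabet of size $p_i^{\gamma_i}\cdot(\m/p_i^{\gamma_i})=\m$ by Lemma~\ref{lem:N2_non}. Each $\Network_3(p_i,g(\m,i))$ with $\gamma_i>1$ is solvable over a size-$\m$ alphabet by Corollary~\ref{cor:N3_non}, taking $\m_1=p_i$, $\alpha=\gamma_i-1\ge 1$, $s=\prod_{j\ne i}p_j^{\mu(\m,j)}$, and $t=\prod_{j\ne i}p_j^{\gamma_j}$, all relatively prime to $p_i$, which yields alphabet size $t\,p_i^{\alpha+1}=\m$. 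Relabeling all component alphabets to one fixed set of size $\m$ and applying Remark~\ref{rem:DJ} gives (1). Part (3) is the degenerate case: when $\m=p$ is prime, $f(\m)=1$ and $\gamma_1=1$, so the $\Network_1$ and $\Network_3$ unions are empty and $\Network_4(\m)=\Network_2(p,1)$; Lemma~\ref{lem:N2_lin} with $\Div{\Char{\GF{p}}}{p}$ supplies a scalar linear solution over $\GF{p}$.

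For part (4), by Lemmas~\ref{lem:mod_2} and \ref{lem:mod_0} it suffices to show $\Network_4(\m)$ is not scalar linear solvable over any standard $R$-module. Suppose it were, over some $G$ with $c=\Char{R}\ge 2$. By the linear form of Remark~\ref{rem:DJ} every component is scalar linear solvable over $G$, so Lemma~\ref{lem:N2_lin} forces $\Div{c}{p_i^{\gamma_i}}$ for all $i$. If $\omega(\m)\ge 2$ these targets are pairwise coprime, forcing $c=1$, a contradiction. If $\m=p^{\gamma}$ with $\gamma\ge 2$, then $c$ is a power of $p$, while Lemma~\ref{lem:N3_lin} applied to $\Network_3(p,p^{\gamma-1})$ forces $\GGGCD{c}{p}{p^{\gamma-1}}=\GCD{c}{p}=1$, again giving $c=1$, a contradiction. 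Hence no standard $R$-module works. For a finite field $\F$, the linear-capacity-one conditions of Lemmas~\ref{lem:N1_cap}, \ref{lem:N2_cap}, \ref{lem:N3_cap} coincide componentwise with the scalar-linear-solvability conditions over $\F$ (taking $R=\F$), so the same contradiction shows no $\F$ makes every component have linear capacity one; thus $\Network_4(\m)$ is not asymptotically linear solvable over any finite field.

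The main obstacle is part (2). Here I would assume $\Network_4(\m)$ is solvable over $\A$ with $b=|\A|$ and interlock the necessary conditions: Lemma~\ref{lem:N2_solv} forces $\Div{p_i}{b}$ for every $i$; Lemma~\ref{lem:N1_solv} forces every prime factor of $b$ not dividing $\m$ to be at least $f(\m)$; and Lemma~\ref{lem:N3_solv} forces, for each $i$ with $\gamma_i>1$, that $\NDiv{b}{g(\m,i)}$. Writing $b=\big(\prod_i p_i^{\beta_i}\big)\,c$ with $\beta_i\ge 1$ and $\GCD{c}{\m}=1$, the argument is a case analysis: if $c>1$ then $c\ge f(\m)$ and $b\ge f(\m)\prod_i p_i=\m$; if $c=1$ and $\beta_i\ge\gamma_i$ for all $i$ then $b\ge\m$ directly; otherwise some $\beta_i<\gamma_i$ with $\gamma_i>1$, and since the exponent of $p_i$ in $g(\m,i)$ is $\gamma_i-1\ge\beta_i$, the condition $\NDiv{b}{g(\m,i)}$ forces $\beta_j>\mu(\m,j)$ for some $j\ne i$, whence $p_j^{\beta_j}\ge p_j\,f(\m)$ and again $b\ge f(\m)\prod_i p_i=\m$. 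In every case $b\ge\m$, contradicting $b<\m$. The delicate point is that the three families of divisibility constraints must mesh exactly as the definitions of $f$, $\mu$, and $g$ were engineered to guarantee, so the real work is establishing these three necessary conditions and tracking the exponents carefully through the case split.
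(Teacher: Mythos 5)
Your proposal is correct and follows essentially the same route as the paper: the paper proves this theorem by combining Theorems~\ref{thm:N4_m}, \ref{thm:N4_solv}, \ref{thm:N4_prime}, \ref{thm:N4_R} and Corollary~\ref{cor:N4_asymp}, whose proofs use exactly your componentwise reduction via Remark~\ref{rem:DJ} and the same invocations of Lemmas~\ref{lem:N1_solv}--\ref{lem:N1_cap}, \ref{lem:N2_solv}--\ref{lem:N2_cap}, \ref{lem:N3_solv}--\ref{lem:N3_cap}, Corollary~\ref{cor:N3_non}, and Lemmas~\ref{lem:mod_2}, \ref{lem:mod_0}. Your part-(2) case split on $b=\bigl(\prod_i p_i^{\beta_i}\bigr)c$ and your ``conditions coincide'' phrasing of the asymptotic claim are only cosmetic reorganizations of the paper's arguments in Theorems~\ref{thm:N4_solv} and \ref{thm:N4_cap}.
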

\begin{proof}
The theorem follows immediately from
Theorems~\ref{thm:N4_m}, \ref{thm:N4_solv}, \ref{thm:N4_prime}, \ref{thm:N4_R},
and Corollary~\ref{cor:N4_asymp}.
\end{proof}

\begin{example}
Consider the special cases of the square-free integer%
\footnote{An integer is \textit{square-free} if it is not divisible by the square of any prime.}
$6$,
the prime power $27$,
and the integer $100$ which is neither square-free nor a prime power.

\begin{itemize}
\item $\m = 6 = 2^1 3^1$.
We have $\gamma_1 = \gamma_2 = 1$ and $f(\m) = 2^{(1-1)} 3^{(1-1)} = 1$,
so $\Network_4(6)$ has neither $\Network_1$ nor $\Network_3$ components.
Thus by \eqref{eq:N4_N4}, 
network $\Network_4(6)$ is the disjoint union of networks: 
$$\Network_2(2,3) \, \cup \, \Network_2(3,2).$$

\item $\m = 27 = 3^3$.
We have $f(27) = 3^{(3-1)} = 9$, $g(27,1) = 3^{(3-1)} = 9$,
and the primes less than $f(27)$ which do not divide $27$ are $2,5,$ and $7$.
Thus by \eqref{eq:N4_N4}, 
network $\Network_4(6)$ is the disjoint union of networks:
$$\Network_1(2) \, \cup \, \Network_1(5) \, \cup \, \Network_1(7) \, \cup \, \Network_2(27,1) \, \cup \, \Network_3(3,9) .$$

\item $\m = 100 = 2^2 5^2$.
We have $f(100) = 2^{(2-1)} 5^{(2-1)} = 10$.
Then $\mu(100,1) = 4$, since $2^4 > f(100) > 2^3$,
and $\mu(100,2) = 2,$ since $5^2 > f(100) > 5^1$.
So $g(100,1) = 2^1 5^2,$ $g(100,2) = 5^1 2^4$,
and the primes less than $f(100)$ which do not divide $100$ are $3$ and $7$.
Thus by \eqref{eq:N4_N4}, 
network $\Network_4(100)$ is the disjoint union of networks:
$$\Network_1(3) \, \cup \, \Network_1(7) \, \cup \, \Network_2(4,25) \, \cup \, \Network_2(25,4) \, \cup \, \Network_3(2,50) \, \cup \,  \Network_3(5,80).$$

We will use these networks as running examples throughout this section and will refer back to these constructions.
\label{ex:N4_construction}
\end{itemize}

\end{example}

\subsection{Solvability conditions of \texorpdfstring{$\Network_4(\m)$}{N4(m)}} \label{ssec:N4_solv}

The following lemma shows that each disjoint component of $\Network_4(\m)$
is solvable over an alphabet of size $\m$,
and therefore $\Network_4(\m)$ is solvable over an alphabet of size $\m$.
The proofs of Theorems~\ref{thm:N4_m} and \ref{thm:N4_solv} make use of the functions
$f,\mu,\text{ and } g$ defined in \eqref{eq:N4_f}, \eqref{eq:N4_u}, and \eqref{eq:N4_g}, respectively.

\begin{theorem}
  For each $\m \geq 2$,
  network $\Network_4(\m)$ is solvable 
  over an alphabet of size $m$.
  \label{thm:N4_m}
\end{theorem}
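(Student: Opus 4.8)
The plan is to invoke Remark~\ref{rem:DJ}, which reduces solvability of the disjoint union $\Network_4(\m)$ over an alphabet of size $\m$ to the solvability of each of its constituent networks over an alphabet of size $\m$. Since solvability depends only on the cardinality of the alphabet and not on its algebraic structure, the (possibly different) solutions on the three families of components can all be realized on a common underlying set of size $\m$, so it suffices to treat each family separately. By \eqref{eq:N4_N4} these are the $\Network_1(q)$ for primes $q \nmid \m$ with $q < f(\m)$, the $\Network_2(p_i^{\gamma_i}, \m/p_i^{\gamma_i})$ for $i = 1,\dots,\omega(\m)$, and the $\Network_3(p_i, g(\m,i))$ for those $i$ with $\gamma_i > 1$.

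For the first family, since $q$ is prime and $q \nmid \m$ we have $\GCD{q}{\m} = 1$. Viewing $\Z_{\m}$ as a finite ring with identity acting faithfully on itself, it is a \niceRModule{} of size $\m$ with $\Char{\Z_{\m}} = \m$, so Lemma~\ref{lem:N1_lin} produces a scalar linear (hence $(1,1)$) solution of $\Network_1(q)$ over this alphabet. For the second family, Lemma~\ref{lem:N2_non} shows directly that $\Network_2(p_i^{\gamma_i}, \m/p_i^{\gamma_i})$ is solvable over an alphabet of size $p_i^{\gamma_i} \cdot (\m/p_i^{\gamma_i}) = \m$; note the second parameter $\m/p_i^{\gamma_i} \ge 1$ is a positive integer, as the lemma requires.

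The third family is where the definitions of $\mu$ and $g$ in \eqref{eq:N4_u}--\eqref{eq:N4_g} are used. Fixing $i$ with $\gamma_i > 1$, I set $\m_1 = p_i$ and $\m_2 = g(\m,i)$. By \eqref{eq:N4_g} I can write $\m_2 = s \, p_i^{\alpha}$ with $\alpha = \gamma_i - 1 \ge 1$ and $s = \prod_{j \ne i} p_j^{\mu(\m,j)}$, which is coprime to $p_i$ since the $p_j$ are distinct primes. Choosing the free parameter $t = \m/p_i^{\gamma_i} = \prod_{j \ne i} p_j^{\gamma_j}$, again coprime to $p_i$, Corollary~\ref{cor:N3_non} then yields a solution of $\Network_3(p_i, g(\m,i))$ over an alphabet of size $t \, p_i^{\alpha+1} = t \, p_i^{\gamma_i} = \m$.

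The argument is essentially parameter bookkeeping, so the only real obstacle is confirming that $f,\mu,g$ have been defined precisely so that every $\Network_2$ and $\Network_3$ component lands on alphabet size exactly $\m$: the $\Network_2$ factorization is immediate, whereas the $\Network_3$ case requires matching $\alpha = \gamma_i - 1$, reading $s$ off the non-$p_i$ part of $g(\m,i)$, and tuning the free parameter $t$ in Corollary~\ref{cor:N3_non} to absorb the remaining factor $\prod_{j \ne i} p_j^{\gamma_j}$ of $\m$. Once these identities are checked, reassembling the three families through Remark~\ref{rem:DJ} completes the proof.
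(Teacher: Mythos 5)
Your proposal is correct and follows essentially the same route as the paper's proof: the same component-by-component argument via the disjoint union (Remark~\ref{rem:DJ}), with Lemma~\ref{lem:N1_lin} applied over $\Z_{\m}$ for the $\Network_1(q)$ components, Lemma~\ref{lem:N2_non} for the $\Network_2$ components, and Corollary~\ref{cor:N3_non} with $\alpha = \gamma_i - 1$, $s = \prod_{j \ne i} p_j^{\mu(\m,j)}$, $t = \m/p_i^{\gamma_i}$ for the $\Network_3$ components. Your version is merely more explicit about the parameter bookkeeping in the $\Network_3$ case, which the paper leaves partly implicit.
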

\begin{proof}

Let $m$ have prime factorization $\m = \PrimeFact{\m}$.

For each prime $q < f(\m)$ such that $\NDiv{q}{\m}$,
by \eqref{eq:N4_N4},
network $\Network_4(\m)$ contains
a copy of $\Network_1(q)$.
$\Z_{\m}$ is a standard $\Z_{\m}$-module
and $\Char{\Z_{\m}} = \m$ is relatively prime to $q$,
so by Lemma~\ref{lem:N1_lin},
network $\Network_1(q)$
is scalar linear solvable over the ring $\Z_{\m}$.

For each $i = 1,\dots, \omega(\m)$,
by \eqref{eq:N4_N4},
network $\Network_4(\m)$ contains
a copy of $\Network_2\left(p_i^{\gamma_i}, \, (\m / p_i^{\gamma_i}) \right)$.
By Lemma~\ref{lem:N2_non},
network $\Network_2\left(p_i^{\gamma_i}, \, (\m / p_i^{\gamma_i}) \right)$
is solvable over an alphabet of size $\m$.

For each $i = 1,\dots,\omega(\m)$
such that $\gamma_i > 1$,
by \eqref{eq:N4_N4},
network $\Network_4(\m)$ contains
a copy of $\Network_3(p_i, \, g(\m,i))$.
Also, $p_i$ and $\m / p_i^{\gamma_i}$ are relatively prime,
and by \eqref{eq:N4_g}, 
$g(\m,i)$ is the product of $p_i^{\gamma_i - 1}$ and a term which is relatively prime to $p_i$,
so by Corollary~\ref{cor:N3_non}, 
network $\Network_3\left(p_i, \ g(\m,i) \right)$
is solvable over an alphabet of size $\m$.

Thus each disjoint component of $\Network_4(\m)$ is solvable over an alphabet of size $\m$,
so $\Network_4(\m)$ is solvable over an alphabet of size $\m$.
\end{proof}

Each network $\Network_1,\Network_2,$ and $\Network_3$ 
requires the alphabet size to meet some divisibility condition in order to have a solution over that alphabet.
The following lemma shows that
because of these conditions, there does not exist an alphabet whose size is less than $\m$
over which each component of $\Network_4(\m)$ is solvable.
\begin{theorem}
  For each $\m \geq 2$,
  if network $\Network_4(\m)$ is solvable over alphabet $\A$,
  then $\vert\A\vert \geq \m$.
  \label{thm:N4_solv}
\end{theorem}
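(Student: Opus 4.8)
The plan is to reduce the claim to the individual components of $\Network_4(\m)$ via Remark~\ref{rem:DJ}: if $\Network_4(\m)$ is solvable over $\A$, then every network appearing in the disjoint union \eqref{eq:N4_N4} is solvable over $\A$. Writing $a = \vert\A\vert$ and $\m = \PrimeFact{\m}$, I would then read off the divisibility constraints the three solvability lemmas impose on $a$. The $\Network_2\!\left(p_i^{\gamma_i}, \m/p_i^{\gamma_i}\right)$ components give, via Lemma~\ref{lem:N2_solv}, that $\GCD{a}{p_i^{\gamma_i}} \ne 1$, hence $\Div{p_i}{a}$ for every $i$; so each prime of $\m$ divides $a$ and I may write $a = \left(\prod_i p_i^{e_i}\right) b$ with exponents $e_i \ge 1$ and $\GCD{b}{\m}=1$. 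The $\Network_1(q)$ components give, via Lemma~\ref{lem:N1_solv}, that $\NDiv{q}{a}$ for every prime $q < f(\m)$ with $\NDiv{q}{\m}$, so every prime factor of $b$ is at least $f(\m)$. Finally, whenever $\gamma_i > 1$ the component $\Network_3(p_i, g(\m,i))$ is present, and Lemma~\ref{lem:N3_solv} combined with $\Div{p_i}{a}$ forces $\NDiv{a}{g(\m,i)}$.

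I would then split on whether $b = 1$. If $b > 1$, its least prime factor is at least $f(\m)$, so $b \ge f(\m)$, and since $\prod_i p_i^{e_i} \ge \prod_i p_i$ this yields $a \ge f(\m)\prod_i p_i = \m$. The substantive case is $b = 1$, where $a = \prod_i p_i^{e_i}$, and here I argue by contradiction: assume $a < \m$. Dividing $\prod_i p_i^{e_i} < \prod_i p_i^{\gamma_i}$ through by $\prod_i p_i$ gives $\prod_i p_i^{e_i-1} < f(\m)$. As each factor on the left is at least $1$, this forces $p_i^{e_i-1} < f(\m)$ for every $i$, and by the minimality built into the definition \eqref{eq:N4_u} of $\mu(\m,i)$ this means $e_i \le \mu(\m,i)$ for all $i$.

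The contradiction comes from the $\Network_3$ constraint. Since $\prod_i p_i^{e_i-1} < f(\m) = \prod_i p_i^{\gamma_i-1}$, the exponents cannot all satisfy $e_i \ge \gamma_i$, so some index $i_0$ has $e_{i_0} < \gamma_{i_0}$; as $e_{i_0} \ge 1$ this gives $\gamma_{i_0} \ge 2$, so $\Network_3(p_{i_0}, g(\m,i_0))$ really is a component and $\NDiv{a}{g(\m,i_0)}$ holds. Comparing $p$-adic valuations using \eqref{eq:N4_g}, the exponent of $p_{i_0}$ in $g(\m,i_0)$ is $\gamma_{i_0}-1 \ge e_{i_0}$, so $p_{i_0}$ cannot witness the non-divisibility; hence some $j \ne i_0$ must satisfy $e_j > \mu(\m,j)$, contradicting $e_j \le \mu(\m,j)$. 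I expect this last step to be the crux: the functions $f, \mu, g$ are engineered precisely so that a deficiency $e_{i_0} < \gamma_{i_0}$ cannot occur without forcing an excess $e_j > \mu(\m,j)$ elsewhere, and the proof hinges on checking that the threshold $\mu(\m,j)$ in $g(\m,i_0)$ collides exactly with the bound $e_j \le \mu(\m,j)$ extracted in the previous paragraph. Combining the two cases gives $a \ge \m$ in all cases.
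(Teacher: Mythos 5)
Your proof is correct and takes essentially the same approach as the paper's: both reduce to the disjoint components, extract $\Div{p_i}{\vert\A\vert}$ from the $\Network_2$ components via Lemma~\ref{lem:N2_solv}, eliminate prime factors smaller than $f(\m)$ and coprime to $\m$ via the $\Network_1$ components and Lemma~\ref{lem:N1_solv}, and use the $\Network_3$ components with Lemma~\ref{lem:N3_solv} and the interplay of $f$, $\mu$, and $g$ to force $\vert\A\vert \geq \m$. Your reorganization---splitting on the coprime part $b$ and localizing the contradiction at a single deficient index $i_0$, where the paper instead bounds $\delta = \vert\A\vert/(p_1\cdots p_{\omega(\m)})$ by $f(\m)$ and runs the dichotomy (``$\alpha_i \geq \gamma_i$ or some $\alpha_j \geq \mu(\m,j)+1$'') across all components---is cosmetic rather than a different method.
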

\begin{proof}

Assume to the contrary that $\Network_4(\m)$ is solvable over an alphabet $\A$
such that $\vert\A\vert < \m$.
Then each disjoint component of $\Network_4(\m)$ must be solvable over $\A$.

Let $m$ have prime factorization $\m = \PrimeFact{\m}$.

For each $i = 1,\dots, \omega(\m)$,
by \eqref{eq:N4_N4},
network $\Network_4(\m)$ contains a copy of 
$\Network_2\left(p_i^{\gamma_i}, \, (\m / p_i^{\gamma_i}) \right)$.
Since network $\Network_2\left(p_i^{\gamma_i}, \, (\m / p_i^{\gamma_i}) \right)$ is solvable over $\A$,
then by Lemma~\ref{lem:N2_solv}, $p_i$ is not relatively prime to $\vert\A\vert$.
Since $p_i$ is prime, we have $\Div{p_i}{\vert\A\vert}$,
and thus $\Div{p_1 \cdots p_{\omega(\m)}}{\vert \A \vert}$.
Let 
$$\delta = \frac{\vert \A \vert}{p_1 \cdots p_{\omega(\m)}}.$$

If $\m = p_1 \cdots p_{\omega(\m)}$ (i.e. $\m$ is square-free),
then we contradict the assumption that $\vert \A \vert < \m$.

So we may assume $\m > p_1 \cdots p_{\omega(\m)}$,
which implies $\delta \geq 2$.
If $\delta \geq f(\m),$
then 
\begin{align*}
  \vert\A\vert &= \delta \, p_1 \dots p_{\omega(\m)} \geq f(\m) \, p_1 \dots p_{\omega(\m)} 
  = \PrimeFact{\m} = \m 
    & & \Comment{\eqref{eq:N4_f}},
\end{align*}
which again contradicts the assumption $\vert\A\vert < \m$,
so we must have $\delta < f(\m)$.

In order to write the prime factorization of $\vert\A\vert$,
let $\{q_1,\dots,q_{\rho}\}$ denote the set of 
primes which are less than $f(\m)$ and do not divide $\m$.
Each prime less than $f(\m)$ 
either divides $\m$ and is in the set $\{p_1,\dots,p_{\omega(\m)}\}$
or it does not divide $\m$ and is in the set $\{q_1,\dots,q_{\rho}\}$.
Thus $\delta$ must be a product of $q_1,\dots,q_{\rho}$ and $p_1,\dots,p_{\omega(\m)}$ terms, 
so there exist
$\alpha_1,\dots,\alpha_{\omega(\m)} \geq 1$
and $\beta_1,\dots,\beta_{\rho} \geq 0$
such that we can write $\vert \A \vert$ as
\begin{align}
  \vert \A \vert = p_1^{\alpha_1} \dots p_{\omega(\m)}^{\alpha_{\omega(\m)}} \, q_1^{\beta_1} \dots q_{\rho}^{\beta_{\rho}}.
  \label{eq:N4_solv_0}
\end{align}

For each prime $q < f(\m)$
such that $\NDiv{q}{\m}$,
by \eqref{eq:N4_N4},
network $\Network_4(\m)$ contains a copy of $\Network_1(q)$.
Since network $\Network_1(q)$ is solvable over $\A$,
then by Lemma~\ref{lem:N1_solv}, we have $\GCD{q}{\vert \A \vert} = 1$.
Thus in \eqref{eq:N4_solv_0} we have $\beta_1 = \dots = \beta_{\rho} = 0$.

For each $i = 1,\dots,\omega(\m)$ such that $\gamma_i > 1$,
by \eqref{eq:N4_N4},
network $\Network_4(\m)$ contains a copy of $\Network_3(p_i, \, g(\m,i))$.
Since network $\Network_3(p_i, \, g(\m,i) )$ is solvable over $\A$
and $\Div{p_i}{\vert \A \vert}$,
then by Lemma~\ref{lem:N3_solv},
$\vert \A \vert$ does not divide $g(m,i)$.
Expressing $\vert\A\vert$ and $g(m,i)$ as their prime factorizations yields:
\begin{align*}  
  & p_1^{\alpha_1}  \dots p_{\omega(\m)}^{\alpha_{\omega(\m)}} 
  \, \not{\hspace{-.225mm}\Big\vert} \; \;
   p_i^{\gamma_i-1} \prod_{\substack{j = 1 \\ j \ne i}}^{\omega(\m)} p_j^{\mu(\m,j)}  
    & & \Comment{\eqref{eq:N4_g}, \eqref{eq:N4_solv_0}}.
\end{align*}
This implies that for each $i\in \{1, \dots, \omega(\m)\}$
such that $\gamma_i > 1$, 
either $\alpha_i \geq \gamma_i$
or 
$\alpha_j \geq \mu(\m,j) + 1$ for some $j \ne i$.

If there exists $j \in \{1,\dots,\omega(\m)\}$
such that that $\alpha_j \geq \mu(\m,j) + 1$,
then we have
\begin{align*}
  \vert\A\vert &= p_1^{\alpha_1}  \cdots  p_{\omega(\m)}^{\alpha_{\omega(\m)}} 
        & & \Comment{\eqref{eq:N4_solv_0}}\\
      & \geq p_j^{\alpha_j-1} \left(p_1\cdots p_{\omega(\m)} \right) 
        & &  \Comment{ $\alpha_l \geq 1$} \\
      & \geq p_j^{\mu(\m,j)} \left(p_1 \cdots p_{\omega(\m)} \right)  \\
      & \geq  f(\m) \left(p_1\cdots p_{\omega(\m)} \right) = \m
        & &  \Comment{\eqref{eq:N4_f}, \eqref{eq:N4_u}},
\end{align*}
which contradicts the assumption that $\vert\A\vert < \m$.
So if each component of network $\Network_4(\m)$ is solvable over $\A$ and $\vert \A \vert < \m$, 
it must be the case that
$\alpha_i \geq \gamma_i$,
for each $i$ such that $\gamma_i > 1$.
If $\gamma_i = 1$, then $\alpha_i \geq 1 = \gamma_i$.
So we have $\alpha_i \geq \gamma_i$ for all $i$,
but this implies
\begin{align*}
\vert\A\vert &= p_1^{\alpha_1}  \cdots p_{\omega(\m)}^{\alpha_{\omega(\m)}} 
      & & \Comment{\eqref{eq:N4_solv_0}} \\
    & \geq \PrimeFact{\m} = \m,
\end{align*}
which again contradicts the assumption that $\vert \A \vert < \m$.

Thus there does not exist an alphabet $\A$ whose size is less than $\m$
such that each disjoint component of $\Network_4(\m)$ is solvable over $\A$.
\end{proof}

\begin{example}
We continue our example networks $\Network_4(6), \Network_4(27),$ and $\Network_4(100)$.
\begin{itemize}
  \item Suppose $\Network_4(6)$ is solvable over an alphabet $\A$.
  Since $\Network_2(2,3)$ is solvable over $\A$, we have $2$ divides $\vert\A\vert$.
  Similarly for $\Network_2(3,2)$, we have that $3$ divides $\vert\A\vert$.
  Since $6$ is the smallest positive integer that is divisible by $2$ and $3$,
  we have $\vert\A\vert \geq 6$.

  \item Suppose $\Network_4(27)$ is solvable over an alphabet $\A$
  whose size is less than $27$.
  Then
  \begin{itemize}
    \item $\Network_2(27,1)$ requires $\Div{3}{\vert\A\vert}$, 
  so $\vert\A\vert \in \{3,6,9,12,15,18,21,24\}$.

    \item $\Network_1(2)$, $\Network_1(5)$, and $\Network_1(7)$ require
  $\vert\A\vert$ be relatively prime to $2,$ $5$, and $7$,

  so $\vert\A\vert \not\in \{6,12,15,18,21,24\}$.

    \item $\Network_3(3,9)$ requires $\NDiv{\vert\A\vert}{9}$,
  so $\vert\A\vert \not\in \{3,9\}$.
\end{itemize}
  Therefore $\Network_4(27)$ is not solvable over any alphabet whose size is less than $27$.

  \item Suppose $\Network_4(100)$ is solvable over an alphabet $\A$
    whose size is less than $100$.
    Then
    \begin{itemize} 
    \item $\Network_2(4,25)$ and $\Network_2(25,4)$
    require $\Div{10}{\vert \A \vert}$,
    so $\vert \A \vert \in \{10,20,\dots,90\}$.

    \item $\Network_1(3)$ and $\Network_1(7)$ require $\vert\A\vert$ to be relatively prime to $3$ and $7$,
    so $\vert \A \vert \not\in \{30,60,70,90\}$.

    \item $\Network_3(2,50)$ requires $\NDiv{\vert\A\vert}{50}$,
    so $\vert \A \vert \not\in \{10,50\}$.

    \item $\Network_3(5,80)$ requires $\NDiv{\vert \A \vert}{80}$,
    so $\vert \A \vert \not\in \{10,20,40,80\}$.
    \end{itemize}
    Therefore $\Network_4(100)$ is not solvable over any alphabet whose size is less than $100$.

\end{itemize}
\label{ex:N4_solv}
\end{example}

\subsection{Linear solvability conditions of \texorpdfstring{$\Network_4(\m)$}{N4(m)}} \label{ssec:N4_lin}

The following theorems show that $\Network_4(\m)$
is linear solvable if and only if $\m$ is prime.
\begin{theorem}
  For each prime $p$, 
  network $\Network_4(p)$ is scalar linear solvable 
  over $\GF{p}$.
  \label{thm:N4_prime}
\end{theorem}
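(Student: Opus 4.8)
The plan is to show that when $\m = p$ is prime the elaborate disjoint union defining $\Network_4(p)$ collapses to a single constituent network, and then to invoke the linear solvability characterization already established for that constituent.

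First I would evaluate the construction \eqref{eq:N4_N4} at $\m = p$. Since $p$ is prime we have $\omega(p) = 1$, $p_1 = p$, and $\gamma_1 = 1$, so by \eqref{eq:N4_f} the threshold is $f(p) = p^{\gamma_1 - 1} = p^0 = 1$. Consequently the first union is empty, because there are no primes $q$ with $q < f(p) = 1$, so $\Network_4(p)$ has no $\Network_1$ components. The third union is likewise empty, since it ranges only over indices $i$ with $\gamma_i > 1$, and here the sole exponent is $\gamma_1 = 1$, so there are no $\Network_3$ components. The second union contributes exactly one term, namely $\Network_2\!\left(p_1^{\gamma_1}, \, \m/p_1^{\gamma_1}\right) = \Network_2(p, 1)$. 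Therefore $\Network_4(p) = \Network_2(p,1)$, and by Remark~\ref{rem:DJ} (applied to a disjoint union with a single constituent) $\Network_4(p)$ has a scalar, i.e.\ $(1,1)$, linear solution over $\GF{p}$ if and only if $\Network_2(p,1)$ does.

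Finally I would apply Lemma~\ref{lem:N2_lin}. The field $\GF{p}$, acting on itself, is a standard $R$-module with $R = \GF{p}$, and its characteristic satisfies $\Char{\GF{p}} = p$, so $\Div{\Char{\GF{p}}}{\m}$ holds trivially because $\m = p$. Hence Lemma~\ref{lem:N2_lin} yields that $\Network_2(p,1)$, and therefore $\Network_4(p)$, is scalar linear solvable over $\GF{p}$.

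I do not anticipate any genuine obstacle here: the argument is essentially bookkeeping on the prime factorization of $p$ together with a direct appeal to Lemma~\ref{lem:N2_lin}. The only point requiring care is the correct evaluation $f(p) = 1$, which is what forces the first and third unions to be empty; everything else is immediate. If one preferred a self-contained argument instead of quoting Lemma~\ref{lem:N2_lin}, one could exhibit the explicit linear code from the converse part of that lemma's proof with $\n = 1$, using $p = 0$ in $\GF{p}$, where the shared receiver $R_z$ recovers its demand via $\bigoplus_{i=1}^{p+1} \ee{1}{i} = z$.
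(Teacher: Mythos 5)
Your proof is correct and takes essentially the same approach as the paper: both use $f(p)=1$ and $\gamma_1=1$ to collapse the disjoint union \eqref{eq:N4_N4} to the single component $\Network_2(p,1)$, and then invoke Lemma~\ref{lem:N2_lin} with $\Char{\GF{p}} = p$ dividing $\m = p$. Your version merely spells out the bookkeeping (the empty first and third unions, Remark~\ref{rem:DJ}, and the standard-module observation) that the paper's two-sentence proof leaves implicit.
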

\begin{proof}
  If $p$ is a prime number,
  then $f(p) = 1$ and the power of $p$ is one,
  so by \eqref{eq:N4_N4}, network $\Network_4(p)$ consists solely of a copy of network $\Network_2(p,1)$.
  By Lemma~\ref{lem:N2_lin}, network $\Network_2(p,1)$
  has a scalar linear solution over every finite-field alphabet with characteristic $p$. 
\end{proof}
  
\begin{theorem}
  For each composite number $\m$,
  network $\Network_4(\m)$ is not vector linear solvable over any $R$-module.
  \label{thm:N4_R}
\end{theorem}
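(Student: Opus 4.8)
The plan is to reduce the statement about vector linear solutions over arbitrary $R$-modules to a statement about scalar linear solutions over \niceRModules{}, and then to exhibit an incompatibility among the characteristic constraints that the $\Network_2$ and $\Network_3$ constituents impose. By Lemma~\ref{lem:mod_2} followed by Lemma~\ref{lem:mod_0}, if $\Network_4(\m)$ is not scalar linear solvable over any \niceRModule{}, then it is not scalar linear solvable over any $R$-module, and hence not vector linear solvable over any $R$-module. So it suffices to show the former. I would argue by contradiction: suppose $\Network_4(\m)$ has a scalar linear solution over some \niceRModule{} $G$ with ring $R$, and set $c = \Char{R}$. Since every alphabet has at least two symbols, the underlying group of $G$ is nontrivial, so $1_R \ne 0_R$ and therefore $c \ge 2$; I would record this fact at the outset, as it is essential below.

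By Remark~\ref{rem:DJ}, each disjoint constituent of $\Network_4(\m)$ is then scalar linear solvable over the \emph{same} module $G$. Applying Lemma~\ref{lem:N2_lin} to the factor $\Network_2\!\left(p_i^{\gamma_i},\, \m/p_i^{\gamma_i}\right)$ appearing in \eqref{eq:N4_N4} gives $c \mid p_i^{\gamma_i}$ for every $i = 1,\dots,\omega(\m)$. Combined with $c \ge 2$, this forces $c$ to be a positive power of $p_i$; in particular $p_i \mid c$, for each prime divisor $p_i$ of $\m$.

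Now I would split on the structure of the composite number $\m$. If $\omega(\m) \ge 2$, the constraint forces $c$ to be simultaneously a power of two distinct primes $p_1$ and $p_2$, which is impossible for $c \ge 2$, a contradiction. If instead $\omega(\m) = 1$, write $\m = p^{\gamma}$ with $\gamma \ge 2$, so $c = p^{a}$ with $a \ge 1$. Because $\gamma_1 = \gamma > 1$, Equation~\eqref{eq:N4_N4} contributes an $\Network_3\!\left(p,\, g(\m,1)\right)$ factor, and since the product over $j \ne 1$ in \eqref{eq:N4_g} is empty, $g(\m,1) = p^{\gamma-1}$. Lemma~\ref{lem:N3_lin} then requires $\GGGCD{c}{p}{p^{\gamma-1}} = 1$; but $p$ divides each of $c$, $p$, and $p^{\gamma-1}$ (using $\gamma - 1 \ge 1$), so this gcd is divisible by $p > 1$, again a contradiction. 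In both cases the assumed solution cannot exist.

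The argument is short once the reduction is set up, so the difficulty is bookkeeping rather than any deep obstacle. The main points requiring care are: invoking the \emph{characteristic} (not the alphabet size) in the constituent lemmas; justifying $c \ge 2$ from nontriviality of $G$, which is exactly what makes the prime-power case go through; and verifying that $g(\m,1)$ collapses to $p^{\gamma-1}$ precisely when $\m$ is a prime power. It is worth noting that the $\Network_1$ constituents are irrelevant here: the entire obstruction comes from the clash between the ``$\Char{R}$ divides $p_i^{\gamma_i}$'' conditions of the $\Network_2$ blocks and, in the prime-power case, the coprimality condition of the single $\Network_3$ block.
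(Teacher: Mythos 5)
Your proposal is correct and takes essentially the same route as the paper's proof: both reduce to scalar linear solvability over \niceRModules{} via Lemmas~\ref{lem:mod_2} and \ref{lem:mod_0}, and then derive contradictory characteristic constraints from Lemma~\ref{lem:N2_lin} applied to the $\Network_2$ constituents (handling $\omega(\m)\ge 2$ by the clash of two prime powers) and from Lemma~\ref{lem:N3_lin} applied to $\Network_3\!\left(p,\,p^{\gamma-1}\right)$ in the prime-power case. The only cosmetic difference is that you justify $\Char{R}\ge 2$ from nontriviality of the alphabet, where the paper simply observes that characteristic $1$ occurs only in the trivial ring.
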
 
\begin{proof}
Let $G$ be a \niceRModule{},
and assume a scalar linear solution for $\Network_4(\m)$ exists over $G$.
Since $\Network_4(\m)$ is scalar linear solvable over $G$,
each disjoint component of $\Network_4(\m)$ is scalar linear solvable over $G$.
Suppose $\m$ is a composite number.
Then $\m$ is a product of two or more (possibly distinct) primes. 
We will separately consider the cases of prime powers
and non-power-of-prime composite numbers.

For each prime $p$ and integer $\gamma \geq 2,$
by \eqref{eq:N4_N4},
network $\Network_4(p^{\gamma})$ contains
copies of $\Network_2(p^{\gamma},1)$ 
and $\Network_3\left(p, \, p^{\gamma-1} \right)$.
Since network $\Network_2(p^{\gamma},1)$ 
is scalar linear solvable over $G$,
by Lemma~\ref{lem:N2_lin},
the characteristic of $R$ divides $p^{\gamma}$.
Since network $\Network_3\left(p, \, p^{\gamma-1} \right)$ 
is scalar linear solvable over $G$,
by Lemma~\ref{lem:N3_lin},
the characteristic of $R$ is relatively prime to $p$.
If the characteristic of $R$ both divides $p^{\gamma}$ and is relatively prime to $p$,
then the characteristic of $R$ is $1$,
which only occurs in the trivial ring (of size one).
Thus there is no \niceRModule{} over which all components 
of network $\Network_4(p^{\gamma})$ are scalar linear solvable. 

Now suppose $\omega(\m) \geq 2$. Then
$\m$ has prime factorization $\m = \PrimeFact{\m}$,
and by \eqref{eq:N4_N4},
network $\Network_4(\m)$ contains copies of
$\Network_2\left(p_1^{\gamma_1},\, (\m /p_1^{\gamma_1}) \right)$
and network $\Network_2\left(p_2^{\gamma_2},\,  (\m / p_2^{\gamma_2} ) \right)$.
Since network $\Network_2\left(p_i^{\gamma_i},\, (\m / p_i^{\gamma_i}) \right)$ 
is scalar linear solvable over $G$,
by Lemma~\ref{lem:N2_lin},
the characteristic of $R$ divides $p_i^{\gamma_i}$.
For primes $p_1 \ne p_2,$
if the characteristic of $R$ divides both $p_1^{\gamma_1}$ and $p_2^{\gamma_2}$
then the characteristic of $R$ is $1$,
which only occurs in the trivial ring.
Thus there is no \niceRModule{} over which all components 
of network $\Network_4(\m)$ are scalar linear solvable. 

If $\m$ is a composite number,
then there are no scalar linear solutions for $\Network_4(\m)$ over any \niceRModule{},
which, by Lemmas~\ref{lem:mod_2} and \ref{lem:mod_0}
implies there are
no vector linear solutions for $\Network_4(\m)$ over any $R$-module.
\end{proof}

\subsection{Capacity and linear capacity of \texorpdfstring{$\Network_4(\m)$}{N4(m)}} \label{ssec:N4_cap}
\begin{theorem}
  For each $\m \geq 2$
  network $\Network_4(\m)$
  has:
  \begin{itemize}
  \itemsep0em 
  \item[(a)] capacity equal to $1$,
  \item[(b)] linear capacity bounded away from $1$ 
  over all finite-field alphabets, if $\m$ is composite.
  \end{itemize}
  \label{thm:N4_cap}
\end{theorem}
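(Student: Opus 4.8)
The plan is to prove the two parts separately, leveraging the capacity and linear-capacity results already established for the constituent networks $\Network_1$, $\Network_2$, and $\Network_3$, together with the disjoint-union structure of $\Network_4(\m)$ recorded in Remark~\ref{rem:DJ}. The governing principle is that for a disjoint union, a $(k,n)$ solution exists over $\A$ if and only if each component has one; hence the capacity of $\Network_4(\m)$ is the infimum of the capacities of its components, and likewise its linear capacity over a fixed finite field $\F$ is the infimum of the linear capacities of its components over $\F$.

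For part (a), I would first note that by Theorem~\ref{thm:N4_m} the network $\Network_4(\m)$ is solvable over an alphabet of size $\m$, so in particular it has a $(1,1)$ solution and its capacity is at least $1$. For the upper bound, observe from \eqref{eq:N4_N4} that $\Network_4(\m)$ always contains at least one component of the form $\Network_2(p_i^{\gamma_i}, \m/p_i^{\gamma_i})$, and by Lemma~\ref{lem:N2_cap}(a) that component has capacity equal to $1$. Since the capacity of a disjoint union cannot exceed the capacity of any single component (a $(k,n)$ solution for the whole network restricts to a $(k,n)$ solution for each piece), the capacity of $\Network_4(\m)$ is at most $1$. Combining the two bounds gives capacity exactly $1$ for every $\m \ge 2$.

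For part (b), assume $\m$ is composite and fix an arbitrary finite field $\F$; its characteristic $p$ is prime. The key observation is that $\Network_4(\m)$ always contains a component whose linear capacity over $\F$ is strictly below $1$, with the relevant component depending on how $p$ relates to the prime factorization of $\m$. I would split into cases mirroring the proof of Theorem~\ref{thm:N4_R}. If $p \nmid \m$, I would exhibit a component $\Network_2(p_i^{\gamma_i}, \m/p_i^{\gamma_i})$ whose characteristic condition in Lemma~\ref{lem:N2_cap}(c) fails for $\F$ (i.e.\ $p \nmid p_i^{\gamma_i}$ for every $i$ since $p\neq p_i$), forcing that component's linear capacity below $1$. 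If $p \mid \m$, say $p = p_i$, then for a prime-power $\m = p^\gamma$ ($\gamma \ge 2$) the component $\Network_3(p, p^{\gamma-1})$ has characteristic dividing both $p$ and $p^{\gamma-1}$, so Lemma~\ref{lem:N3_cap}(c) applies and caps its linear capacity at $1 - \tfrac{1}{2p + 2p^{\gamma-1}+3} < 1$; while if $\m$ has at least two distinct prime factors, then for the other prime $p_j$ the component $\Network_2(p_j^{\gamma_j}, \m/p_j^{\gamma_j})$ satisfies $p \nmid p_j^{\gamma_j}$, so Lemma~\ref{lem:N2_cap}(c) again forces linear capacity below $1$. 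In every case the offending component's linear capacity over $\F$ is bounded above by a constant strictly less than $1$, and since the linear capacity of the disjoint union is at most that of any component, the linear capacity of $\Network_4(\m)$ over $\F$ is bounded away from $1$.

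The main obstacle I anticipate is not any single calculation but making the case analysis in part (b) genuinely uniform across all finite fields and all composite $\m$: I must verify that for \emph{every} prime $p$ at least one component of \eqref{eq:N4_N4} falls under a strict-inequality clause of Lemma~\ref{lem:N2_cap}(c) or Lemma~\ref{lem:N3_cap}(c), and I must ensure the resulting upper bound is uniform in $\F$ (the constants in those lemmas depend only on the integer parameters of the components, not on $|\F|$, which is what guarantees the bound is ``bounded away from $1$'' rather than merely ``less than $1$''). Care is also needed with the prime-power subcase, since there the separating component is an $\Network_3$ rather than an $\Network_2$, and one must confirm $\Network_3(p, p^{\gamma-1})$ indeed appears in the union, which it does precisely because $\gamma \ge 2$ triggers the third union in \eqref{eq:N4_N4}.
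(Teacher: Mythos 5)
Your proposal is correct and follows essentially the same route as the paper's proof: part (a) combines Theorem~\ref{thm:N4_m} with the fact that each disjoint component has capacity $1$, and part (b) uses the same separating components — $\Network_2(p_i^{\gamma_i}, \m/p_i^{\gamma_i})$ via Lemma~\ref{lem:N2_cap}(c) when the characteristic misses $p_i$, and $\Network_3(p,p^{\gamma-1})$ via Lemma~\ref{lem:N3_cap}(c) in the prime-power case with characteristic $p$. The only difference is cosmetic: you case-split on the field characteristic first, whereas the paper splits on prime power versus $\omega(\m)\ge 2$ first, and your observation that the bounds depend only on the integer parameters (hence are uniform over all fields of a given characteristic) is exactly the ``for fixed $\m$'' uniformity the paper relies on.
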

\begin{proof}

For each $m \geq 2,$ by Theorem~\ref{thm:N4_m}, network $\Network_4(\m)$ is solvable over an alphabet of size $\m$, 
so its capacity is at least $1$.
Each network $\Network_1,\Network_2,\text{ and }\Network_3$ has capacity equal to $1$,
and $\Network_4(\m)$ consists of disjoint copies of $\Network_1,\Network_2, \text{ and } \Network_3$,
so its capacity is at most $1$.
Thus the capacity of $\Network_4(\m)$ is equal to $1$.

For composite $\m$,
we will again separately consider the cases of prime powers
and non-power-of-prime composite numbers. 

For each prime $p$ and integer $\gamma \geq 2,$
by \eqref{eq:N4_N4},
network $\Network_4(p^{\gamma})$ contains copies of  
$\Network_2(p^{\gamma},1)$ 
and $\Network_3\left(p, \, p^{\gamma-1} \right)$.
By Lemma~\ref{lem:N2_cap},
network $\Network_2(p^{\gamma},1)$ has linear capacity 
upper bounded by 
$$1-\frac{1}{2 p^{\gamma} + 3}$$ 
for finite-field alphabets with characteristic other than $p$.
By Lemma~\ref{lem:N3_cap},
network $\Network_3\left(p, \, p^{\gamma-1} \right)$ has linear capacity
equal to 
$$1 - \frac{1}{2 p^{\gamma-1} + 2 p + 3}$$
for finite-field alphabets with characteristic $p$. 
Whether we select a finite-field alphabet with characteristic $p$
or characteristic other than $p$,
the linear capacity of $\Network_4(p^{\gamma})$ is bounded away from $1$,
for fixed $p$ and $\gamma$.

Now suppose $\omega(\m) \geq 2$.
Then $\m$ has prime factorization $\m= \PrimeFact{\m}$,
and by \eqref{eq:N4_N4},
network $\Network_4(\m)$ contains copies of  
$\Network_2\left(p_1^{\gamma_1},\, (\m / p_1^{\gamma_1}) \right)$
and $\Network_2\left(p_2^{\gamma_2},\, (\m / p_2^{\gamma_2}) \right)$.
By Lemma~\ref{lem:N2_cap},
network $\Network_2\left(p_i^{\gamma_i},\, (\m / p_i^{\gamma_i}) \right)$ has linear capacity 
upper bounded by 
$$1 - \frac{1}{2 \m  + 2 (\m / p_i^{\gamma_i}) + 1}$$ 
for finite-field alphabets with characteristic other than $p_i$.
Since $p_1 \ne p_2,$
whether we select a finite-field alphabet with characteristic $p_1,p_2,$
or neither $p_1$ nor $p_2,$
the linear capacity is bounded away from $1$,
for fixed $\m$.

Thus for any fixed composite number $\m$,
the linear capacity of network $\Network_4(\m)$ 
is bounded away from $1$
over all finite-field alphabets.
\end{proof}
Calculating the exact linear capacity of $\Network_4(\m)$ over every finite-field alphabet
is left as an open problem.

\begin{corollary}
  For each composite $\m$,
  network $\Network_4(\m)$ is not asymptotically linear solvable
  over any finite-field alphabet.
  \label{cor:N4_asymp}
\end{corollary}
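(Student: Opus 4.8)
The plan is to derive this corollary as an immediate consequence of the linear-capacity bound in Theorem~\ref{thm:N4_cap}(b), using only the definitions of asymptotic linear solvability and linear capacity set out in Section~\ref{sec:intro}. The key observation I would isolate first is that, for any network and any finite-field alphabet $\F$, being asymptotically linear solvable over $\F$ is equivalent to having linear capacity over $\F$ equal to $1$. Indeed, if for every $\epsilon > 0$ there is a $(k,n)$ linear solution over $\F$ with $k/n > 1 - \epsilon$, then the supremum defining the linear capacity is at least $1 - \epsilon$ for all $\epsilon$, hence at least $1$; and since the overall capacity of $\Network_4(\m)$ equals $1$ by Theorem~\ref{thm:N4_cap}(a), the linear capacity over $\F$ can be no larger than $1$, so it equals $1$.

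With that equivalence in place, the argument reduces to a one-line contrapositive. I would fix a composite $\m$ and suppose, toward a contradiction, that $\Network_4(\m)$ were asymptotically linear solvable over some finite field $\F$. By the preceding paragraph this would force the linear capacity of $\Network_4(\m)$ over $\F$ to be exactly $1$. But Theorem~\ref{thm:N4_cap}(b) asserts precisely that, for composite $\m$, the linear capacity of $\Network_4(\m)$ is bounded away from $1$ over every finite-field alphabet, and in particular is strictly less than $1$ over $\F$. This contradiction shows that no such $\F$ exists, which is the claim.

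There is essentially no computational obstacle here, since all of the substantive work has already been absorbed into Theorem~\ref{thm:N4_cap}. The only point requiring a moment's care is the direction of the equivalence between ``asymptotically linear solvable'' and ``linear capacity $= 1$'': one must check that a supremum value of $1$, even if it is not attained by any single $(k,n)$ code, nevertheless yields for each $\epsilon > 0$ a linear solution of rate exceeding $1 - \epsilon$, and conversely that such a family of solutions forces the supremum up to $1$. Once this bookkeeping with the definitions is dispatched, the conclusion follows uniformly for every finite-field alphabet, completing the proof.
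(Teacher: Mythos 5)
Your proposal is correct and matches the paper's own argument: the paper also deduces the corollary directly from Theorem~\ref{thm:N4_cap}, which shows the linear capacity of $\Network_4(\m)$ is bounded away from $1$ over all finite-field alphabets for composite $\m$. The only difference is that you spell out the definitional step (asymptotic linear solvability over $\F$ forces the linear capacity over $\F$ to be at least $1$), which the paper leaves implicit; this is fine and only one direction of your stated equivalence is actually needed.
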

\begin{proof}
  This follows directly from the fact
  that for any fixed composite number $\m$, 
  by Theorem~\ref{thm:N4_cap},
  the linear capacity of $\Network_4(\m)$ is bounded
  away from one over all finite-field alphabets.
\end{proof}

\subsection{Size of \texorpdfstring{$\Network_4(\m)$}{N4(m)} } \label{ssec:N4_size}
Depending on the prime divisors of $\m$,
the number of nodes in $\Network_4(\m)$
can be dominated by nodes from $\Network_1$ networks, $\Network_2$ networks, or $\Network_3$ networks.
The following theorem makes use of the functions $f(\m),$ $\mu(\m,i),$ and $g(\m,i)$
defined in \eqref{eq:N4_f}, \eqref{eq:N4_u}, \eqref{eq:N4_g}.

\begin{theorem}
  For each $\m \geq 2,$
  the number of nodes in network $\Network_4(\m)$ is asymptotically
  \begin{itemize}   
  \itemsep0em 
    \item[(a)] $\Omega(\m) $,
    \item[(b)] $O(\m)$, when $\m$ is prime,
    \item[(c)] $O\left(\frac{ \m \log{\m}}{\log{\log{\m}}} \right)$,  
    when $\m$ is square-free, 
    \item[(d)] $O\left( \m^2 / \log{\m} \right)$, when $\m$ is a prime-power,
    \item[(e)] $O\left(  \m^{\frac{\log{\m}}{\log{\log{\m}}}} \right)$,
    when $\m$ is neither square-free nor a prime-power.
  \end{itemize}
  \label{thm:N4_nodes}
\end{theorem}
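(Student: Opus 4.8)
The plan is to read the node count of $\Network_4(\m)$ directly off the disjoint union \eqref{eq:N4_N4}: by Remark~\ref{rem:DJ} the node count is additive over disjoint components, so I would sum the counts from Remarks~\ref{rem:N1_nodes}, \ref{rem:N2_nodes}, and \ref{rem:N3_nodes}. Writing $N(\cdot)$ for the number of nodes and letting $\m=\PrimeFact{\m}$, the substitutions $\Network_1(q)$, $\Network_2\!\left(p_i^{\gamma_i},\m/p_i^{\gamma_i}\right)$, and $\Network_3(p_i,g(\m,i))$ give
\[
  N(\Network_4(\m)) = \sum_{\substack{\text{prime }q<f(\m)\\ \NDiv{q}{\m}}} (4q+7) \;+\; \sum_{i=1}^{\omega(\m)} \left(4\m + \frac{9\m}{p_i^{\gamma_i}} + 2\right) \;+\; \sum_{\substack{i=1\\ \gamma_i>1}}^{\omega(\m)} \left(4p_i + 4\,g(\m,i) + 12\right).
\]
I would bound the three groups separately, since which one dominates is exactly what distinguishes cases (b)--(e).

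For the middle (type $\Network_2$) group, each summand is at most $4\m+\tfrac{9}{2}\m+2$ and there are $\omega(\m)$ of them, so this group is $O(\m\,\omega(\m))$; invoking the maximal order $\omega(\m)=O(\log\m/\log\log\m)$ bounds it by $O\!\left(\m\log\m/\log\log\m\right)$. For the first (type $\Network_1$) group, the number of primes below $f(\m)$ is $O(f(\m)/\log f(\m))$ and their sum is $O\!\left(f(\m)^2/\log f(\m)\right)$ by the prime number theorem; since $f(\m)=\m/\mathrm{rad}(\m)<\m$, this group is $O(\m^2/\log\m)$, and it is empty unless $\m$ is not square-free. These two estimates already settle (b) (where only $\Network_2(p,1)$ survives), (c) (where $f(\m)=1$, so only the $\Network_2$ group remains), and the type-$\Network_1$ contribution in (d).

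The heart of the argument is bounding $g(\m,i)$, which controls the type-$\Network_3$ group. From the minimality in \eqref{eq:N4_u} I have $p_j^{\mu(\m,j)-1}<f(\m)\le p_j^{\mu(\m,j)}$, hence $f(\m)\le p_j^{\mu(\m,j)}<p_j\,f(\m)$; combined with $p_i^{\gamma_i-1}\le f(\m)$ and $\prod_{j\neq i}p_j\le\mathrm{rad}(\m)$, \eqref{eq:N4_g} yields
\[
  g(\m,i) < f(\m)\cdot f(\m)^{\omega(\m)-1}\prod_{j\neq i}p_j \le f(\m)^{\omega(\m)}\,\mathrm{rad}(\m).
\]
Because $f(\m)\,\mathrm{rad}(\m)=\m$, taking logarithms gives $\log g(\m,i)\le \log\mathrm{rad}(\m)+\omega(\m)\log f(\m)$. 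The main obstacle is now a number-theoretic optimization: I must show this quantity is $O\!\left((\log\m)^2/\log\log\m\right)$. The key tension is that a large $\omega(\m)$ forces $\mathrm{rad}(\m)$ to be large (it exceeds the product of the first $\omega(\m)$ primes, whence $\log\mathrm{rad}(\m)\ge(1-o(1))\,\omega(\m)\log\omega(\m)$), which in turn forces $\log f(\m)=\log\m-\log\mathrm{rad}(\m)$ down; writing $\log\mathrm{rad}(\m)=\beta\log\m$ and maximizing $\beta+\omega(\m)(1-\beta)$ under these constraints, the balanced regime $\beta\approx\tfrac12$ yields exponent $O(\log\m/\log\log\m)$ (indeed with constant well below $1$), so $g(\m,i)=O\!\left(\m^{\log\m/\log\log\m}\right)$.

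Finally I would assemble the cases. In (d) the type-$\Network_3$ term is $O\!\left(g(\m,1)\right)=O(p^{\gamma-1})=O(\m)$ (since $\omega(\m)=1$ makes the product in \eqref{eq:N4_g} empty), so the $O(\m^2/\log\m)$ from the $\Network_1$ group dominates. In (e) all three groups are present, but each is $O\!\left(\m^{\log\m/\log\log\m}\right)$ --- the type-$\Network_3$ bound above absorbs both $\m^2/\log\m$ and $\m\log\m/\log\log\m$ since $\log\m/\log\log\m\to\infty$ --- giving the stated upper bound. The lower bound (a) is immediate: $\Network_4(\m)$ always contains the component $\Network_2\!\left(p_1^{\gamma_1},\m/p_1^{\gamma_1}\right)$, which by Remark~\ref{rem:N2_nodes} has at least $4\m$ nodes, so $N(\Network_4(\m))=\Omega(\m)$. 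I expect the optimization in the third paragraph to be the only nonroutine step; the remaining estimates are standard applications of the prime number theorem and the maximal order of $\omega(\m)$.
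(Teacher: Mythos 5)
Your proposal is correct and follows the paper's proof in its overall structure: sum the node counts of Remarks~\ref{rem:N1_nodes}, \ref{rem:N2_nodes}, and \ref{rem:N3_nodes} over the disjoint union \eqref{eq:N4_N4}; bound the $\Network_1$ group by the sum of primes (the paper cites $\sum_{q\le\m}q=O(\m^2/\log\m)$ rather than applying the prime number theorem below $f(\m)$, but this is the same estimate); bound the $\Network_2$ group by $O(\omega(\m)\,\m)$ together with Robin's bound $\omega(\m)=O(\log\m/\log\log\m)$; get the lower bound (a) from the $\Network_2$ components; and control $g(\m,i)$ via the minimality inequality $p_j^{\mu(\m,j)}<p_j\,f(\m)$, exactly as in the paper's \eqref{eq:N4_nodes_7}. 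The one genuine divergence is your handling of the $\Network_3$ group in part (e). The paper coarsens its bound to $\sum_i g(\m,i)<20\,\m^{\omega(\m)}$ and then invokes $\omega(\m)=O(\log\m/\log\log\m)$; read strictly, this only yields exponent $(1+o(1))\log\m/\log\log\m$, i.e.\ the claimed bound ``up to the constant in the exponent.'' You instead keep the sharper estimate $g(\m,i)\le f(\m)^{\omega(\m)}\,\mathrm{rad}(\m)$ (note the paper's bound equals $f(\m)^{\omega(\m)}\,\mathrm{rad}(\m)^2$, so yours is smaller by a factor of $\mathrm{rad}(\m)$) and optimize $\beta+\omega(\m)(1-\beta)$ subject to $\log\mathrm{rad}(\m)\ge(1-o(1))\,\omega(\m)\log\omega(\m)$; the balanced case gives exponent about $\tfrac14\log\m/\log\log\m$, so the stated $O\!\left(\m^{\log\m/\log\log\m}\right)$ holds in the literal sense. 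This refinement is sound (and, if anything, repairs a slight looseness in the paper's final step); note though that even without it, the trivial bound $\beta+\omega(\m)(1-\beta)\le 1+\omega(\m)$ recovers the paper's conclusion at the paper's level of rigor, so the optimization is a bonus rather than a necessity. Parts (a)--(d) in your write-up match the paper's argument essentially verbatim.
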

\begin{proof}

By Remark~\ref{rem:N1_nodes},
the number of nodes in $\Network_1(q)$ is $4 q+ 7$.

By Remark~\ref{rem:N2_nodes},
the number of nodes in $\Network_2(\m,\n)$ is $4 \m \n + 9 \n + 2$.

By Remark~\ref{rem:N3_nodes},
the number of nodes in $\Network_3(\m_1,\m_2)$ is $4 \m_1 + 4 \m_2 + 12$.

By the construction of $\Network_4(\m)$ given in \eqref{eq:N4_N4},
the total number of nodes in $\Network_4(\m)$ is:
\begin{align}
& \left( \sum_{\substack{\text{prime q} \\ \NDiv{q}{\m} \\  q < f(\m)}}  
    (4q + 7) \right)
  + \left( \sum_{i = 1}^{\omega(\m)} 
    (4\m   + 9 (\m /p_i^{\gamma_i}) + 2) \right)
  + \left( \sum_{\substack{i = 1 \\ \gamma_i > 1}}^{\omega(\m)} 
    (4 g(\m,i) + 4 p_i + 12)  \right)
    \label{eq:N4_nodes_1}
\end{align}
where the first, second, and third terms 
are the number of nodes from $\Network_1$, $\Network_2,$ and $\Network_3$ networks,
respectively.
In order to find upper and lower bounds  
on the total number of nodes in $\Network_4(\m)$,
we will first find upper and lower bounds 
on the number of nodes from $\Network_1,\Network_2,$ and $\Network_3$
networks within $\Network_4(\m)$.

It is known \cite[VII.27a]{Sandor-NumberTheory}
that 
\begin{align} 
  \sum_{\substack{\text{prime } q \\ q \leq \m}} q = O\left( \frac{\m^2}{\log{\m}} \right) .
    \label{eq:N4_nodes_2}
\end{align}

If $\m$ is a square-free number,
then we have $f(\m) = 1$,
so in this case, there are no nodes in $\Network_4(\m)$ from $\Network_1$ networks.
Thus for general $\m$, we have
\begin{align}
  \sum_{\substack{\text{prime q} \\ \NDiv{q}{\m} \\  q < f(\m)}} (4q + 7) 
    &\geq 0 \label{eq:N4_nodes_3}
\end{align}
and
\begin{align} 
    \sum_{\substack{\text{prime q} \\ \NDiv{q}{\m} \\  q < f(\m)}} (4q + 7) 
      & < \sum_{\substack{\text{prime } q \\ q \leq \m}} (4q +7) 
     = O\left( \frac{\m^2}{\log{\m}} \right)
     && \Comment{\eqref{eq:N4_nodes_2}}.
     \label{eq:N4_nodes_4}
\end{align}

The total number of nodes in $\Network_4(\m)$ from $\Network_2$ networks is
\begin{align}
  \sum_{i = 1}^{\omega(\m)} (4 \m  + 9 (\m/ p_i^{\gamma_i}) + 2)
    & > \sum_{i = 1}^{\omega(\m)} 4 \m = \Omega\left(\omega(\m) \, \m \right)
\label{eq:N4_nodes_5} 
\end{align}
and
\begin{align}
  \sum_{i = 1}^{\omega(\m)} (4 \m  + 9 (\m/ p_i^{\gamma_i}) + 2)
    & < \sum_{i = 1}^{\omega(\m)} (13 \m + 2) = O\left(\omega(\m) \, \m \right).
\label{eq:N4_nodes_6}
\end{align}

For each $i = 1,\dots,\omega(\m)$ we have
\begin{align}
    p_i^{\mu(\m,i)} & < p_i \, f(\m)
    & & \Comment{\eqref{eq:N4_u}}
    \label{eq:N4_nodes_7}\\
    g(\m,i) &= p_i^{\gamma_i - 1} \, \prod_{\substack{j = 1 \\ j \ne i}}^{\omega(\m)} p_j^{\mu(\m,j)}
        & & \Comment{\eqref{eq:N4_g}}  \notag \\        
      & < p_i^{\gamma_i - 1} \, \prod_{\substack{j = 1 \\ j \ne i }}^{\omega(\m)} p_j f(\m) 
        & & \Comment{\eqref{eq:N4_nodes_7}} \notag\\
      & < p_i^{\gamma_i} \, f(\m)^{\omega(\m) - 1} \, \prod_{ j = 1}^{\omega(\m)} p_j  \notag \\
       & = p_i^{\gamma_i} \, f(\m)^{\omega(\m) - 2} \, \m
        & & \Comment{\eqref{eq:N4_f}}.
        \label{eq:N4_nodes_8}
\end{align}

If $\m$ is square-free,
then $\gamma_i = 1$ for all $i$,
so in this case,
there are no nodes in $\Network_4(\m)$ from $\Network_3$ networks.
Thus for general $\m$, we have
\begin{align}
  \sum_{\substack{i = 1 \\ \gamma_i > 1}}^{\omega(\m)} 
    (4 g(\m,i) + 4 p_i + 12) & \geq 0.
    \label{eq:N4_nodes_9}
\end{align}
and
\begin{align}
 \sum_{\substack{i = 1 \\ \gamma_i > 1}}^{\omega(\m)} 
    (4 g(\m,i) + 4 p_i + 12)
       &\leq \sum_{i = 1}^{\omega(\m)} 20 g(\m,i)
          & & \Comment{\eqref{eq:N4_g}}\notag \\
       &< 20 \m \, f(\m)^{\omega(\m) - 2} \, \sum_{i = 1}^{\omega(\m)} p_i^{\gamma_i} 
        & & \Comment{\eqref{eq:N4_nodes_8}} \notag\\
       &< 20 \m \, f(\m)^{\omega(\m) - 2} \, \prod_{i = 1}^{\omega(\m)} p_i^{\gamma_i}
        & & \Comment{$a b \geq a +b$ for all $a,b \geq 2$}\notag \\
       &= 20 \m^2 \, f(\m)^{\omega(\m) - 2} \notag \\
       & < 20 \m^{\omega(\m)} = O\left(\m^{\omega(\m)} \right) 
        & & \Comment{\eqref{eq:N4_f}}.
 \label{eq:N4_nodes_10}
\end{align}

To prove part (a), consider the lower bounds of each term of \eqref{eq:N4_nodes_1}.
The total number of nodes in $\Network_4(\m)$ is lower bounded by:
\begin{align*}
0 + \Omega(\omega(\m) \, \m) + 0
  &= \Omega( \omega(\m) \, \m) = \Omega(\m)
    && \Comment{\eqref{eq:N4_nodes_1}, \eqref{eq:N4_nodes_3}, \eqref{eq:N4_nodes_5}, \eqref{eq:N4_nodes_9}},
\end{align*}
where the final equality comes from the fact 
$\omega(\m) = \Omega(1)$,
since $\omega(\m) = 1$ when $\m$ is prime.

It follows from \cite[Theorem~11]{Robin} that 
\begin{align}
  \omega(\m) &= O\left( \frac{\log{\m}}{\log{\log{\m}}} \right)
  \label{eq:N4_nodes_11}.
\end{align}

To prove parts (b)-(e),
we will consider the upper bounds on the number of nodes of each term of \eqref{eq:N4_nodes_1}.
However, each term  dominates in different cases,
depending on the prime factors of $\m$.

To prove parts (b) and (c),
consider a square-free integer $\m = p_1 \cdots p_{\omega(\m)}$.
Since $\gamma_i = 1$ for all $i$,
we have $f(\m) = 1$,
so there are neither $\Network_1$ nor $\Network_3$ components in $\Network_4(\m)$.
Thus there are $0$ nodes from $\Network_1$ and $\Network_3$ components.
Then by \eqref{eq:N4_nodes_1} and \eqref{eq:N4_nodes_6}, 
the number of nodes in $\Network_4(\m)$ is $O( \omega(\m) \, \m )$.
If $\m$ is prime, then $\omega(\m) = 1$, so we have the desired bound.
If $\m$ is not prime, then the number of nodes is $O(\omega(\m) \, \m)$,
which, along with \eqref{eq:N4_nodes_11}, yields the desired bound.

To prove part (d),
consider a prime power $\m = p^{\gamma}$,
where $\gamma \geq 2$.
We have $\omega\left(p^{\gamma}\right) = 1$,
so by \eqref{eq:N4_nodes_6},
the number of nodes from $\Network_2$ components is $O(\m)$,
and, by \eqref{eq:N4_nodes_10}, 
the number of nodes from $\Network_3$ components is $O(\m)$.
By \eqref{eq:N4_nodes_4},
the number of nodes from $\Network_1$ components is
$O(\m^2/ \log{\m}) $.
Thus the number of nodes in $\Network_4(\m)$ is $O(\m^2 / \log{\m} )$.

To prove part (e),
consider $\m$ which is neither a prime power (so $\omega(\m) \geq 2$)
nor square-free (so there are $\Network_3$ components in $\Network_4(\m)$).
The number of nodes in $\Network_4(\m)$ is
\begin{align*}
  &O\left( \frac{\m^2}{\log{\m}} \right) 
    + O\left(\omega(\m) \, \m \right) 
    + O\left( \m^{\omega(\m)} \right)
    & & \Comment{\eqref{eq:N4_nodes_1}, \eqref{eq:N4_nodes_4}, \eqref{eq:N4_nodes_6}, \eqref{eq:N4_nodes_10}} \\
  &= O\left( \m^{\omega(\m)} \right)
    & & \Comment{$\omega(\m) \geq 2$},
\end{align*}
which, along with \eqref{eq:N4_nodes_11}, yields the desired bound.
\end{proof}

\begin{example}
We continue our example networks $\Network_4(6), \Network_4(27),$ and $\Network_4(100)$.
\begin{itemize}
  \item $\Network_4(6)$ has $97$ nodes:
$53$ from $\Network_2(2,3)$
and $44$ from $\Network_2(3,2)$.

  \item $\Network_4(27)$ has $256$ nodes:
$15$ from $\Network_1(2)$, 
$27$ from $\Network_1(5)$,
$35$ from $\Network_1(7)$,
$119$ from $\Network_2(27,1)$,
and $60$ from $\Network_3(3,9)$.

  \item $\Network_4(100)$ has $1691$ nodes:
 $19$ from $\Network_1(3)$, 
 $35$ from $\Network_1(7)$, 
 $627$ from $\Network_2(4,25)$,
 $438$ from $\Network_2(25,4)$, 
 $220$ from $\Network_3(2,50)$, and 
 $352$ from $\Network_3(5,80)$.
\end{itemize}
\label{ex:N4_nodes}
\end{example}

\section{Open Questions} \label{sec:Q}

Below are some remaining open questions regarding
linear and non-linear solvability:
\begin{enumerate}
  
  \item In \cite{Dougherty-Freiling-Zeger04-Insufficiency} it was shown that
  there exists a network which is not vector linear solvable over any $R$-module
  yet is non-linear solvable over an alphabet of size $4$.
  We have shown that for each composite number $\m$,
  there exists a network which is not vector linear solvable over any $R$-module
  yet is non-linear solvable over an alphabet of size $\m$.
  Do there exist networks which are not vector linear solvable over $R$-modules
  but are non-linear solvable over some alphabet of prime size?

  \item There are examples \cite{ChenHaiBin-Characterization}, \cite{Lehman-Complexity}
  in the literature of solvable networks
  which are not solvable over any alphabet whose size is less than some $\m$.
  For each $\m \geq 2$, we have demonstrated a network which is solvable over an alphabet of size $\m$
  but is not solvable over any alphabet whose size is less than $\m$.
  For each $\m \geq 2$ does there exist a network which is solvable over alphabet $\A$
  if and only if $\vert\A\vert \geq \m$?
  Which other ``interesting'' sets $S \subset \N$
  have the property that there exists a network which is solvable over $\A$
  if and only if $\vert\A\vert \in S$?

  \item It is not currently known whether there can exist an algorithm
  which determines whether a network is solvable.
  We have demonstrated a class of solvable networks with no vector linear solutions (i.e. diabolical networks).
  Can there exist an algorithm which detects whether a network is diabolical?

\end{enumerate}

\clearpage

\makeatletter\renewcommand{\@seccntformat}[1]{}\makeatother
\appendix
\section{Appendix - Proofs of Lemmas}

\subsection{Proofs of Lemmas in Section~\ref{sec:intro}}
\ProofOflemModTwo
\ProofOflemModZero
\ProofOflemRingInvOne
\ProofOflemRingInvTwo

\clearpage

\subsection{Proofs of Lemmas in Section~\ref{sec:N0}}
\ProofOflemNZeroP
\ProofOflemNZeroLin
\ProofOflemNZeroCap

\clearpage

\subsection{Proofs of Lemmas in Section~\ref{sec:N1}}
\ProofOflemNOneSolv
\ProofOflemNOneLin
\ProofOflemMatTwo
\ProofOflemMatThree
\ProofOflemNOnecap

\clearpage

\subsection{Proofs of Lemmas in Section~\ref{sec:N2}}
\ProofOflemNTwoP
\ProofOflemNTwoNon
\ProofOflemNTwoSolv
\ProofOflemNTwoLin
\ProofOflemNTwoCap

\clearpage

\subsection{Proofs of Lemmas in Section~\ref{sec:N3}}
\ProofOflemNThreeP
\ProofOflemNThreeNon
\ProofOflemNThreeSolv
\ProofOflemNThreeLin
\ProofOflemNThreeCap

\clearpage

\renewcommand{\baselinestretch}{1.0}

\end{document}